\newtheorem{Proposition}{Proposition} 
\newtheorem{Observation}{Observation} 
\newtheorem{Definition}{Definition}
\newcommand{\be}{\begin{equation}}
\newcommand{\ee}{\end{equation}}
\newcommand{\beq}{\begin{eqnarray}}
\newcommand{\eeq}{\end{eqnarray}}
\newcommand{\ds}{\color{blue}}  
\newcommand{\blk}{\color{black}}
\begin{document}

\title{Quantum prescriptions are ontologically more distinct than they are operationally distinguishable}

\author{Anubhav Chaturvedi}
\email{anubhav.chaturvedi@research.iiit.ac.in}
\affiliation{Institute of Theoretical Physics and Astrophysics, National Quantum Information Centre, Faculty of Mathematics, Physics and Informatics, University of Gda\'{n}sk, 80-952 Gda\'{n}sk, Poland}
\affiliation{International Centre for Theory of Quantum Technologies (ICTQT), University of Gdansk, 80-308 Gda\'nsk, Poland}

\author{Debashis Saha}
\email{saha@cft.edu.pl}
\affiliation{Institute of Theoretical Physics and Astrophysics, National Quantum Information Centre, Faculty of Mathematics, Physics and Informatics, University of Gda\'{n}sk, 80-952 Gda\'{n}sk, Poland}
\affiliation{Center for Theoretical Physics, Polish Academy of Sciences, Aleja Lotnik\'{o}w 32/46, 02-668 Warsaw, Poland}

\maketitle

\begin{abstract}

Based on an intuitive generalization of the Leibniz principle of `the identity of indiscernibles', we introduce a novel ontological notion of classicality, called bounded ontological distinctness. Formulated as a principle, bounded ontological distinctness equates the distinguishability of a set of operational physical entities to the distinctness of their ontological counterparts. Employing three instances of two-dimensional quantum preparations, we demonstrate the violation of bounded ontological distinctness or excess ontological distinctness of quantum preparations, without invoking any additional assumptions. Moreover, our methodology enables the inference of tight lower bounds on the extent of excess ontological distinctness of quantum preparations. Similarly, we demonstrate excess ontological distinctness of quantum transformations, using three two-dimensional unitary transformations.
However, to demonstrate excess ontological distinctness of quantum measurements, an additional assumption such as outcome determinism or bounded ontological distinctness of preparations is required. Moreover, we show that quantum violations of other well-known ontological principles implicate quantum excess ontological distinctness. Finally, to showcase the operational vitality of excess ontological distinctness, we introduce two distinct classes of communication tasks powered by excess ontological distinctness.  
\end{abstract}
\section{Introduction}
Operational physical theories such as quantum theory serve to instruct experiments and make corresponding predictions. However, owing to their rather abstract mathematical formalism, these operational theories are often incapable (on their own) of facilitating deep insights into the nature and structure of reality. The study of ontology along with the ontological framework enables insights into the structure of reality the operational theories posit. Apart from this, the ontological framework also provides a vital ground for formal notions of classicality. These ontological notions of classicality attempt to make precise the sense in which quantum theory departs from classical physics \cite{mazurek2016experimental,spekkens2019ontological}, and are typically philosophically well-substantiated ontological principles which seek to explain certain exclusively operational phenomena by providing them an exclusively ontological basis. For instance, the ontological principle of Bell's local-causality \cite{bell1964einstein,brunner2014bell} attributes a local ontological basis to operational non-signaling correlations. Similarly, Kochen-Specker noncontextuality \cite{kochen1967problem,cabello2008experimentally,cabello2014graph} provides operational non-disturbance of measurements a noncontextual basis. Based directly on the philosophical premise of the Leibniz principle of ``Identity of Indiscernibles", Spekkens' generalized noncontextuality \cite{spekkens2005contextuality} assigns identical ontological counterparts to operationally indistinguishable preparations, measurements, and transformations.

In this work, we propose a novel notion of classicality, ``bounded ontological distinctness". The extent to which physical entities may be distinguished or discerned apart, termed as their distinguishability, plays a crucial role in physical theories, more generally in natural philosophy, as well as in our very world views \cite{bohm2004thought}.
For instance, a principle of analytic philosophy, ``the Identity of Indiscernibles", first explicitly formulated by Wilhelm Gottfried Leibniz is inherent in classical intuition. The principle states that operational indistinguishability implies ontological identity, i.e. if two objects cannot possibly be distinguished, then they are the same (identical) \cite{sep-identity-indiscernible,spekkens2019ontological}. A natural generalization of the Leibniz's principle is the notion that how well a set of objects may be in principle distinguished 
 (i.e. employing all possible measurements) reflects how distinct they actually are. The ontological principle, ``bounded ontological distinctness" relies on this generalization of the Leibniz principle and states that maximal operational distinguishability of physical entities reflects their ontological distinctness. Here, we refer to the extent to which physical entities may be distinguished whilst employing an operational physical theory as their ``distinguishability" as opposed to their ``distinctness", which refers to how ontologically (actually) distinct these entities are. In this work, we formulate the principle of bounded ontological distinctness, and demonstrate how quantum preparations, measurements, and transformations violate this principle, showcasing their ``excess ontological distinctness". 
 
\textit{Requirements for a fundamental notions of classicality:} To be of fundamental significance \cite{spekkensvideo,spekkens2019ontological}, a notion of classicality should be applicable in a wide variety of scenarios so that it covers the multitude of experimental tests posited by quantum theory. Furthermore, to reveal the inherent non-classicality of a large class of quantum systems, such a fundamental notion should impose minimal demands on quantum preparations, transformations, and measurements required to violate it. Similarly, such a notion should only impose minimal, operationally justified constraints on the ontological explanation. Additionally, a fundamental ontological notion of classicality should have as its implications the other well-known notions of classicality, so that quantum violation of the latter implicates the violation of the former \cite{PhysRevA.100.022108}. As the experimental tests of the consequences of these ontological principles allow us to extend the corresponding implications beyond the particular operational theory to Nature itself \cite{mazurek2016experimental}, such an ontological notion of classicality should be robust to experimental imperfections.  Finally, such a notion should be operationally vital so that its quantum violation yields an advantage in computation, communication or information processing tasks. 

\textit{Primary and auxiliary ontological constraints:} All of the aforementioned ontological principles obtain their primary ontological constraints by requiring ontological models to satisfy corresponding operational conditions on the ontological level, i.e. conditioned on complete access to (or fine-grained control over) the ontic state of the physical system. For instance, the operational non-signaling condition when conditioned on the ontic state of the shared physical system forms the primary ontological constraint of Bell's local-causality referred to as \textit{parameter independence}. However, the primary ontological constraints of these well-known ontological principles are (on their own) not enough to warrant consequences directly in contradiction with the predictions of quantum theory. Therefore, they invoke certain auxiliary constraints to enable a quantum violation. For instance, along with parameter independence, Bell's local-causality entails the auxiliary ontological assumption of outcome independence. The inclusion of these additional assumptions leads to dilution of the implication of corresponding quantum violations as they could be in principle attributed to the violation of the auxiliary assumptions leaving the primary assumptions intact. 

\textit{Measure zero operational condition:} One of the key issues with the experimental tests of these ontological principles
\cite{hensen2015loophole,kirchmair2009state,mazurek2016experimental},
lies in the associated operational conditions. Specifically, the well-known ontological notions of classicality, for their refutation, require associated zero-measure operational indistinguishability conditions to hold. Consequently, even the slightest of errors in the experimental composition may cause a deviation from the \text{zero measure} operational indistinguishability conditions. Besides, due to the finite size of experimental data the observed statistics almost always violate the zero measure operational conditions. This issue is often referred to as \textit{finite precision loophole}, and to address it for the aforementioned ontological principles, several schemes \cite{guhne2010compatibility,mazurek2016experimental,pusey2018robust,liang2019bounding,zhang2011asymptotically,lin2018device} have been proposed and implemented. 

In this work, we show that bounded ontological distinctness meets all of the aforementioned criteria for a fundamental notion of classicality. In particular, we demonstrate the violation of bounded ontological distinctness for preparations, and transformations, while employing three instances of two-dimensional quantum preparations and unitary transformations, respectively. Moreover, unlike the other well-known ontological principles, the primary ontological assumption of bounded ontological distinctness for preparations and transformations is enough to warrant a quantum violation on its own. This in turn leads to an unambiguous ontological implication, that quantum preparations and transformations are more ontologically distinct than they are operationally distinguishable. Moreover, we demonstrate that bounded ontological distinctness implies all other well-known notions of classicality. This provides a crucial unifying insight, namely, the quantum violations of the other ontological principles are, in essence, a demonstration of the implicate quantum excess ontological distinctness. Finally, the operational distinguishability condition accompanying bounded ontological distinctness is not a zero-measure condition, and serves an alternative approach for the resolution of the finite precision loophole. Note that, a similar approach for resolving the finite precision loophole for generalized noncontextuality was suggested by Spekkens in \cite{spekkens2005contextuality}, wherein, instead of the zero-measure operational equivalence condition, one would require a measure based operational similarity condition.

As Spekkens' generalized noncontextuality embodies the Leibniz principle \cite{spekkens2005contextuality,spekkens2019ontological}, and bounded ontological distinctness bases itself on a generalization of the Leibniz principle, bounded ontological distinctness can be viewed as an alternative (or a generalization) to Spekkens' noncontextuality. However, in contrast to generalized noncontextuality, the mathematical formulation bounded ontological distinctness requires a measure of operational distinguishability along with a corresponding measure of ontological distinctness. In this work, we have employed the maximum probability of (minimum error) discrimination as the measure of operational distinguishability and distinctness. As a consequence of this particular choice of measure, bounded ontological distinctness for preparations generalizes the notion of maximal $\psi$-epistemicity introduced in \cite{PhysRevLett.112.250403} from a pair of pure quantum preparations to an arbitrary number of pure or mixed quantum preparations. Apart from these, there are several other (relatively less related) results that deal with ontological explanation of the precise amount of operational distinguishability of preparations including \cite{harrigan2010einstein,leifer2013maximally,leifer2014quantum,PhysRevX.8.011015}. In particular, \cite{PhysRevX.8.011015} elicits the role of contextuality in quantum state discrimination.

\textit{Summary:} We begin by detailing the general formulation of operational theories and ontological models. In the section that follows, we formalize the definitions of distinguishability of sets of operational physical entities and distinctness of sets of corresponding ontological entities. While as a measure of distinguishability we use the maximal probability of distinguishing the constituent physical entities out of a uniform ensemble, as a measure of distinctness we employ the maximal probability of distinguishing the corresponding ontological entities when sampled from a uniform ensemble (conditioned on fine-grained control and access to the ontic state). 

Next, we formulate the principle of bounded ontological distinctness for sets of physical entities, which equates their distinguishability to the distinctness of corresponding ontological entities (Definitions \ref{BOD}). This principle leads to certain operational consequences. For the specific case of two pure quantum preparations, we note that bounded ontological distinctness for preparations coincides with the definition of maximal $\psi$-epistemicty based on the symmetric overlap of epistemic states \cite{PhysRevLett.112.250403}. Later, we observe that for any operational fragment of quantum theory entailing two pure preparations, there exits a maximally $\psi$-epistemic model that satisfies bounded ontological distinctness for preparations (Observation \ref{KSnogoPair}). 

Moving on, based on the distinguishability of three preparations we obtain an inequality, specifically, an upper-bound on the average pair-wise distinguishability of these preparations, valid in all operational theories that admit ontological models that adhere to bounded ontological distinctness of preparations (Proposition \ref{propositionBOD}). Subsequently, we present a set of three two dimensional quantum states which violate this inequality thereby demonstrating excess ontological distinctness of quantum preparations (FIG. \ref{PrepMeas1}). As maximal distinguishability forms an intrinsic property of the set of quantum preparations, this result presents a novel perspective, the conflict with bounded ontological distinctness lies in the relation between two intrinsic properties of the set of quantum preparations under consideration, namely, the maximum distinguishability and the maximum average pair-wise distinguishability. Moreover, this formulation allows us to infer a lower bound on the extent of excess ontological distinctness of the epistemic states underlying the set of quantum preparations. For a specific set of three two-dimensional quantum states, we observe that the Kochen-Specker ontological model saturates this lower bound (Observation \ref{KSpropositionBOD}). 

Additionally, for four preparations, the distinguishability of a pair of disjoint two-preparation mixtures yields an upper-bound on the average distinguishability of the remaining disjoint pairs of two-preparation mixtures, valid in all operational theories that admit ontological models adhering to bounded ontological distinctness of preparations along with the preservation of convexity of operational preparations on the ontic level (Proposition \ref{propostionRhoEpistemic}). This forms a robust version of the preparation noncontextual inequality featured in \cite{spekkens2009preparation}, and is a direct implementation of the aforementioned suggestion of Spekkens in \cite{spekkens2005contextuality} to employ an operational similarity condition instead of the zero-measure operational equivalence condition. Employing four two-dimensional quantum preparations, we demonstrate a quantum violation of this upper-bound (FIG. \ref{PrepMeas2}). 
Yet again, our methodology allows the inference of a tight lower bound on the extent of excess ontological distinctness of the mixed quantum preparations under consideration (Observation \ref{KSrhoepistemic}).  

Moving on, we demonstrate that unlike the case of preparations, it is not possible to violate bounded ontological distinctness of measurements in quantum theory. However, we show that if a pair of two-dimensional quantum measurements are neither completely indistinguishable nor perfectly distinguishable then outcome-deterministic ontological models violate bounded distinctness of measurements. 

Next, employing the distinguishability of a set of three transformations we obtain an operational inequality, valid in any theory which admits ontological models that adhere to bounded ontological distinctness of transformation (Proposition \ref{propositionBODT}). Subsequently, employing a set of three two-dimensional quantum unitary transformations we show that quantum theory violates this inequality, thereby demonstrating excess ontological distinctness of quantum transformations.

 In the following section, we present two distinct classes of communication tasks that benefit from excess ontological distinguishability of preparations. The first class of communication tasks constrains the distinguishability of the sender's preparations. As an example, we present a communication task and obtain a bound on the maximum success probability of this task achievable when employing preparations pertaining to operational theories that satisfy bounded ontological distinctness (Proposition \ref{propositionBOD2}). Subsequently, we provide an advantageous quantum protocol entailing three two-dimensional preparations and two binary outcome measurements, which violates this bound (FIG. \ref{PrepMeas4}). In the second class of communication tasks, we tolerate a bounded amount of leakage of information about the sender's private information. As an example, we consider the parity oblivious multiplexing \cite{spekkens2009preparation} with a bounded amount of leakage of information regarding the parity bit. Moreover, we demonstrate that in general classical models, and in particular classical $d$-level communication adheres to bounded ontological distinctness of preparations substantiating its candidature as a notion of classicality. 

In the penultimate section, we bring forth the fact that bounded ontological distinctness implies other ontological notions of classicality such as Spekkens' noncontextuality, Kochen-Specker noncontextuality and Bell's local-causality, which in turn substantiates the insight that excess ontological distinctness of quantum theory is implicate in the violations of these ontological principles. In particular, we show that generalized noncontextuality is a special case of bounded ontological distinctness. Finally, we proof the strict unidirectionality of this implication with the aid of a class of (universally) noncontextual fragment of quantum theory that allows for operational violations of the implications of bounded ontological distinctness (Observation \ref{impObservation}).

Finally, we conclude by laying out certain implications of our results, key conceptual insights, and tentative avenues for future investigation.

\section{Preliminaries}
In this section, we lay down the specifics of the experimental scenario employed in this work and revisit the general framework for operational theories and underlying ontological models.

\subsection{Prepare, transform and measure experiments} 
In prepare, transform and measure experiments, preparation of a physical system is followed by a transformation and finally, a measurement on the same. In each run of the experiment, a preparation, a transformation and a measurement are chosen and the outcomes of the measurement are recorded. The whole process is repeated several times to obtain frequency statistics. Crucially, each run of the experiment is assumed to be statistically independent. However, for the majority of this work, we shall be concerned with prepare and measure experiments, wherein the preparation of a physical system is immediately followed by a measurement on the same.\\

\subsection{Operational theories}
An operational interpretation of a physical theory serves a two-fold purpose: $(i)$ \textit{prescription:} specification of preparation procedures $P$, transformation procedures $T$ and measurement procedures $M$ and, $(ii)$ \textit{prediction} of probabilities $p(k|P,T,M)$ of obtaining an outcome $k$ given a measurement $M$ was performed on a preparation $P$ which underwent a transformation $T$. \\
For instance, quantum formalism prescribes a density matrix $\rho$ to model a preparation, a completely positive map to model a transformation and a positive-operator valued measure (POVM) $\{M_k\}$ to model a measurement. However, in this work we need only invoke unitary quantum transformations $U$. 
The probability of obtaining an outcome $k$ is given by the so called ``Born rule": $p(k|P,T,M)=\Tr{U\rho U^\dagger M_k}$. The operational predictions in prepare and measure experiments are of the form $p(k|P,M)$. Quantum predictions in such experiments are given by: $p(k|P,M)=\Tr{\rho M_k}$. 
\subsection{Ontological models} 
An ontological model seeks to explain the predictions of an operational theory, whilst assuming the existence of observer independent attributes associated with physical systems. These attributes describe the  ``real state of affairs" of a physical system \cite{spekkens2005contextuality,leifer2014quantum}. The complete specification of each such attribute is referred to as the \textit{ontic state} $\lambda$ of a physical system. The space of values that $\lambda$ may take is referred to as the \textit{ontic state space} $\Lambda$. Upon a preparation the physical system occupies a specific ontic state $\lambda\in\Lambda$. Crucially, in general, operational preparations may not allow for fine-grained operational control over the ontic state the system actually occupies i.e., an operational preparation procedure $P$, might only specify the probabilities $\mu(\lambda|P)$ of the system being in different ontic states. 
This in turn requires the ontic state space $\Lambda$ to be a measurable space, with a $\sigma$-algebra $\Sigma$ along with $\mu:\Sigma\rightarrow[0,1]$ being a $\sigma$-additive function which satisfies $\mu(\Lambda)=1$. Summarizing, in an ontological model, every preparation procedure $P$ is associated with an epistemic state (a normalized probability density over the ontic state space) $\mu(\lambda|P)$  such that $\int_{\Lambda}\mu(\lambda|P)d\lambda=1$. \\
A measurement in an ontological model simply measures the ontic state of a physical system. Consequently, ontological measurements are modelled by conditional probability distributions $\{\xi(k|\lambda)\}$ which specify the probability of obtaining the outcome $k$ given the ontic state $\lambda$. These probability distributions are measurable functions on $\Lambda$ and satisfy: \textit{positivity:} $\forall \ k,\lambda: \xi(k|\lambda) \geq 0$, \textit{completeness:} $\sum_{k}\xi(k|\lambda) =1$ and are referred to as \textit{response schemes}. However, the outcomes of an operational measurement procedure might not unveil the ontic state uniquely nor even allow for an inference of a set of ontic states within which the actual ontic state lies. In other words, the outcome of a measurement procedure might depend on the ontic state on only stochastically due to fundamental in-determinism of nature or dependence of measurement outcome on certain degrees of freedom other than the ontic state of the system being measured.
Consequently, every operational measurement procedure $M$ is associated with a response scheme $\{\xi(k|\lambda,M)\}$ which specifies the probability of obtaining the outcome $k$ given a measurement $M$ was performed on a physical system occupying the ontic state $\lambda$.
Averaging the conditional probabilities $\xi(k|\lambda,M)$ over our ignorance of the ontic state $\lambda$ yields the predictions of the ontological model for prepare and measure experiments, i.e., 
\be \label{pPM-ontic} 
p(k|P,M)=\int_{\Lambda}\mu(\lambda|P)\xi(k|\lambda,M)d\lambda . 
 \ee 
Crucially, the response schemes $\{\xi(k|\lambda,M)\}$ associated with operational measurements $M$ form a subset of the set of general response schemes $\{\xi(k|\lambda)\}$ that are constrained only by \textit{positivity} and \textit{completeness}.\\

A transformation in an ontological model simply alters the ontic state of the system. However, an operational transformation procedure may do so only stochastically. To accommodate the same, ontological transformations are modelled by transition schemes $\{\gamma(\lambda'|\lambda)\}$,  which specifies the probability of transition from the ontic-state $\lambda$ to the ontic-state $\lambda'$.  Summarizing, in a ontological model every operational transformation $T$ is associated with a transition scheme $\{\gamma(\lambda'|\lambda,T)\}$ such that $\forall \ \lambda: \int_\Lambda\gamma(\lambda'|\lambda,T)d\lambda'=1$. Finally, the predictions of the ontological model in prepare, transform and measure experiments are given by,
\be
p(k|P,T,M)=\int_{\Lambda^2}\mu(\lambda|P)\gamma(\lambda'|\lambda,T)\xi(k|\lambda',M) d\lambda d\lambda'.
\ee 

\section{Bounded ontological distinctness and quantum violation}
The ontological principle, ``bounded ontological distinctness" is based on the natural generalization of the Leibniz's principle of ``identity of indiscernibles" which states that maximal operational distinguishability reflects ontological distinctness. In this section, we invoke operational measures of distinguishability for sets of preparations, measurements and transformation and ontological measures of distinctness for sets of epistemic states, response schemes and transition schemes. Subsequently, we mathematically substantiate the principle of bounded ontological distinctness for the respective physical entities, derive certain consequences and report on quantum violations of these consequences. 
\subsection{Preparations}
\begin{figure}     
    \centering
    \includegraphics[scale=0.4]{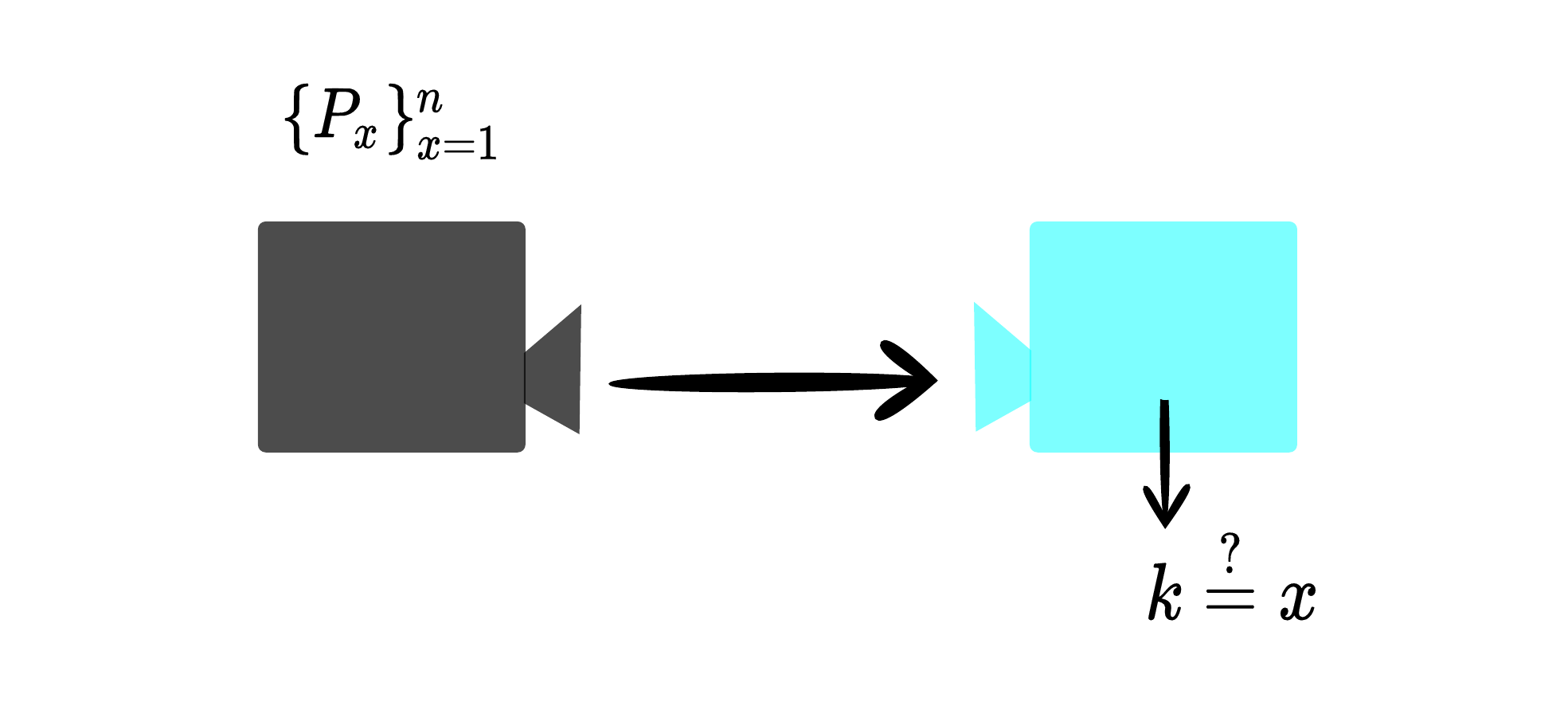}
    \caption{\label{bodp} This figure denotes the set-up for distinguishing a given set of preparations $\{P_x\}^n_{x=1}$ (dark black device) out of a uniform ensemble. In each run of the experiment, a constituent preparation $P_x$ is prepared, followed by a $n$-outcome measurement producing an outcome $k$. The (single-shot) distinguishing probability is simply the probability of producing an output $k=x$ as a result of a $n$-outcome measurement. This probability is maximized over all such measurements (light cyan device) yielding the distinguishability $s_\mathcal{O}$ of the constituent preparations. As a consequence of this maximization, $s_\mathcal{O}$ is an intrinsic property of this set of preparations and forms the operational condition accompanying bounded ontological distinctness of preparations. }
\end{figure}
First, we consider sets of preparations. We characterize these sets based on how well the constituent preparations can be operationally distinguished (FIG. \ref{bodp}). 
\begin{Definition} \textit{$p$-distinguishable preparations:}
In an operational theory, a set of preparations $\mathcal{P}\equiv \{P_x\}^n_{x=1}$ is termed \textit{$p$-distinguishable} if the constituent preparations can be distinguished when sampled from a uniform ensemble with at-most $p$ probability, i.e.,
\beq \label{pDistinguish}
s_{\mathcal{O}}= \max_{M}\Bigg\{\frac{1}{n} \sum_{x}p(k=x|P_x,M)\Bigg\}= p,
\eeq
 where $s_{\mathcal{O}}$ denotes the maximum operational probability of distinguishing these preparations when sampled from a uniform ensemble and the maximization is over the set of all possible $n$-outcome  measurements in a given operational theory, $M$ is an instance of such a measurement, and $k$ is an outcome of the measurement $M$.
\end{Definition}
 Observe that the maximum probability of distinguishing the preparations $s_{\mathcal{O}}$, is maximized over all possible measurements available in an operational theory. This maximization relieves $s_{\mathcal{O}}$ of its dependence on measurements, deeming it to be a suitable characterizing feature for sets of preparations. In quantum theory, the maximum probability of distinguishing for a set of quantum states $\mathcal{P}_Q$ out of a uniform ensemble has the expression,  
\begin{align} \label{pQuantumDistinguish}
    s_Q = \max_{\{M_k\}}\Bigg\{\frac{1}{n} \sum_{x}\Tr{\rho_xM_{k=x}}\Bigg\},
\end{align}
where the maximization is over the set of all possible $n$-outcome POVM $\{M_k\}$. For a specific set of quantum states $\mathcal{P}_Q$ this maximization can be cast as an efficient semi-definite program which essentially yields the maximum distinguishing probability $s_Q$, along with the optimal POVM $\{M_k\}$ \cite{wengang2008minimum,boyd2004convex}. \\ 
Each set of operational preparations $\mathcal{P}$ is associated with a set of epistemic states $\mathcal{P}_\Lambda$. We characterize these sets based on how distinct the constituent epistemic states actually are. 
\begin{Definition} \textit{$p$-distinct epistemic states:}
In an ontological model, a set of epistemic states $\mathcal{P}_\Lambda\equiv \{\mu(\lambda|P_x)\}^n_{x=1}$ is termed $p$-distinct if the constituent epistemic states can be ontologically distinguished (upon having access to the ontic state $\lambda$) when sampled from a uniform ensemble with at-most $p$ probability, i.e.,
\be \label{pLambdaDistinguish}
s_{\Lambda} = \max_{\{\xi(k|\lambda)\}}\Bigg\{\frac{1}{n} \sum_{x}\int_{\Lambda}\mu(\lambda|P_x)\xi(k=x|\lambda)d\lambda \Bigg\} =p ,
\ee
where $s_{\Lambda}$ denotes the maximum probability of distinguishing these epistemic states when sampled from a uniform ensemble given the ontic state $\lambda$, $\Lambda$ represents the ontic state space, $\mu(\lambda|P_x)$ is the epistemic state underlying the preparation $P_x$ and the maximization is over all valid response schemes $\{\xi(k|\lambda)\}$ which satisfy \textit{positivity} and \textit{completeness}.
\end{Definition}
Observe that the set of all possible $n$-outcome response schemes forms a convex polytope for each $\lambda$, the extremal points of this polytope are deterministic response schemes, i.e., each extremal response scheme is of the form: for each $\lambda$, $\xi(k|\lambda)=0$ except for specific $k=k_\lambda$ for which $\xi(k_\lambda|\lambda)=1$. In light of this observation, we can readily solve the above maximization by finding out the optimal extremal response scheme. Clearly, for each $\lambda$, the optimal response would be to output the index $x$ of the preparation which assigns the largest probability to $\lambda$. This in-turn leads us to the following succinct expression for $s_\Lambda$,
\beq \label{pLambdaDistinguishSuccinct}
s_{\Lambda} = \frac{1}{n}\int_\Lambda \max_{x}\bigg\{ \mu(\lambda|P_x)\bigg\}d\lambda.
\eeq
This expression further substantiates the fact that the maximization over response schemes relieves $s_\Lambda$ from its dependence on response schemes, deeming it a suitable characterizing feature of the set of epistemic states under consideration and a measure of the actual distinctness of its constituents. \\
If an ontological model explains the predictions of an operational theory then for a set of preparations $\mathcal{P}$ one may readily re-express  the maximum probability of operationally distinguishing these preparations using \eqref{pPM-ontic} when sampled from a uniform ensemble as, 
\be \nonumber
s_{\mathcal{O}} = \max_{\{\xi(k|\lambda,M)\}}\left\{\frac{1}{n} \sum_{x}\int_{\Lambda}\mu(\lambda|P_x)\xi(k=x|\lambda,M)d\lambda \right\},
\ee
where the maximization is over only the response schemes $\{\xi(k|\lambda,M)\}$ associated with operational measurements $M$. 
Observe that as the operational measurements may not reveal the ontic state $\lambda$ and, having access to $\lambda$ may only enhance the ability to distinguish the preparations under consideration. Therefore, in general we have the inequality $s_\mathcal{O}\leqslant s_\Lambda$. The principle of bounded ontological distinctness for preparations requires the operational distinguishability of preparations to be completely explained by the ontological distinctness of the underlying epistemic states, and can be formulated as follows,

\begin{Definition}
Bounded ontological distinctness for preparations ($BOD_P$): The set of epistemic states $\mathcal{P}_\Lambda \equiv \{\mu(\lambda|P_x)\}^n_{x=1}$ underlying a set of $p$-distinguishable preparations $\mathcal{P}\equiv\{P_x\}^n_{x=1}$ is $p$-distinct, i.e., $s_{\Lambda} = s_{\mathcal{O}}=p$, yielding the ontological constraint,
\beq \label{BOD}
 \frac{1}{n}\int_\Lambda \max_{x }\bigg\{\mu(\lambda|P_x)\bigg\}d\lambda=p.
\eeq
\end{Definition}
When formulated in this way, $BOD_P$ serves as a criterion for characterization of ontological models of a given operational theory. It is useful, at this point, to define a criterion for characterization of operational theories, namely, an operational theory or a fragment thereof is said to satisfy $BOD_P$, if there exists an ontological model which satisfies $BOD_P$ for all sets of prescribed operational preparations. Conversely, an operational theory or a fragment thereof is said to violate $BOD_P$ if there exists no ontological model which satisfies $BOD_P$ for all sets of prescribed operational preparations. \\  
$BOD_P$ serves to connect an exclusively operational phenomenon (distinguishability of operational preparations) to an exclusively ontological property (distinctness of epistemic states), consequently leading to interesting implications. \\
For a pair of $p$-distinguishable preparations $\{P_1,P_2\}$, $BOD_P$ requires the underlying epistemic states $\{\mu(\lambda|P_1),\mu(\lambda|P_2)\}$ to be $p$-distinct, i.e.,
\beq \nonumber
& s_\Lambda & = \frac{1}{2}\int_\Lambda \max_{x \in \{1,2\} }\bigg\{\mu(\lambda|P_x)\bigg\}d\lambda \\ \nonumber
& &=1-\frac{1}{2}\int_{\Lambda} \min_{x \in \{1,2\} }\bigg\{\mu(\lambda|P_x)\bigg\}d\lambda \\ \nonumber
& &=s_{\mathcal{O}}=p,
\eeq
where the second equality follows from the fact that $\forall \ (a\geqslant0,b\geqslant0): \ \max\{a,b\} = a+b-\min\{a,b\}$. 
In particular, the term $(1-s_\Lambda)=\frac{1}{2}\int_{\Lambda} \min_{x \in \{1,2\} }\{\mu(\lambda|P_x)\}d\lambda$ is referred to as the (symmetric) overlap of the pair of epistemic states. Furthermore, for any pair of pure quantum preparations $\{P_1,P_2\}$ corresponding to the pure states $\{\ket{\psi_1},\ket{\psi_2}\}$, the maximum probability of distinguishing them from a uniform ensemble has the expression $s_{Q} = 1-\frac{1}{2}(1- \sqrt{1-|\langle \psi_1|\psi_2 \rangle|^2})$. Notice that, the ontological notion of (symmetric) maximal $\psi$-epistemicity \cite{PhysRevLett.112.250403} requires the (symmetric) overlap of the underlying epistemic states $(1-s_{\Lambda})$ to be large enough to explain the indistinguishability $(1-s_{Q})$ of any given pair of pure quantum preparations.
Consequently, for the special case of pairs of pure quantum preparations $BOD_P$ is equivalent to the notion of (symmetric) maximal $\psi$-epistemicity, i.e. any ontological model which adheres to $BOD_P$ for a given pair of pure quantum preparations must be (symmetric) maximally $\psi$-epistemic and vice-versa.  However, as we shall demonstrate in Observation \ref{KSnogoPair}, there exists a ontological model which satisfies $BOD_P$ (or equivalently satisfies maximal $\psi$-epistemicity) for any prepare and measure fragment of quantum theory entailing just two pure quantum preparations along with their convex mixtures. \\

\subsubsection{Bounded ontological distinctness for three preparations.}
Moving on, we explore another implication of $BOD_P$ via the following proposition, wherein we employ $BOD_P$ for a set of three preparations to arrive at an upper bound on operational average pairwise distinguishability of these preparations, which is then violated by a prepare and measure fragment of quantum theory entailing a set of three pure two-dimensional preparations.
\begin{Proposition} \label{propositionBOD} If an operational theory admits ontological models adhering to $BOD_P$, then a set of three $p$-distinguishable preparations $\mathcal{P}\equiv\{P_1,P_2,P_3\}$, are pairwise distinguishable with the average probability of success being at-most $\frac{1+p}{2}$ , i.e.,
\beq \label{BODineq}
\frac{1}{3}\bigg(s^{1,2}_{\mathcal{O}}+s^{2,3}_{\mathcal{O}}+s^{3,1}_{\mathcal{O}}\bigg)\leqslant \frac{1+p}{2},
\eeq
where $s^{i,j}_{\mathcal{O}}$ denotes the maximum probability of distinguishing the preparations $\{P_i,P_j\}$ out of a uniform ensemble, i.e., $s^{i,j}_\mathcal{O}=\max_{M}\{\frac{1}{2} \sum_{x\in \{i,j\}}p(k=x|P_x,M)\}$.   
\end{Proposition}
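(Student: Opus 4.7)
The plan is to reduce the claim to a pointwise inequality on the epistemic densities and then integrate, using $BOD_P$ on the triple together with the normalisation of each epistemic state. The key observation is that, since access to the ontic state can only help to tell preparations apart, one has $s^{i,j}_{\mathcal{O}} \leqslant s^{i,j}_{\Lambda}$ for every pair $\{P_i,P_j\}$; hence it suffices to prove the stronger ontological inequality
\[
\frac{1}{3}\bigl(s^{1,2}_{\Lambda}+s^{2,3}_{\Lambda}+s^{3,1}_{\Lambda}\bigr)\leqslant \frac{1+p}{2},
\]
after which the proposition follows by the monotonicity just noted.

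Using the succinct expression \eqref{pLambdaDistinguishSuccinct} adapted to pairs, I write each $s^{i,j}_{\Lambda}$ as $\tfrac{1}{2}\int_{\Lambda}\max_{x\in\{i,j\}}\mu(\lambda|P_x)\,d\lambda$. Summing the three pair terms then produces the integrand $\sum_{i<j}\max_{x\in\{i,j\}}\mu(\lambda|P_x)$, which I would like to control pointwise by $\max_{x\in\{1,2,3\}}\mu(\lambda|P_x)$ plus an easily-integrable remainder. Writing $a,b,c$ for $\mu(\lambda|P_1),\mu(\lambda|P_2),\mu(\lambda|P_3)$ and assuming without loss of generality $a\geqslant b\geqslant c\geqslant 0$, the left-hand side evaluates to $2a+b$ while the right-hand side $\max\{a,b,c\}+(a+b+c)=2a+b+c$; thus
\[
\max\{a,b\}+\max\{b,c\}+\max\{a,c\}\;\leqslant\;\max\{a,b,c\}+a+b+c.
\]
This elementary inequality is the crux and is dispatched in a single line by the ordering argument above.

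Integrating this pointwise inequality over $\Lambda$, the first term on the right yields $3p$ by hypothesis $BOD_P$ applied to the triple $\{P_1,P_2,P_3\}$ with distinguishability $p$, while each of the three density integrals contributes $1$ by normalisation of the epistemic states, giving $3p+3$. Dividing by $2$ produces $\sum_{i<j}s^{i,j}_{\Lambda}\leqslant \tfrac{3(1+p)}{2}$, and dividing further by $3$ gives the claimed bound. Chaining with $s^{i,j}_{\mathcal{O}}\leqslant s^{i,j}_{\Lambda}$ completes the argument.

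The only non-routine step is the pointwise max inequality, and even that is immediate once the three values are ordered; the remaining difficulty is purely bookkeeping — one must be careful that $BOD_P$ is being applied with the uniform distribution over three preparations (so that $\int_{\Lambda}\max_x\mu(\lambda|P_x)\,d\lambda=3p$, not $p$) and that the factor $\tfrac{1}{2}$ from the pairwise definition of $s^{i,j}_{\Lambda}$ is not lost when passing to the sum. No auxiliary assumption such as convexity preservation is needed here, which is what makes this the "three-preparation" analogue suitable for later quantum violation.
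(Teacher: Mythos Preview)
Your proof is correct and follows essentially the same approach as the paper's: bound $s^{i,j}_{\mathcal{O}}$ by $s^{i,j}_{\Lambda}$, establish a pointwise inequality on the epistemic densities, then integrate using normalisation and $BOD_P$. The only cosmetic difference is that the paper first converts each $\max$ to a $\min$ via $\max\{a,b\}=a+b-\min\{a,b\}$ and then invokes the equivalent identity $\min\{a,b\}+\min\{b,c\}+\min\{c,a\}=\min\{a+b,b+c,c+a\}+\min\{a,b,c\}$, whereas you work directly with the $\max$'s --- your route is slightly more streamlined but substantively identical.
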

\begin{proof} 
We proceed in two steps, $(i)$ we obtain an upper bound on the maximal average pairwise distinctness for a set three epistemic states in terms of maximal ontological distinctness and, $(ii)$ we employ $BOD_P$ to port this ontological relation to distinguishability of operational preparations.
Recall that, for a set of three epistemic states $\mathcal{P}_\Lambda\equiv\{\mu(\lambda|P_1),\mu(\lambda|P_2),\mu(\lambda|P_3)\}$, the maximal probability of distinguishing them when sampled from a uniform ensemble, upon having access to the ontic state $\lambda$ has the expression,
\beq \label{pLambdaDistinguish3}\nonumber
&s_{\Lambda} & =  \frac{1}{3}\int_\Lambda \max_{x }\bigg\{\mu(\lambda|P_x)\bigg\}d\lambda \\  \nonumber
& & = 1-\frac{1}{3}\int_\Lambda \min\bigg\{\mu(\lambda|P_1)+\mu(\lambda|P_2),\mu(\lambda|P_2)+\mu(\lambda|P_3), \\
& & \hspace{75pt} \mu(\lambda|P_3)+\mu(\lambda|P_1)\bigg\}
d\lambda,
\eeq
where for the second equality we employed the fact that $\forall \ (a\geqslant0,b\geqslant0,c\geqslant0): \ \max\{a,b,c\} = a+b+c-\min\{a+b,b+c,c+a\}$ along with  the property $\forall x\in\{1,2,3\}: \  \int_\Lambda\mu(\lambda|P_x)d\lambda = 1$. Similarly, for the pair of epistemic states $\{\mu(\lambda|P_1),\mu(\lambda|P_2)\}$, the maximal probability of distinguishing them when sampled from a uniform ensemble, upon having access to the ontic state $\lambda$ can be expressed as follows,
\beq \label{pLambdaDistinguish2} \nonumber
&s^{1,2}_{\Lambda} & =  \frac{1}{2}\int_\Lambda \max\bigg\{\mu(\lambda|P_1),\{\mu(\lambda|P_2)\bigg\}d\lambda \\ 
& & = 1-\frac{1}{2}\int_\Lambda \min\bigg\{\mu(\lambda|P_1),\mu(\lambda|P_2)\bigg\}
d\lambda.
\eeq
Similar expressions can be obtained for $s^{2,3}_\Lambda$ and $s^{3,1}_\Lambda$, so as to arrive at the expression for maximal average pairwise distinguishing probability for the three epistemic states under consideration,
\begin{widetext}
\begin{IEEEeqnarray}{rCl} \label{preBODineq1} \nonumber
& \frac{1}{3}\bigg(s^{1,2}_{\Lambda}+s^{2,3}_{\Lambda}+s^{3,1}_{\Lambda}\bigg)& = 1-\frac{1}{6}\int_\Lambda  \Bigg( \underbrace{ \min\bigg\{\mu(\lambda|P_1),\mu(\lambda|P_2)\bigg\}+\min\bigg\{\mu(\lambda|P_2),\mu(\lambda|P_3)\bigg\}+\min\bigg\{\mu(\lambda|P_3),\mu(\lambda|P_1)\bigg\} }_{ {\textstyle
  \geqslant \min\bigg\{\mu(\lambda|P_1)+\mu(\lambda|P_2),\mu(\lambda|P_2)+\mu(\lambda|P_3),\mu(\lambda|P_3)+\mu(\lambda|P_1)\bigg\} } } \Bigg) d\lambda \\ 
& & \leqslant \frac{1+s_\Lambda}{2},
\end{IEEEeqnarray}
\end{widetext}

where the inequality follows from the fact that $\forall \ (a\geqslant0,b\geqslant0,c\geqslant0): \ \min\{a,b\}+\min\{b,c\}+\min\{c,a\}=\min\{a+b,b+c,c+a\}+\min\{a,b,c\}  $ and \eqref{pLambdaDistinguish3}. Consequently, the inequality \eqref{preBODineq1} is saturated when $\forall \lambda : \ \min\{\mu(\lambda|P_1),\mu(\lambda|P_2),\mu(\lambda|P_3)\}=0$. As in general $s^{i,j}_{\mathcal{O}}\leqslant s^{i,j}_{\Lambda}$, we have,
\begin{align} \label{preBODineq2}
\nonumber
\frac{1}{3}\bigg(s^{1,2}_{\mathcal{O}}+s^{2,3}_{\mathcal{O}}+s^{3,1}_{\mathcal{O}}\bigg)
& \leqslant \frac{1}{3}\bigg(s^{1,2}_{\Lambda}+s^{2,3}_{\Lambda}+s^{3,1}_{\Lambda}\bigg) \\
& \leqslant \frac{1+s_\Lambda}{2}
\end{align}
Now, if an operational theory admits ontological models that adhere to bounded onotological distinguishability, every set of $p$-distinguishable preparations is associated with a set of $p$-distinct epistemic states i.e. $s_\Lambda=s_{\mathcal{O}}=p$. Inserting this into \eqref{preBODineq2} yields the desired thesis.
\end{proof}

\subsubsection{Excess ontological distinctness of three quantum preparations} \label{markSection1}
 We are now prepared to demonstrate the quantum violation of $BOD_P$. Consider, three qubit preparations
 \be \label{rho123}
 \rho_i=\frac{\mathbb{I}+\vec{n}_i\cdot \vec{\sigma}}{2},  \quad i=1,2,3, \ee
 where $\vec{\sigma}=[\sigma_x,\sigma_y,\sigma_z]$ is a vector of the Pauli matrices, $\vec{n}_1=[1,0,0]^T,\vec{n}_2=[\cos{\frac{2\pi}{3}},\sin{\frac{2\pi}{3}},0]^T$ and $\vec{n}_3=[\cos{\frac{4\pi}{3}},\sin{\frac{4\pi}{3}},0]^T$ are Bloch vectors. 
Observe that for any three two dimensional quantum preparation the maximum probability of distinguishing them when sampled from a uniform ensemble is upper bounded as $s_Q=\max_{\{M_k\}}\{\frac{1}{3}\sum_{x \in \{1,2,3\}}\Tr{\rho_xM_{k=x}}\} \leqslant \frac{1}{3} \sum_x  \Tr\{ M_x\} = \frac{2}{3}\approx 0.667$ (FIG. \ref{PrepMeas1}). A straightforward semi-definite program yields the optimal measurement $\{M_x=\frac{2\rho_x}{3}\}$ which saturates this upper bound.  This implies that these states form a set of $\frac{2}{3}\approx 0.667$-distinguishable quantum preparations. It follows from Proposition \ref{propositionBOD} that if an operational theory satisfies $BOD_P$, then these preparations are pairwise distinguishable with at-most average probability $\frac{5}{6}\approx 0.833$, i.e., $\frac{1}{3}(s^{1,2}_{\mathcal{O}}+s^{2,3}_{\mathcal{O}}+s^{3,1}_{\mathcal{O}})\leqslant \frac{5}{6}\approx 0.833 $. \\
However, the maximum probability of distinguishing the pair of quantum states $\rho_1,\rho_2$ out of a uniform ensemble has the following succinct expression,
\begin{align} \label{pQuantumDistinguish2}
    s^{1,2}_Q =
    \frac{1}{2}\Bigg(1+T(\rho_1,\rho_2)\Bigg)=\frac{1}{2}\Bigg(1+\frac{\sqrt{3}}{2}\Bigg)\approx 0.933,
\end{align}
where $T(\rho_1,\rho_2)$ is the trace distance between the density matrices $\{\rho_1,\rho_2\}$, and $s^{1,2}_Q$ is the maximum probability for distinguishing quantum states $\{\rho_1,\rho_2\}$ out of a uniform ensemble. As the states under consideration are symmetrically distributed, the maximum probability of distinguishing $\{\rho_2,\rho_3\}$ and $\{\rho_3,\rho_1\}$ remains unchanged i.e. $s^{2,3}_Q=s^{3,1}_Q=s^{1,2}_Q=\frac{1}{2}(1+ \frac{\sqrt{3}}{2})\approx 0.933$. This in turn leads us to,
\begin{align} \label{BODviolation}
 \frac{1}{3}\bigg(s^{1,2}_{Q}+s^{2,3}_{Q}+s^{3,1}_{Q}\bigg) = \frac{1}{2}\Bigg(1+ \frac{\sqrt{3}}{2}\Bigg)\approx 0.933,  
\end{align}
which is a clear violation of Proposition \ref{propositionBOD} and consequently of $BOD_P$. This implies that any ontological model that seeks to explain the predictions of quantum theory must feature \textit{excess ontological distinctness}. Specifically, in order to explain the maximal average probability for pairwise distinguishing the aforementioned states the set of epistemic states underlying this set of $\frac{2}{3}\approx 0.667$-distinguishable qubits must be at-least $\frac{\sqrt{3}}{2} \approx 0.866$-distinct.
\begin{figure}     
    \centering
    \includegraphics[width=0.5\textwidth]{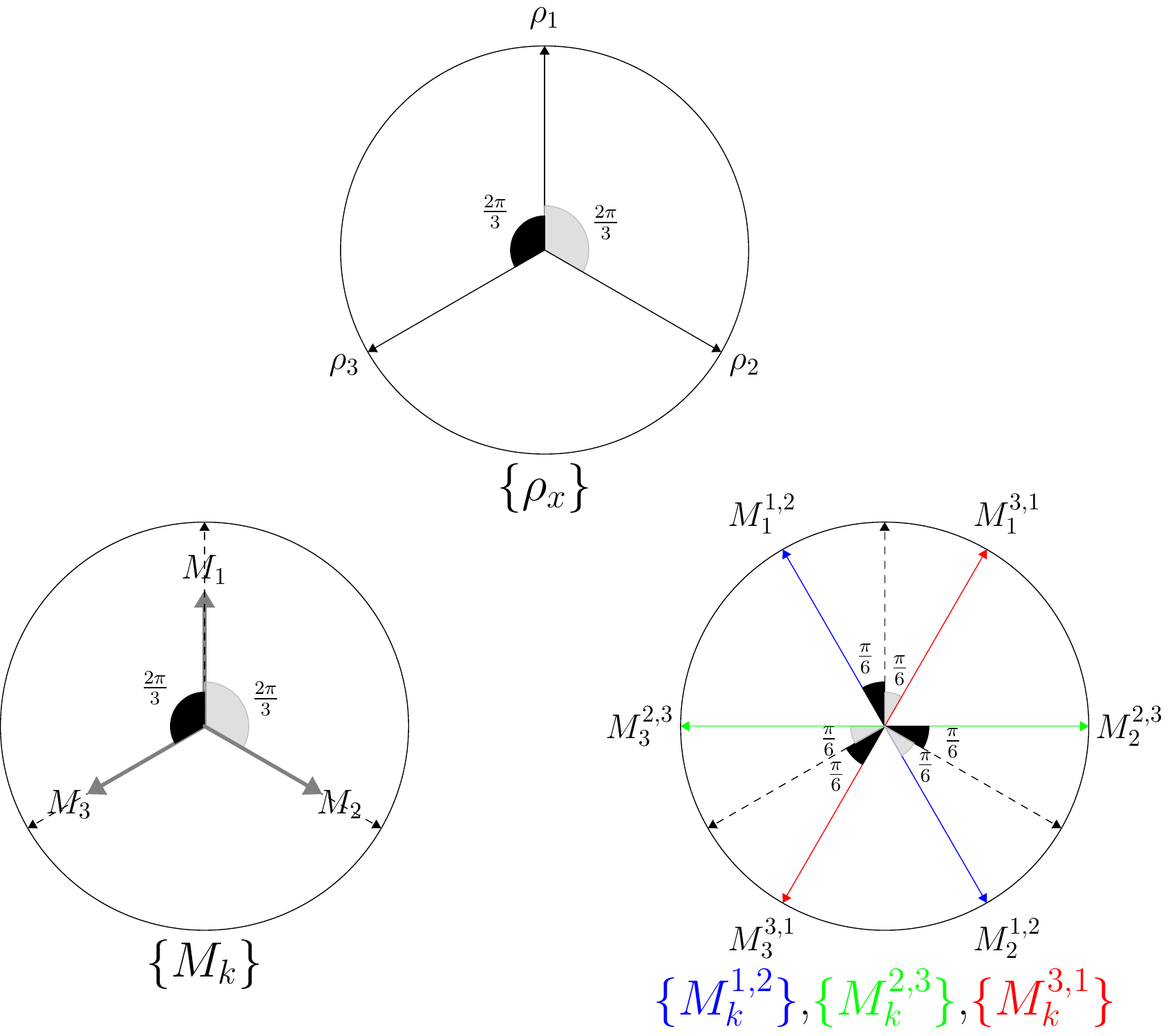}
    \caption{\label{PrepMeas1} This figure depicts the Bloch vectors in the x-y plane (x axis being vertical and y axis horizontal) of the Bloch sphere corresponding to the set of states $\{\rho_x=\frac{\mathbb{I}+\vec{n}_x\cdot \vec{\sigma}}{2}\}_{x\in\{1,2,3\}}$ (top) where $\vec{n}_1=[1,0,0]^T,\vec{n}_2=[\cos{\frac{2\pi}{3}},\sin{\frac{2\pi}{3}},0]^T$ and $\vec{n}_3=[\cos{\frac{4\pi}{3}},\sin{\frac{4\pi}{3}},0]^T$, the POVM $\{M_k\}_{k\in \{1,2,3\}}$ (bottom left) which optimally distinguishes these states such that $s_{Q}=\frac{2}{3}\approx 0.667$, and the projective measurements $\{M^{1,2}_k\}_{k\in\{1,2\}},\{M^{2,3}_k\}_{k\in\{2,3\}},\{M^{3,1}_k\}_{k\in\{3,1\}}$ (bottom right) which maximize the respective pair-wise distinguishability such that $s^{1,2}_Q=s^{1,2}_Q=s^{1,2}_Q=\frac{1}{2}(1+\frac{\sqrt{3}}{2})\approx 0.933$, leading to a violation of Proposition \ref{propositionBOD}. Note that the POVM elements $M_x=  \frac{\mathbb{I}+\vec{n}_x\cdot \vec{\sigma}}{3}$ (bottom left) do not correspond to a vector in the Bloch sphere, instead we have employed a common heuristic depiction scheme entailing short thick gray arrows to denote the POVM elements, emphasize their directions and normalization \cite{kurzynski2006graphical}.}
\end{figure}
Furthermore, we demonstrate extensive violation of $BOD_P$ by randomly sampling triplets of pure qubits and pure qutrits. Remarkably, $72\%$ of randomly picked triplets of pure qubits and $58\%$ of randomly picked triplets of pure qutrits (based on a uniform distribution on the unit hypersphere) exhibit excess ontological distinctness  (FIG. \ref{qubitQutritRandom}). Furthermore, over $46\%$ of randomly picked triplets of mixed qubit states and over $29\%$ of randomly picked triplets of mixed qutrit states (based on a uniform distribution on the unit hypersphere) exhibit excess ontological distinctness.  \\

\subsubsection{Examples of quantum ontological models}
While we have established the fact that any ontological model of quantum theory must exhibit excess ontological distinctness, now we discuss three specific examples \cite{harrigan2010einstein}: $(i)$ the Beltrametti-Bugajski ontological model posits that the different possible ontic-states are simply the different possible quantum states \cite{beltrametti1995classical}. As this model takes the quantum state
alone to be a complete description of reality, it is $\psi$-complete. As in $\psi$-complete models each pure quantum preparation is associated with a unique ontic state, all sets of pure non-identical quantum preparations have $s_{\Lambda}=1$. $(ii)$ Similarly, the Bell-Mermin ontological model for a two dimensional Hilbert space is a $\psi$-ontic model wherein every ontic state is associated with only one pure quantum preparation \cite{bell1966problem,mermin1993hidden}. Consequently, in $\psi$-onitc models all sets of pure non-identical quantum preparations have $s_{\Lambda}=1$. $(iii)$ The more intriguing $\psi$-epistemic ontological models posit ontic states that are consistent with more 
than one pure quantum state. A well known $\psi$-epistemic ontological model for a two dimensional Hilbert space was proposed by Kochen and Specker \cite{kochen1967problem}. In this model the ontic-state space $\Lambda$ is taken to be the unit sphere and each pure qubit preparation $P\equiv \rho=\frac{\mathbb{I}+\vec{n}\cdot\vec{\sigma}}{2}$ is associated with the epistemic state $\mu(\lambda|P)=\frac{1}{\pi}(H(\vec{n}\cdot\vec{\lambda})\vec{n}\cdot\vec{\lambda})$ where $H(x)=1$ if $x>0$,  and $H(x)=0$ if $x\leqslant 0$. Now, we present two intriguing observations pertaining to Kochen-Specker ontological model.
\begin{Observation} \label{KSpropositionBOD}
Remarkably, the epistemic states prescribed by the Kochen-Specker $\psi$-epistemic ontological model associated with the aforementioned quantum preparations $\{\rho_1,\rho_2,\rho_3\}$ \eqref{rho123} saturate the lower bound on the extent of excess ontological distinct inferred from the violation of \eqref{BODineq} i.e. this set is precisely $\frac{\sqrt{3}}{2}\approx 0.866$-distinct.
\end{Observation}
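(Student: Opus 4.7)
The plan is to evaluate the distinctness $s_\Lambda$ of the Kochen-Specker epistemic states attached to the three preparations $\{\rho_1,\rho_2,\rho_3\}$ from \eqref{rho123} directly via the succinct expression \eqref{pLambdaDistinguishSuccinct}, and to verify that the resulting value equals the lower bound $\frac{\sqrt{3}}{2}$ extracted in Section \ref{markSection1}. To this end I parametrize the ontic state space $\Lambda$ by $\vec\lambda=(\sin\phi\cos\theta,\sin\phi\sin\theta,\cos\phi)$ on the unit sphere, which makes $\vec{n}_i\cdot\vec\lambda=\sin\phi\cos(\theta-\theta_i)$ with $\theta_i\in\{0,2\pi/3,4\pi/3\}$ and fixes the measure as $d\lambda=\sin\phi\,d\phi\,d\theta$.

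The first simplification is that the Heaviside factors in the Kochen-Specker epistemic states can be dropped almost everywhere. Because the three Bloch vectors $\vec n_i$ are coplanar and spaced $120$ degrees apart in the $xy$-plane, for any $\vec\lambda$ off the $z$-axis the azimuth $\theta$ sits within $60$ degrees of at least one $\theta_i$, and hence at least one inner product $\vec n_i\cdot\vec\lambda$ is strictly positive. Consequently $\max_i\mu(\lambda|P_i)=\tfrac{\sin\phi}{\pi}\max_i\cos(\theta-\theta_i)$ for almost every $\vec\lambda\in\Lambda$.

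The second simplification is the threefold azimuthal symmetry. The sphere decomposes into three congruent wedges of azimuthal width $2\pi/3$, on each of which a single $\vec n_i$ realises the maximum. On the representative wedge $\theta\in[-\pi/3,\pi/3]$ the maximizer is $\vec n_1$ and the maximum equals $\sin\phi\cos\theta$. Combining this with \eqref{pLambdaDistinguishSuccinct} (where $n=3$) and cancelling the factor of $3$ against the three symmetric wedges yields $s_\Lambda=\frac{1}{\pi}\int_0^\pi\sin^2\phi\,d\phi\int_{-\pi/3}^{\pi/3}\cos\theta\,d\theta$.

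The final step is the elementary evaluation of the two one-dimensional integrals, yielding $\pi/2$ and $\sqrt{3}$ respectively, so $s_\Lambda=\frac{\sqrt{3}}{2}$, which is precisely the lower bound on excess ontological distinctness extracted from \eqref{BODviolation}. I do not anticipate a substantive obstacle; the only delicate point is justifying that the Heaviside factors are inactive almost everywhere on $\Lambda$, which is dispatched at once by the coplanarity and $120$-degree spread of the $\vec n_i$.
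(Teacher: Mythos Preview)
Your proof is correct and follows essentially the same approach as the paper: both arguments hinge on the observation that the Heaviside factors are redundant because the three coplanar $120^\circ$-spaced Bloch vectors guarantee $\max_i\{\vec n_i\cdot\vec\lambda\}\geq 0$ almost everywhere, after which the sphere integral is evaluated. Your version is more explicit in carrying out the final integral via the threefold wedge decomposition in spherical coordinates, whereas the paper simply asserts the value $\sqrt{3}/2$ after noting the Heaviside redundancy.
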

\begin{proof}
The maximum probability of distinguishing the epistemic-states associated with $\rho_1,\rho_2$ and $\rho_3$ upon having access to the ontic state $\lambda$ is,
\beq \label{KSThreeQubits} \nonumber
 s_\Lambda & =& \frac{1}{3\pi}\int_{\Lambda}\max_{x\in\{1,2,3\}}\Bigg\{H(\vec{n_x}\cdot\vec{\lambda})\vec{n_x}\cdot\vec{\lambda}\Bigg\}d\lambda\\ \nonumber
& =& \frac{1}{3\pi}\int_{\Lambda}\max\Bigg\{\lambda_x,\cos{\frac{2\pi}{3}}\lambda_x+\sin{\frac{2\pi}{3}}\lambda_y,  \\ 
&& \quad \quad \quad \cos{\frac{4\pi}{3}}\lambda_x+\sin{\frac{4\pi}{3}}\lambda_y\Bigg\}d\lambda \nonumber \\
& =& \frac{\sqrt{3}}{2},
\eeq
where for the second equality we use the fact that $(i)$ the ontic-state space is the surface of a unit sphere and the ontic state is a real three dimensional unit vector $\lambda=[\lambda_x,\lambda_y,\lambda_z]^T$ and $(ii)$ because of the particular orientation of the Bloch vectors of the quantum states under consideration, the function $H(\vec{n_x}\cdot\vec{\lambda})$ in the integrand is redundant as $\max\{\vec{n_1}\cdot\vec{\lambda},\vec{n_2}\cdot\vec{\lambda},\vec{n_3}\cdot\vec{\lambda}\}$ is never negative. 
\end{proof}
We have demonstrated violation of $BOD_P$ whilst employing triplets of qubits, however, when 
we restrict ourselves to just pairs of pure quantum preparations, we make the following intriguing observation,
\begin{Observation} \label{KSnogoPair}
Any prepare and measure fragment of quantum theory entailing a pair of pure preparations and all possible measurements satisfies $BOD_P$, i.e. there exists an ontological model for such fragments which adheres to $BOD_P$.
\end{Observation}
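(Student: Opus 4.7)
For a pair of preparations the text has already unpacked the $BOD_P$ condition to the symmetric maximally $\psi$-epistemic equality $s_\Lambda = s_Q = \tfrac12(1+\sqrt{1-|\langle\psi_1|\psi_2\rangle|^2})$, or equivalently
\begin{equation}
\int_\Lambda \min\{\mu(\lambda|P_1),\mu(\lambda|P_2)\}\,d\lambda = 1-\sqrt{1-|\langle\psi_1|\psi_2\rangle|^2}.
\end{equation}
So the plan is to exhibit an ontological model of the prepare-and-measure fragment generated by $\{\ket{\psi_1},\ket{\psi_2}\}$ together with every POVM on the ambient Hilbert space whose pair of epistemic states saturates this identity. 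The first move is to reduce to a qubit fragment: the two states span a subspace $V$ of dimension at most two, and for any POVM $\{E_k\}$ the Born-rule probabilities on $\rho_1,\rho_2$ depend only on the compressions $\Pi_V E_k \Pi_V$, which form a sub-normalised POVM on $V$; appending $\mathbb{I}_V - \sum_k \Pi_V E_k \Pi_V$ as an extra ``null'' outcome completes it to a qubit POVM. Hence any ontological model of the qubit fragment together with all qubit POVMs extends trivially to the ambient fragment by assigning the null outcome probability zero at every ontic state.

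\textbf{The candidate model.} For the qubit fragment I will take the Kochen--Specker $\psi$-epistemic ontological model recalled just above this observation: $\Lambda = S^2$ and $\mu(\lambda|\rho) = \tfrac{1}{\pi} H(\vec{n}\cdot\vec{\lambda})(\vec{n}\cdot\vec{\lambda})$ for $\rho = \tfrac12(\mathbb{I} + \vec{n}\cdot\vec{\sigma})$. The KS response scheme already reproduces the Born rule on every projective qubit measurement, and extends to general qubit POVMs via Naimark dilation (or, equivalently, by convex-decomposing each qubit effect into rank-one effects and assembling the corresponding pure-projector responses). Combined with the reduction above this yields a bona fide ontological model of the target quantum fragment, so only the maximal $\psi$-epistemic equality itself is left to check.

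\textbf{The key integral, and the main obstacle.} The substantive content is to verify
\begin{equation}
\tfrac{1}{\pi}\int_{S^2}\min\{H(\vec{n}_1\!\cdot\!\vec{\lambda})(\vec{n}_1\!\cdot\!\vec{\lambda}),\,H(\vec{n}_2\!\cdot\!\vec{\lambda})(\vec{n}_2\!\cdot\!\vec{\lambda})\}\,d\lambda = 1 - \sin(\theta/2),
\end{equation}
where $\theta$ is the Bloch angle between $\vec{n}_1$ and $\vec{n}_2$, so that $|\langle\psi_1|\psi_2\rangle|^2 = \cos^2(\theta/2)$; matching this to $2(1-s_Q)$ then gives $s_\Lambda = s_Q$. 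I would place $\vec{n}_1,\vec{n}_2$ symmetrically about $\hat{z}$, parametrise $\vec{\lambda}\in S^2$ in standard spherical coordinates, split the domain into the lune where both $\vec{n}_i\!\cdot\!\vec{\lambda}>0$ and its complement, and use $\min(a,b) = \tfrac12(a+b)-\tfrac12|a-b|$ to reduce the computation to normalisation integrals of linear functions on a half-sphere plus an elementary trigonometric integral. The inequality $s_\Lambda \geqslant s_Q$ is automatic (ontological discrimination can only outperform operational discrimination), so the real content, and the main obstacle, is the reverse direction: showing that the KS overlap is as large as the Born-rule distinguishability permits. The identity is tight, with no room for slack, so it must be evaluated carefully; once it is in place, $BOD_P$ for the pair follows immediately.
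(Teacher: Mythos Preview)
Your proposal is correct and follows essentially the same route as the paper: reduce the pair of pure states to an effective qubit fragment, invoke the Kochen--Specker $\psi$-epistemic model, and verify by direct spherical integration that $s_\Lambda=s_Q=\tfrac12(1+\sin(\theta/2))$. The only cosmetic differences are that the paper computes the $\max$ integral for $s_\Lambda$ directly (with $\vec{n}_1$ along the $x$-axis) while you compute the equivalent $\min$/overlap integral (with the Bloch vectors placed symmetrically about $\hat z$) via $\min(a,b)=\tfrac12(a+b)-\tfrac12|a-b|$; and you are somewhat more explicit than the paper about how arbitrary POVMs on the ambient space are accommodated via compression to the two-dimensional span.
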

\begin{proof}
Observe that any two pure quantum states, without loss of generality, can be effectively represented as a pair of two-dimensional quantum states (which preserve the overlap). Now, for pairs of qubits Kochen-Specker ontological model satisfies $BOD_P$, i.e. for all pairs of qubits the maximum operational distinguishability $s_Q$ is exactly the ontological distinctness $s_\Lambda$ of the associated epistemic states in this model.
To see this, consider two pure qubits $\rho_1=\frac{\mathbb{I}+\vec{n_1}.\vec{\sigma}}{2},\rho_2=\frac{\mathbb{I}+\vec{n_2}.\vec{\sigma}}{2}$. Without loss of generality, we can assume the Bloch vector $\vec{n_1}$ associated with the first qubit to be $[1,0,0]^T$ and the Bloch vector $\vec{n}_2$ to be $[\cos{\theta_0},\sin{\theta_0},0]^T$ where $\theta_0\in[0,\frac{\pi}{2}]$. The maximum operational distinguishing probability for these two qubits turns out to be $s_{Q}=\frac{1}{2}(1+\sin{\frac{\theta_0}{2}})$. Now in the Kochen-Specker model the ontological distinctness of corresponding epistemic states turns out to be,
\begin{widetext}
\beq \label{KSMaxPsiSym} \nonumber
&s_{\Lambda} &= \frac{1}{2\pi}\int_{\Lambda}\max\Bigg\{H(\vec{n_1}\cdot\vec{\lambda})\vec{n_1}\cdot\vec{\lambda},H(\vec{n_2}\cdot\vec{\lambda})\vec{n_2}\cdot\vec{\lambda}\Bigg\}d\lambda \nonumber \\
& & = \frac{1}{2\pi}\int_{\Lambda}\max\Bigg\{0,\lambda_x,\cos{\theta_0}\lambda_x+\sin{\theta_0}\lambda_y\Bigg\}d\lambda \nonumber \\
& & = \frac{1}{2\pi}\int_{\theta_{\lambda}=-\pi}^{\pi}\int_{\phi_\lambda=0}^{\pi}\max\Bigg\{0,\cos{\theta_\lambda}\sin{\phi_\lambda},\cos{\theta_0}\cos{\theta_\lambda}\sin{\phi_\lambda}+\sin{{\theta_0}}\sin{\theta_\lambda}\sin{\phi_\lambda}\Bigg\}\sin{\phi_\lambda}d\phi_{\lambda}d\theta_\lambda \nonumber \\
& & = \frac{1}{\pi}\int_{\theta_{\lambda}=-\frac{\pi}{2}}^{\frac{\theta_0}{2}}\int_{\phi_\lambda=0}^{\pi}\cos{\theta_\lambda}\sin^2{\phi_\lambda}d\phi_{\lambda}d\theta_\lambda \nonumber \\
& & = \frac{1}{2}\left(1+\sin{\frac{\theta_0}{2}}\right)=s_Q,
\eeq
\end{widetext}
where for the second equality we used the fact that $(i)$ the ontic-state space is the surface of a unit sphere and the ontic state is a real three dimensional unit vector $\lambda=[\lambda_x,\lambda_y,\lambda_z]^T$ and $(ii)$ the effect of the function $H(.)$ in the original integral is to assign zero whenever the other two terms are negative. For the third equality we expressed the ontic state in standard spherical coordinates $[\lambda_x,\lambda_y,\lambda_z]^T\equiv[\cos{\theta_\lambda}\sin{\phi_\lambda},\sin{\theta_\lambda}\sin{\phi_\lambda},\cos{\phi_\lambda}]^T$ where $\theta\in[-
\pi,\pi]$ is the azimuthal and $\phi\in[0,\pi]$ is the polar spherical coordinate. The fourth equality follows from the observations that $(i)$ the second function in the maximization is symmetric with respect to $[1,0,0]^T$ on the azimuthal plane, $(ii)$ the third function in the maximization is simply the second function rotated by an angle $\theta_0$ on the azimuthal plane and, $(iii)$ the functions are mirror symmetric on either side of the plane bisecting $\theta_0$ and cross over on the same plane i.e.  $\forall \ \theta_\lambda\in[-\frac{\pi}{2},\frac{\theta}{2}],{\theta_\lambda}'=\theta_0-\theta_\lambda$, $\max\{0,\cos{\theta_\lambda},\cos{\theta_0}\cos{\theta_\lambda}+\sin{{\theta_0}}\sin{\theta_\lambda}\}=\max\{0,\cos{\theta_\lambda}',\cos{\theta_0}\cos{\theta_\lambda}'+\sin{{\theta_0}}\sin{\theta_\lambda}'\}=\cos{\theta_\lambda}$.  
\end{proof}

\subsubsection{Bounded ontological distinctness of two mixed preparations}
 With the aid of the following additional property of ontological models we demonstrate excess ontological distinctness by bounding the ontological distinctness of two mixed preparations. \\
 \begin{Definition} \textit{Convexity of epistemic states:} The epistemic state underlying an operational mixture of two preparations is the mixture of the respective underlying epistemic states (retaining the same proportions), i.e. if $P^c_{1+2}\equiv cP_1+(1-c)P_2$ then $\mu(\lambda|P^c_{1+2})=c\mu(\lambda|P_1)+(1-c)\mu(\lambda|P_2)$. 
 \end{Definition}
For the ease of notation, in what follows, we shall denote the uniform mixture as $P_{1+2}=P^{\frac{1}{2}}_{1+2}\equiv \frac{1}{2}(P_1+P_2)$ \footnote{We remark here that the ontological property: convexity of epistemic states, however natural, forms an additional assumption, for it is possible to have ontological models of quantum theory such as a $\rho-complete$ ontological model that do not exhibit this property.}.

In the following proposition, we consider four preparations and employ $BOD_P$ for a disjoint pair of two-preparation mixtures along with convexity of epistemic states to obtain an upper bound on average distinguishability of other disjoint pairs of two-preparation mixtures.
\begin{Proposition} \label{propostionRhoEpistemic}
Consider a set of four operational preparations $\{P_1,P_2,P_3,P_4\}$ where the set of mixtures $\{P_{1+2},P_{3+4}\}$ is $p$-distinguishable. Now, if an operational theory admits ontological models adhering to $BOD_P$ along with convexity of epistemic states, then the pairs of mixtures $\{P_{1+3},P_{2+4}\}$ and $\{P_{1+4},P_{2+3}\}$ are distinguishable with the average probability of success being at-most $\frac{1+p}{2}$ i.e. if $s^{1+2,3+4}_\mathcal{O}=p$ then,
\beq \label{rhoEpistemic}
\frac{1}{2}\bigg(s^{1+3,2+4}_{\mathcal{O}}+s^{1+4,2+3}_{\mathcal{O}}\bigg)\leqslant \frac{1+p}{2},
\eeq
where $s^{i+j,k+l}_{\mathcal{O}}$ denotes the maximum probability of distinguishing the preparations $\{P_{i+j},P_{k+l}\}$ out of a uniform ensemble.
\end{Proposition}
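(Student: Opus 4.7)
The plan is to mimic the strategy used for Proposition \ref{propositionBOD}: first translate everything onto the ontological level, then reduce the desired inequality to a pointwise statement about the densities.

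First, I would apply convexity of epistemic states to write, for every mixture, $\mu(\lambda|P_{i+j})=\tfrac{1}{2}\bigl(\mu(\lambda|P_i)+\mu(\lambda|P_j)\bigr)$. Then I would invoke $BOD_P$ for the $p$-distinguishable pair $\{P_{1+2},P_{3+4}\}$ to get $s^{1+2,3+4}_\Lambda=p$. Using the pair formula $s^{i+j,k+l}_\Lambda=1-\tfrac{1}{2}\int_\Lambda\min\{\mu(\lambda|P_{i+j}),\mu(\lambda|P_{k+l})\}\,d\lambda$, this fixes the overlap integral of $P_{1+2}$ and $P_{3+4}$ to $2(1-p)$. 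For the two remaining pairs, I would use only the universal bound $s^{i+j,k+l}_{\mathcal O}\leqslant s^{i+j,k+l}_\Lambda$ (no $BOD_P$ needed on those pairs), writing both right‑hand sides via the same overlap formula.

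Adding the two overlap expressions and rearranging, the target inequality $\tfrac{1}{2}(s^{1+3,2+4}_{\mathcal O}+s^{1+4,2+3}_{\mathcal O})\leqslant\tfrac{1+p}{2}$ becomes the pointwise claim, for $a=\mu(\lambda|P_1)$, $b=\mu(\lambda|P_2)$, $c=\mu(\lambda|P_3)$, $d=\mu(\lambda|P_4)\geqslant0$,
\begin{equation}
\min\{a+c,\,b+d\}+\min\{a+d,\,b+c\}\;\geqslant\;\min\{a+b,\,c+d\}.
\end{equation}
Integrating this over $\Lambda$ and combining with the constraint from $BOD_P$ yields the proposition.

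I expect the only non‑routine step to be the elementary inequality above; the rest is bookkeeping. My plan to prove it is to apply the identity $\min\{x,y\}=\tfrac{1}{2}(x+y-|x-y|)$ to each of the three terms. After cancelling the common factor $\tfrac{1}{2}(a+b+c+d)$, the inequality reduces to
\begin{equation}
(a+b+c+d)+|a+b-c-d|\;\geqslant\;|a+c-b-d|+|a+d-b-c|.
\end{equation}
Using $|x+y|+|x-y|=2\max(|x|,|y|)$ on the right (with $x=a-b$, $y=c-d$) and $X+Y+|X-Y|=2\max(X,Y)$ on the left (with $X=a+b$, $Y=c+d$), this is equivalent to $\max(a+b,c+d)\geqslant\max(|a-b|,|c-d|)$, which is immediate from $a+b\geqslant|a-b|$ and $c+d\geqslant|c-d|$ for nonnegative reals. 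The main conceptual subtlety is simply recognising that no further ontological assumption on the two pairs $\{P_{1+3},P_{2+4}\}$ and $\{P_{1+4},P_{2+3}\}$ is needed; the asymmetry between the ``constrained'' pair and the ``free'' pairs is what gives the inequality its robust operational character.
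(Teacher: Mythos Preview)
Your proposal is correct and follows essentially the same route as the paper: convexity of epistemic states, the overlap formula for pairwise distinctness, the pointwise inequality $\min\{a+c,b+d\}+\min\{a+d,b+c\}\geqslant\min\{a+b,c+d\}$, the general bound $s_{\mathcal O}\leqslant s_\Lambda$ on the two free pairs, and finally $BOD_P$ on the constrained pair. The only difference is that the paper merely states the elementary four-variable inequality as a fact, whereas you supply a clean proof via the $\min\{x,y\}=\tfrac12(x+y-|x-y|)$ identity; your reduction to $\max(a+b,c+d)\geqslant\max(|a-b|,|c-d|)$ is valid.
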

\begin{proof} 
We broadly follow the same steps as in the proof of Proposition \ref{propositionBOD}. 
Recall that, for a pair of epistemic states $\{\mu(\lambda|P_{i+j}),\mu(\lambda|P_{k+l})\}$, the maximal probability of distinguishing them when sampled from a uniform ensemble, upon having access to the ontic state $\lambda$ has the expression,
\beq \label{RhoEpistemic2} \nonumber
&s^{i+j,k+l}_{\Lambda} & =  \frac{1}{2}\int_\Lambda \max_{x\in \{{i+j},{k+l}\}}\bigg\{\mu(\lambda|P_x)\bigg\}d\lambda \\ \nonumber
& & = 1-\frac{1}{2}\int_\Lambda \min\bigg\{\mu(\lambda|P_{i+j}),\mu(\lambda|P_{k+l})\bigg\}
d\lambda \\ \nonumber
& & = 1-\frac{1}{4}\int_\Lambda \min\bigg\{\mu(\lambda|P_{i})+\mu(\lambda|P_{j}), \\ 
& & \hspace{75pt}
\mu(\lambda|P_{k})+\mu(\lambda|P_{l})\bigg\}d\lambda
\eeq 
where for the third equality we employed the assumption convexity of epistemic states. This in-turn leads us to the expression for average distinguishability for the pairs of epistemic states associated with pairs of preparations $\{P_{1+3},P_{2+4}\}$ and $\{P_{1+4},P_{2+3}\}$ upon having access to the ontic state $\lambda$,
\begin{widetext}
\beq \label{RhoEpistemic2Plus2} \nonumber
&\frac{1}{2}\bigg(s^{1+3,2+4}_{\Lambda}+s^{1+4,2+3}_{\Lambda}\bigg)=1-\frac{1}{8}\int_\Lambda\Bigg(&\min\bigg\{\mu(\lambda|P_1)+\mu(\lambda|P_3),\mu(\lambda|P_2)+\mu(\lambda|P_4)\bigg\}\\  
& &  + \min\bigg\{\mu(\lambda|P_1)+\mu(\lambda|P_4),\mu(\lambda|P_2)+\mu(\lambda|P_3) \bigg\}\Bigg)d\lambda. 
\eeq
The observation $\forall \ \{a\geqslant0,b\geqslant0,c\geqslant0,d\geqslant0\}, \ \min\{a+c,b+d\}+\min\{a+d,b+c\} \geqslant  \min\{a+b,c+d\}$ along with the fact that in general $s^{i+j,k+l}_{\mathcal{O}}\leqslant s^{i+j,k+l}_\Lambda$ yields,
\beq \label{PreRhoEpistemic}  \nonumber
&\frac{1}{2}\bigg(s^{1+3,2+4}_{\mathcal{O}}+s^{1+4,2+3}_{\mathcal{O}}\bigg)& \leqslant  1-\frac{1}{8}\int_\Lambda\min\bigg\{\mu(\lambda|P_1)+\mu(\lambda|P_2),\mu(\lambda|P_3)+\mu(\lambda|P_4)\bigg\}d \lambda \\ 
& & \leqslant \frac{1+s^{1+2,3+4}_{\Lambda}}{2} 
\eeq
\end{widetext}
Now, if an operational theory admits ontological models that adhere to bounded onotological distinguishability, every set $p$-distinguishable preparations is associated with a set of $p$-distinct epistemic states, in particular $s^{1+2,3+4}_{\Lambda}=s^{1+2,3+4}_{\mathcal{O}}=p$. Inserting this into \eqref{PreRhoEpistemic} yields the desired thesis.
\end{proof}
\subsubsection{Excess ontological distinctness of a pair of mixed quantum preparations}

\begin{figure}[]     
    \centering
    \includegraphics[width=0.45\textwidth]{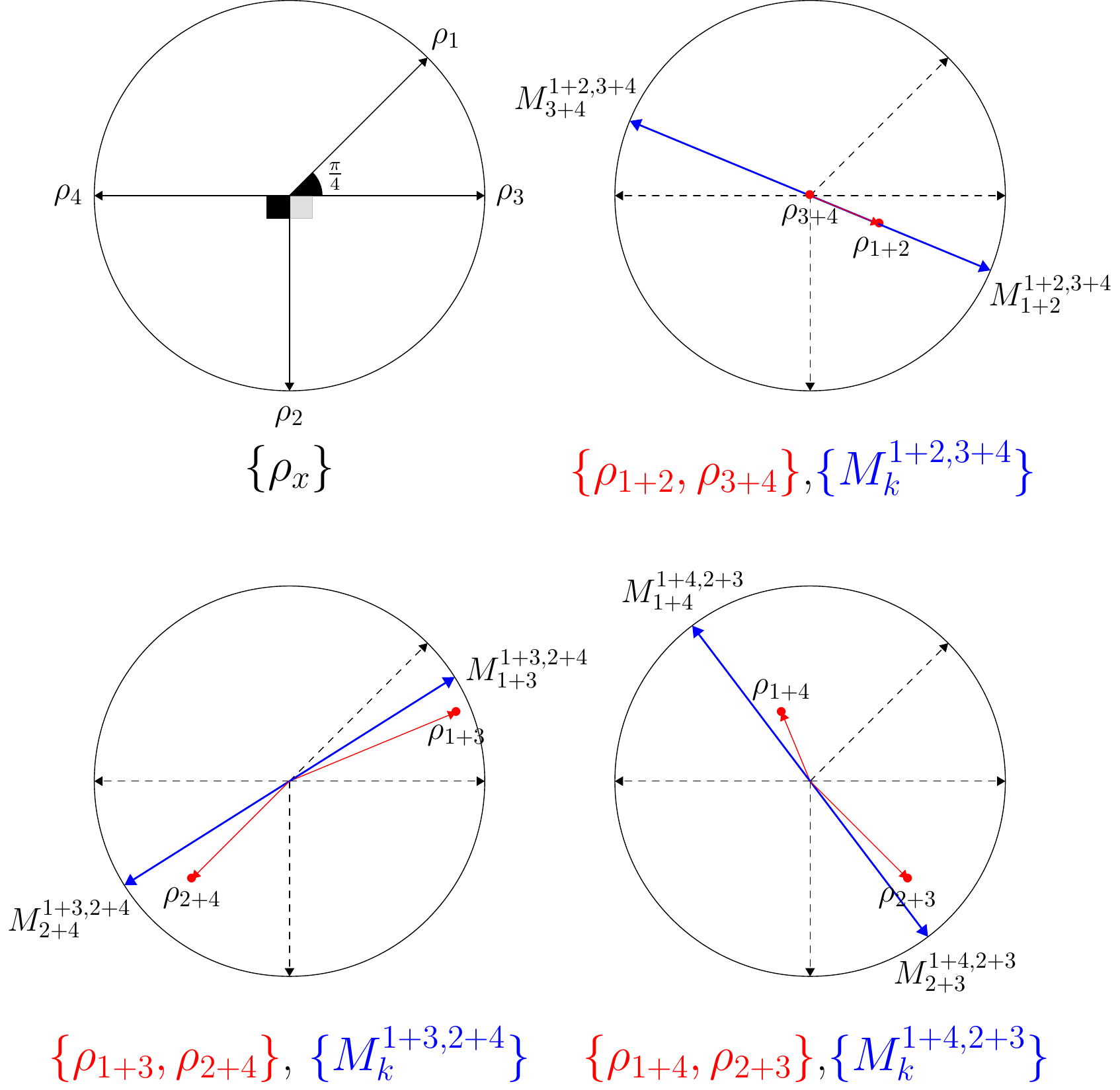}
    \caption{\label{PrepMeas2} This figure depicts the Bloch vectors in the x-y plane (x axis being vertical and y axis horizontal) of the Bloch sphere corresponding to the set of states $\{\rho_x=\frac{\mathbb{I}+\vec{n}_x.\vec{\sigma}}{2}\}_{x\in\{1,2,3,4\}}$ (top left) where $\vec{n}_1=[\cos{\theta},\sin{\theta},0]^T$, $\vec{n}_2=[-1,0,0]^T$, $\vec{n}_3=[0,1,0]^T$ and $\vec{n}_4=[0,-1,0]^T$ and $\theta = \frac{\pi}{4}$, the mixed preparations $\{\rho_{1+2},\rho_{3+4}\}$ along with the measurement $\{M^{1+2,3+4}_{k\in\{1+2,3+4\}}\}$ (top right) which maximizes their distinguishability such that $s^{1+2,3+4}_{Q}\approx 0.595$, the mixed preparations $\{\rho_{1+3},\rho_{2+4}\}$ along with the measurement $\{M^{1+3,2+4}_k\}_{k\in\{1+3,2+4\}}$ (bottom left) which maximizes their distinguishability such that $s^{1+3,2+4}_{Q}\approx 0.9$, and the mixed preparations $\{\rho_{1+4},\rho_{2+3}\}$ along with the measurement $\{M^{1+4,2+3}_{k\in\{1+4,2+3\}}\}$ (bottom right) which maximizes their distinguishability such that $s^{1+4,2+3}_{Q}\approx 0.767$, consequently leading to a violation of Proposition \ref{propostionRhoEpistemic}. As a consequence, the epistemic states underlying the $\approx0.595$-distinguishable quantum preparations $\{\rho_{1+2},\rho_{3+4}\}$ must be at-least $\approx0.667$-distinct.}
\end{figure}
To witness quantum violation of \eqref{rhoEpistemic}, consider four pure qubits of the form $\rho_x=\frac{\mathbb{I}+\vec{n}_x\cdot\vec{\sigma}}{2}$ where $\vec{n}_1=[\cos{\theta},\sin{\theta},0]^T$, $\vec{n}_2=[-1,0,0]^T$, $\vec{n}_3=[0,1,0]^T$ and $\vec{n}_4=[0,-1,0]^T$ and $\theta \in [0,\frac{\pi}{2}]$. The pair of mixtures $\{\rho_{1+2},\rho_{3+4}\}$ is $\frac{1}{2}(1+\frac{\sqrt{1-\cos{\theta}}}{2\sqrt{2}})$-distinguishable. It follows from Proposition \ref{propostionRhoEpistemic}, if an operational theory admits ontological models that adhere to $BOD_P$ and convexity of epistemic states then maximum average probability of distinguishing the pairs of mixtures $\{\rho_{1+3},\rho_{2+4}\}$ and $\{\rho_{1+4},\rho_{2+3}\}$ is upper-bounded by,
\be \frac{1}{2}(s^{1+3,2+4}_\mathcal{O} +s^{1+3,2+4}_\mathcal{O})\leqslant \frac{1}{16} \left(\sqrt{2} \sqrt{ 1-\cos (\theta ) }+12\right).\ee 
However the pairs of mixtures $\{\rho_{1+3},\rho_{2+4}\}$ and $\{\rho_{1+4},\rho_{2+3}\}$ are $\frac{1}{2}(1+\frac{\sqrt{3+\cos{\theta}+2\sin{\theta}}}{2\sqrt{2}})$-distinguishable and $\frac{1}{2}(1+\frac{\sqrt{3+\cos{\theta}-2\sin{\theta}}}{2\sqrt{2}})$-distinguishable respectively, which leads to violation of \eqref{rhoEpistemic} for $0\leqslant\theta\lessapprox\frac{\pi}{2\sqrt{2}}$ (FIG. \ref{fourQubit}), since 
\beq \label{refMe}
 \frac{1}{2}(s^{1+3,2+4}_Q+s^{1+4,2+3}_Q)&=& \frac{1}{2} + 
\frac{\sqrt{3+\cos{\theta}+2\sin{\theta}}}{8\sqrt{2}} \nonumber \\
&& +\frac{\sqrt{3+\cos{\theta}-2\sin{\theta}}}{8\sqrt{2}} .
\eeq 
In particular when $\theta_0=0$, to explain maximum average probability of distinguishing the pairs of mixtures $\{\rho_{1+3},\rho_{2+4}\}$ and $\{\rho_{1+4},\rho_{2+3}\}$,
$\frac{1}{2}(s^{1+3,2+4}_Q+s^{1+4,2+3}_Q)=\frac{2+\sqrt{2}}{4}\approx 0.853$
the epistemic states underlying completely indistinguishable mixtures $\{\rho_{1+2},\rho_{3+4}\}$ must be atleast $\frac{1}{\sqrt{2}}\approx 0.707$-distinct. 

\subsubsection{Examples of quantum ontological models}
In Beltrametti-Bugajski ontological model ($\psi$-complete) and the Bell-Mermin ontological model ($\psi$-ontic) and for the aforementioned states $\rho_1,\rho_2,\rho_3,\rho_4$ $s^{1+2,3+4}_\Lambda=1$. Yet again Kochen-Specker $\psi$-epistemic model stands out.
\begin{Observation} \label{KSrhoepistemic}
Remarkably, the epistemic states prescribed by the Kochen-Specker $\psi$-epistemic ontological model associated with the aforementioned pair of mixed quantum preparations $\{\rho_{1+2},\rho_{3+4}\}$ for $\theta=0$ saturate the lower bound on the extent of excess ontological distinct inferred from the violation of \eqref{rhoEpistemic}, i.e. this pair is precisely $\frac{1}{\sqrt{2}}\approx 0.707$-distinct.
\end{Observation}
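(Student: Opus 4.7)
The plan is to directly compute the ontological distinctness $s^{1+2,3+4}_\Lambda$ of the Kochen-Specker epistemic states for the pair of mixtures, using the convexity of epistemic states, and show that it equals $\frac{1}{\sqrt{2}}$. First, I would specialize the four Bloch vectors to $\theta=0$: we have $\vec{n}_1=[1,0,0]^T$, $\vec{n}_2=[-1,0,0]^T$, $\vec{n}_3=[0,1,0]^T$, $\vec{n}_4=[0,-1,0]^T$, so the operational mixtures $\rho_{1+2}$ and $\rho_{3+4}$ both coincide with the maximally mixed state $\mathbb{I}/2$, confirming $s^{1+2,3+4}_Q=\tfrac{1}{2}$.

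Second, I would use convexity of epistemic states together with the KS prescription $\mu(\lambda|\rho_i)=\tfrac{1}{\pi}H(\vec{n}_i\cdot\vec{\lambda})(\vec{n}_i\cdot\vec{\lambda})$. Because $\vec{n}_1=-\vec{n}_2$ and $\vec{n}_3=-\vec{n}_4$, the Heaviside step functions combine with the linear parts into absolute values, giving the compact forms
\begin{equation}
\mu(\lambda|\rho_{1+2})=\tfrac{1}{2\pi}|\lambda_x|,\qquad \mu(\lambda|\rho_{3+4})=\tfrac{1}{2\pi}|\lambda_y|,
\end{equation}
for $\vec{\lambda}$ on the unit sphere. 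This is the key observation: although the two operational preparations are identical, their underlying KS epistemic states have supports concentrated near orthogonal axes, so they are clearly nontrivially distinct ontologically.

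Third, I would insert these into the expression for the ontological distinctness:
\begin{equation}
s^{1+2,3+4}_\Lambda=\frac{1}{2}\int_\Lambda\max\bigl\{\mu(\lambda|\rho_{1+2}),\mu(\lambda|\rho_{3+4})\bigr\}d\lambda=\frac{1}{4\pi}\int_\Lambda\max\{|\lambda_x|,|\lambda_y|\}d\lambda.
\end{equation}
Using spherical coordinates $[\lambda_x,\lambda_y,\lambda_z]=[\cos\theta_\lambda\sin\phi_\lambda,\sin\theta_\lambda\sin\phi_\lambda,\cos\phi_\lambda]$ with surface element $\sin\phi_\lambda\, d\phi_\lambda\, d\theta_\lambda$, the integrand factorizes as $\sin\phi_\lambda\max\{|\cos\theta_\lambda|,|\sin\theta_\lambda|\}$, so the integral separates into $\int_0^\pi\sin^2\phi_\lambda\,d\phi_\lambda=\pi/2$ times an azimuthal piece $\int_{-\pi}^\pi\max\{|\cos\theta_\lambda|,|\sin\theta_\lambda|\}d\theta_\lambda$.

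Finally, exploiting the eightfold symmetry of the azimuthal integrand (the max equals $\cos\theta_\lambda$ on $[0,\pi/4]$), I would evaluate the azimuthal integral as $8\int_0^{\pi/4}\cos\theta_\lambda\,d\theta_\lambda=4\sqrt{2}$, so that $s^{1+2,3+4}_\Lambda=\frac{1}{4\pi}\cdot\frac{\pi}{2}\cdot 4\sqrt{2}=\frac{1}{\sqrt{2}}$. Comparing this with the lower bound on ontological distinctness extracted from the quantum violation of \eqref{rhoEpistemic} at $\theta=0$ (which read $s^{1+2,3+4}_\Lambda\geqslant\tfrac{1}{\sqrt{2}}$) establishes saturation. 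I do not anticipate a serious obstacle; the only mild care needed is in handling the Heaviside functions when combining pairs of antipodal Bloch vectors into absolute values, and in correctly parameterizing the unit sphere so that symmetries reduce the integral to elementary one-dimensional pieces.
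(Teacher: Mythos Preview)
Your proposal is correct and follows essentially the same route as the paper: a direct computation of $s^{1+2,3+4}_\Lambda$ in spherical coordinates, using the azimuthal symmetry of the antipodal Bloch vectors to reduce the integral to an elementary one over a single octant. Your preliminary collapse of the Heaviside sums into $|\lambda_x|$ and $|\lambda_y|$ is a slightly cleaner packaging of the same observation the paper makes (that two of the Heaviside factors vanish on each quadrant), but the argument is otherwise identical.
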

In the Kochen-Spekker model ($\psi$-epistemic), the epistemic states $\{\mu(\lambda|P_{1+2}),\mu(\lambda|P_{3+4})\}$ are ontologically distinguishable with maximum probability,
\begin{widetext}
\beq \label{KSParityOblivious} \nonumber
& s^{1+2,3+4}_\Lambda & = \frac{1}{4\pi}\int_{\Lambda}\max\Bigg\{H(\vec{n_1}\cdot\vec{\lambda})\vec{n_1}\cdot\vec{\lambda}+H(\vec{n_2}\cdot\vec{\lambda})\vec{n_2}\cdot\vec{\lambda},H(\vec{n_3}\cdot\vec{\lambda})\vec{n_3}\cdot\vec{\lambda}+H(\vec{n_4}\cdot\vec{\lambda})\vec{n_4}\cdot\vec{\lambda}\Bigg\}d\lambda \\ \nonumber
& & = \frac{2}{\pi}\int_{\theta_\lambda=0}^{\frac{\pi}{4}}\int_{\Phi_\lambda=0}^{\pi}\cos{\theta_\lambda}\sin^2{\phi_\lambda}d\theta_\lambda d\phi_\lambda \\
& & = \frac{1}{\sqrt{2}},
\eeq
\end{widetext}
where for the second equality expressed the ontic state in standard spherical coordinates $[\lambda_x,\lambda_y,\lambda_z]^T\equiv [\cos{\theta_\lambda}\sin{\phi_\lambda},\sin{\theta_\lambda}\sin{\phi_\lambda},\cos{\phi_\lambda}]^T$ and employed the observations that  $(i)$ $\forall \ \theta_\lambda \in[0,\frac{\pi}{2}]: \ H(\vec{n}_2\cdot\vec{\lambda}))=H(\vec{n}_4\cdot\vec{\lambda}))=0$ and $(ii)$ the Bloch vectors $\vec{n}_1,\vec{n}_2,\vec{n}_3,\vec{n}_4$ are symmetrically distributed on the azimuthal plane which in turn translates to the same contribution from all four quarters of the azimuthal plane. Now, the second equality follows from the observation that, $\frac{1}{4\pi}\int_{\theta_\lambda=0}^{\frac{\pi}{2}}\int_{\Phi_\lambda=0}^{\pi}\max\{\cos{\theta_\lambda}\sin^2{\phi_\lambda},\sin{\theta}\sin^2{\phi}\}d\theta_\lambda d\phi_\lambda=\frac{1}{2\pi}\int_{\theta_\lambda=0}^{\frac{\pi}{4}}\int_{\Phi_\lambda=0}^{\pi}\cos{\theta_\lambda}\sin^2{\phi_\lambda}d\theta_\lambda d\phi_\lambda$. 

\subsection{Measurements}
\begin{figure}   
    \centering
    \includegraphics[scale=0.40]{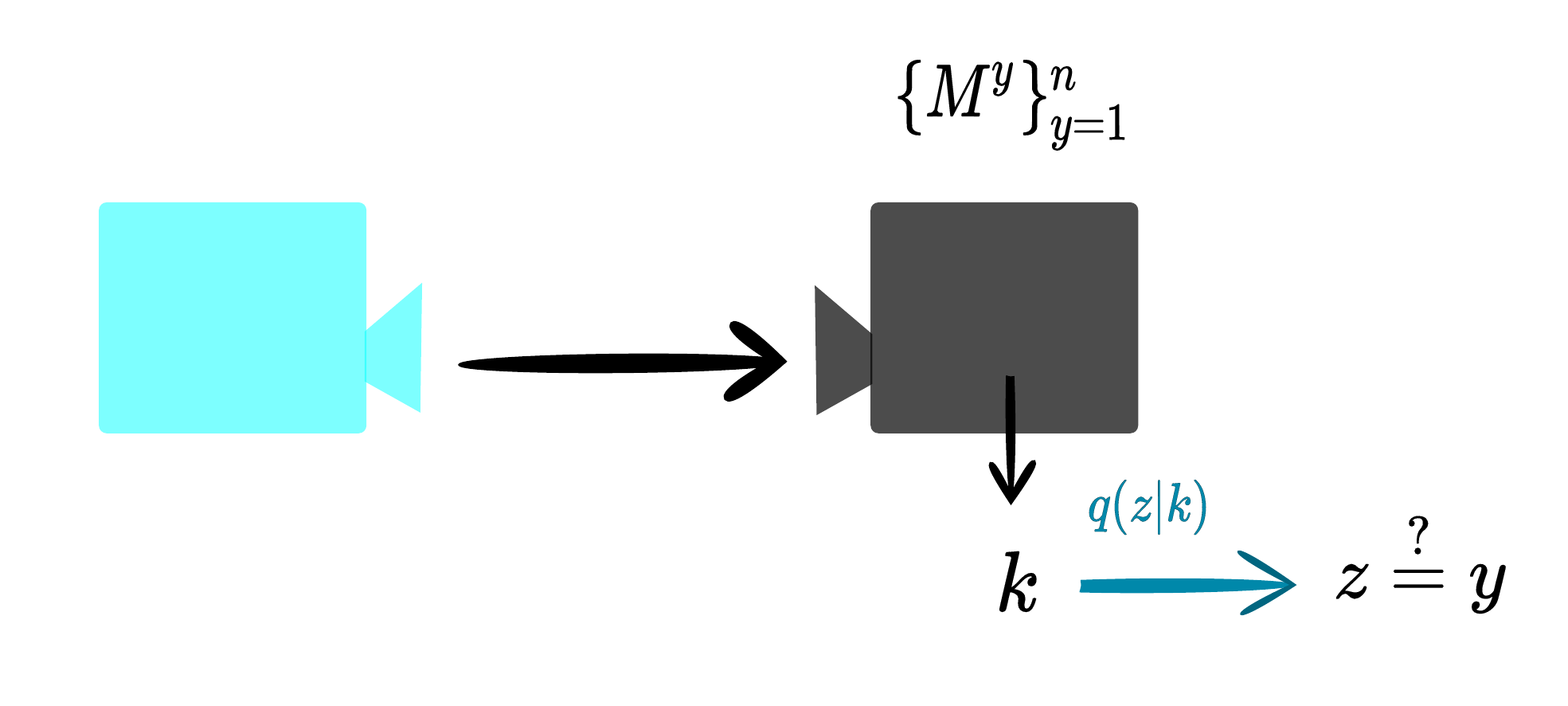}
    \caption{\label{bodm} This figure denotes the set-up for distinguishing a given set of measurements $\{M^y\}^n_{y=1}$ (dark black device) out of a uniform ensemble. In each run of the experiment a measurement $M_y$ is performed on a preparation, the measurement outcome $k$ is then post-processed based on a conditional probability distribution $q(z|k)$ producing the outcome $z$. The (single shot) distinguishing probability in this case, is the probability of producing an output $z=x$ as a result of the $n$-outcome classical post-processing scheme $q(z|k)$ upon a measurement $M^y$ of a preparation.
     This probability is maximized over all preparations (light cyan device) and valid post-processing schemes, yielding the distinguishability $m_\mathcal{O}$ of the constituent measurements. As a consequence of this maximization, $m_\mathcal{O}$ is an intrinsic property of this set of measurements and forms the operational condition accompanying bounded ontological distinctness of measurements.  }
     \label{fig:bodm}
\end{figure}
We can readily extend the principle of bounded ontological distinctness so as to apply to measurements. Here, we invoke sets of $d$-outcome measurements and characterize these sets on the basis of how well the constituent measurements can be operationally distinguished when sampled from a uniform ensemble (FIG. \ref{bodm}). 
\begin{Definition} \textit{$p$-distinguishable measurements:}
In an operational theory, a set of $d$-outcome measurements $\mathcal{M}\equiv \{M^y\}^n_{y=1}$ is called \textit{$p$-distinguishable} if the constituent measurements can be distinguished when sampled from a uniform ensemble with at-most $p$ probability, i.e.,
\beq \label{pDistinguishMeasurements}
& m_{\mathcal{O}} & =  \max_{P} \bigg\{\frac{1}{n} \max_{\{q(z|k)\}}\bigg\{  \sum_{k,y} q(z=y|k)p(k|M^y,P)\bigg\} \bigg\},  \nonumber  \\ 
\nonumber 
& & = p,
\eeq
where $m_{\mathcal{O}}$ denotes the maximum operational probability of distinguishing these measurements when sampled from a uniform ensemble and the first maximization is over the set of all possible operational preparations and the second maximization is over all possible conditional probability distributions $\{q(z|k)\}$, $M^y \in \mathcal{M}$ is a constituent measurement and $k$ is an outcome of the measurement $M^y$ \footnote{Here, we restrict ourselves to considering simple single-shot measurement distinguishability without the aid of more exotic features of operational theories such as entanglement \cite{RevModPhys.81.865,PhysRevA.90.052312}. } (FIG. \ref{fig:bodm}). 
\end{Definition}

Clearly, the initial maximization relieves $m_{\mathcal{O}}$ of its dependence on preparations and the second maximization relieves $m_{\mathcal{O}}$ of its dependence on classical post-processing schemes, deeming it to be a suitable characterizing feature of sets of measurements. We call the probability distributions $\{q(z|k)\}$  classical post-processing schemes which satisfy: \textit{positivity:} $\forall \ k,z: \ q(k|z) \geqslant 0$ and, \textit{completeness:} $\forall \ z: \sum_{k}q(k|z)=1$.  Observe that the set of all possible $n$-outcome classical post-processing schemes forms a convex polytope, the extremal points of this polytope are deterministic probability distributions, i.e., each extremal classical post-processing scheme is of the form: for each $k$, $q(z|k)=0$ except for specific $z=z_k$ for which $q(z_k|k)=1$. In light of this observation, we can readily solve the second maximization by finding out the optimal classical post-processing scheme. Clearly, for each preparation $P$ and measurement outcome $k$, the optimal classical post-processing would be to output the index $y$ of the measurement which assigns the largest probability to $k$. This in-turn leads us to the following succinct expression for $m_\mathcal{O}$, 
\beq \label{pDistinguishM}
m_\mathcal{O}= \max_{P} \Bigg\{\frac{1}{n}  \sum_{k}\max_y \Bigg\{p(k|M^y,P)\Bigg\}\Bigg\}.
\eeq
Recall that in an ontological model, each set of operational measurements is associated with a set of response schemes. We characterize the sets of response schemes based on how distinct the constituent response schemes are. 
\begin{Definition} \textit{$p$-distinct response schemes:}
In an ontological model, a set of $d$-outcome response schemes $\mathcal{M}_{\Lambda}$ is called $p$-distinct if the constituent response schemes can be ontological distinguished (upon having fine-grained control over preparation of the ontic state $\lambda$) when sampled from a uniform ensemble with at most $p$ probability, i.e., 
\be \label{pLambdaDistinguishM}
m_{\Lambda}=\max_{\lambda}\Bigg\{\frac{1}{n} \sum_{k} \max_{y} \Bigg\{\xi(k|\lambda,M^y)\Bigg\} \Bigg\}=p,
\ee
where $m_{\Lambda}$ denotes the maximum probability of distinguishing these response schemes when sampled from a uniform ensemble given the ontic state $\lambda$ and the maximization is over all ontic states $\lambda$. 
\end{Definition}
Now as the operational preparation might not allow fine-grained control on the ontic states, we have in general $m_\mathcal{O}\leqslant m_\lambda$. \\
Equipped with these measures we present our ontological principle for measurements: \\
\begin{Definition}
\textit{Bounded ontological distinctness for measurements $(BOD_M)$}: The set of response schemes $\mathcal{M}_\Lambda \equiv \{\{\xi(k|\lambda,M^y)\}\}^n_{y=1}$ underlying a set of $p$-distinguishable measurements $\mathcal{M}\equiv \{M^y\}^n_{y=1}$ is $p$-distinct, i.e., $m_{\Lambda} = m_{\mathcal{O}}=p$, yielding the ontological constraint,
\beq \label{BODM} 
\max_{\lambda}\Bigg\{\frac{1}{n} \sum_{k} \max_{y} \Bigg\{\xi(k|\lambda,M^y)\Bigg\} \Bigg\}=p.
\eeq
\end{Definition}
When formulated in this way, $BOD_M$ serves as a criterion for characterization of ontological models of a given operational theory. It is useful, at this point, to define a criterion for characterization of operational theories, namely, an operational theory or a fragment thereof is said to satisfy $BOD_M$, if there exists an ontological model which satisfies $BOD_M$ for all sets of prescribed operational measurements. Conversely, an operational theory or a fragment thereof is said to violate $BOD_M$ if there exists no ontological model which satisfies $BOD_M$ for all sets of prescribed operational measurements. \\
Unlike preparations, the Beltrametti-Bugajski $\psi$-complete ontological model adheres to $BOD_M$, implying that operational quantum theory satisfies $BOD_M$. This follows from the observations $(i)$ in this model the response scheme underlying a POVM element $M_k$ is simply $\xi(k|\lambda,M)=\Tr(\ketbra{\psi}{\psi}M_k)$, and $(ii)$ we have fine-grained control over the ontic state $\lambda$ which is simply the quantum state $\lambda\equiv \ket{\psi}$. This line of reasoning parallels the one in \cite{spekkens2005contextuality} for the impossibility of retrieving a quantum violation of measurement noncontextuality without additional assumptions.\\
However, we find that the ontological models wherein quantum projective measurements are associated with deterministic response schemes violate $BOD_M$. In-order to substantiate this, we make the following formal assumption:\\
\begin{Definition}
\textit{Outcome deterministic response schemes:} In an ontological model each operational (extremal) measurement $M$ is associated with a deterministic response scheme $\{\xi(z|\lambda,M)\}$ such that $\forall \ z, \lambda,M: \  \xi(z|\lambda,M) \in \{0,1\}$.  \\
\end{Definition}

Consider a pair of two-outcome of measurements $\{M^1,M^2\}$, then in an outcome deterministic ontological model,
\be \label{BODMPlusOD}
m_{\Lambda}=\frac{1}{2}\max_{\lambda}\Bigg\{ \sum_{k\in\{1,2\}} \max\Bigg\{\xi(k|\lambda,M^1),\xi(k|\lambda,M^2)\Bigg\} \Bigg\}
\ee
belongs to $\{\frac{1}{2},1\}.$
If the pair of measurements $\{M^1,M^2\}$ are neither completely indistinguishable nor perfectly distinguishable, i.e., $m_\mathcal{O}\in (\frac{1}{2},1)$, then the associated deterministic response schemes must be perfectly distinguishable $m_\Lambda=1$. For instance, the Kochen-Specker model prescribes outcome deterministic response schemes to two-outcome two-dimensional projective quantum measurements $\xi(1|\lambda,\{\ketbra{\phi}{\phi},\ketbra{\phi^\perp}{\phi^\perp}\})=H(\vec{n}_\phi \cdot\vec{\lambda})$ and $\xi(2|\lambda,\{\ketbra{\phi}{\phi},\ketbra{\phi^\perp}{\phi^\perp}\})=H(\vec{n}_{\phi^\perp}\cdot\vec{\lambda})$. These response schemes return outcome $1$ whenever the ontic-state lies in the Bloch hemisphere with central axis $\vec{n_\phi}$ and outcome $2$ otherwise. Now, consider a pair of two dimensional projective measurements $\{M^1_1\equiv \ketbra{\phi_1}{\phi_1},M^1_2\equiv \ketbra{\phi^\perp_1}{\phi^\perp_1}\}$ and $\{M^2_1\equiv \ketbra{\phi_2}{\phi_2},M^2_2\equiv \ketbra{\phi^\perp_2}{\phi^\perp_2}\}$. If $\phi_1\neq \phi_2$ then one can always find suitable ontic state $\vec{\lambda}$ such that $\xi(1|\lambda,\{\ketbra{\phi_1}{\phi_1},\ketbra{\phi_1^\perp}{\phi_1^\perp}\})=1$ and $\xi(1|\lambda,\{\ketbra{\phi_2}{\phi_2},\ketbra{\phi_2^\perp}{\phi_2^\perp}\})=0$ and similarly for the second outcome, deeming the two response schemes under consideration to be perfectly distinguishable with $m_\Lambda=1$.

\subsection{Transformations}
 \begin{figure}     
    \centering
    \includegraphics[scale=0.35]{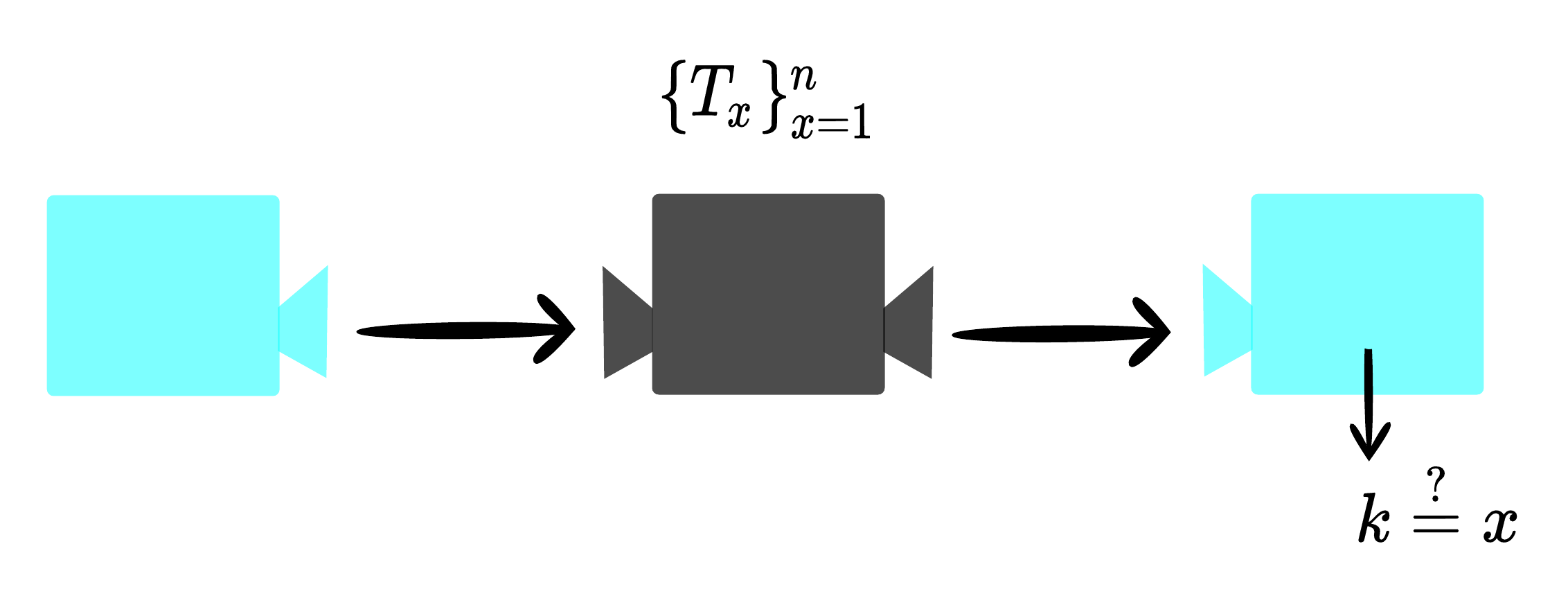}
    \caption{
    This figure denotes the set-up for distinguishing a given set of transformations $\{T_x\}^n_{x=1}$ (dark black device) out of a uniform ensemble. In each run of the experiment, a preparation undergoes an transformation $T_x$, followed by a measurement producing the outcome $k$. The (single shot) distinguishing probability in this case, is the probability of producing an output $k=x$ as a result of $n$-outcome measurement.
     This probability is maximized over all preparations and measurements (light cyan devices), yielding the distinguishability $t_\mathcal{O}$ of the constituent measurements. As a consequence of this maximization $t_\mathcal{O}$ is an intrinsic property of this set of transformations and forms the operational condition accompanying bounded ontological distinctness of transformations.}
     \label{fig:bodt}
\end{figure}
 
 Now, we invoke sets of transformations. We characterize these sets based on how well the constituent transformations can be operationally distinguished (FIG. \ref{fig:bodt}). 
\begin{Definition} \textit{$p$-distinguishable transformations:}
 In an operational theory, a set of transformations $\mathcal{T}\equiv \{T_x\}^n_{x=1}$ is called \textit{$p$-distinguishable} if the constituent transformations can be distinguished when sampled from a uniform ensemble with at-most $p$ probability, i.e.,
\be \label{pDistinguishT}
t_{\mathcal{O}}= \max_{P}\max_{M}\Bigg\{\frac{1}{n} \sum_{x}p(k=x|P,T_x,M)\Bigg\}= p,
\ee
where $t_{\mathcal{O}}$ denotes the maximum operational probability of distinguishing these transformations when sampled from a uniform ensemble, the initial maximization is over all possible preparations and the second maximization is over the set of all possible $n$-outcome  measurements in a given operational theory \footnote{Yet again we use the simplest single-shot prepare, transform and measure set-up for transformation distinguishability, without employing more intricate features of operational theories such as entanglement \cite{PhysRevLett.114.060404}.} (FIG. \ref{fig:bodt}). 
\end{Definition}
Observe that the maximum probability of distinguishing the transformations $t_{\mathcal{O}}$, is maximized over all possible preparations and measurements available in an operational theory. This maximization relieves $t_{\mathcal{O}}$ of its dependence on preparations and measurements, deeming it to be a suitable characterizing feature of sets of transformations. \\

Each set of operational transformations $\mathcal{T}$ is associated with a set of transition schemes $\mathcal{T}_\Lambda$. We characterize these sets based on how distinct the constituent transition schemes actually are. \\
\begin{Definition}
 \textit{$p$-distinct transition schemes:}
In an ontological model, a set of transition schemes $\mathcal{T}_\Lambda \equiv \{\{\gamma(\lambda'|\lambda,T_x)\}\}^n_{x=1}$ is termed $p$-distinct if the constituent transition schemes can be ontologically distinguished (upon having fine-grained control over preparation of the ontic state $\lambda$ and access to the post-transition ontic state $\lambda'$) when sampled from a uniform ensemble with at-most $p$ probability, i.e.,
\beq \label{pLambdaDistinguishT}
t_{\Lambda} &=& \max_{\lambda}\Bigg\{\max_{\{\xi(k|\lambda')\}}\Bigg\{\frac{1}{n} \sum_{x}\int_{\Lambda}\gamma(\lambda'|\lambda,T_x)\xi(k=x|\lambda')d\lambda' \Bigg\}\Bigg\} \nonumber  \\
&=& p ,
\eeq
where $t_{\Lambda}$ denotes the maximum probability of distinguishing these transition schemes when sampled from a uniform ensemble given access to initial ontic state $\lambda$, $\Lambda$ represents the ontic state space, $\{\gamma(\lambda'|\lambda,T_x)\}$ is the transition scheme associated the transformation $T_x$ and the maximization is over all ontic states $\lambda$ as well as all valid response schemes $\{\xi(k|\lambda)\}$ which satisfy \textit{positivity} and \textit{completeness}. 
\end{Definition}
Yet again, following from the observation that the response schemes are so constrained form a convex polytope, we can readily solve the second maximization which yields the following succinct expression of ontological distinctness,
\be \label{pLambdaDistinguishT2}
t_{\Lambda} = \max_{\lambda}\Bigg\{\frac{1}{n} \int_{\Lambda}\max_{x}\Bigg\{\gamma(\lambda'|\lambda,T_x)\Bigg\}d\lambda' \Bigg\}.
\ee \\
Now we present our ontological principles for transformation: \\
\begin{Definition}
\textit{Bounded ontological distinctness for transformations ($BOD_T$):} The set of transition schemes $\mathcal{T}_\Lambda \equiv \{\{\gamma(\lambda'|\lambda,T_x)\}\}^n_{x=1}$ underlying a set of $p$-distinguishable transformations $\mathcal{T}\equiv \{T_x\}^n_{x=1}$ is $p$-distinct, i.e., $t_\Lambda=t_{\mathcal{O}}=p$, yielding the ontological constraint,
\be \label{BODT}
\max_{\lambda}\Bigg\{\frac{1}{n} \int_{\Lambda}\max_{x}\Bigg\{\gamma(\lambda'|\lambda,T_x)\Bigg\}d\lambda' \Bigg\}=p.
\ee
\end{Definition}
When formulated in this way, $BOD_T$ serves as a criterion for characterization of ontological models of a given operational theory. It is useful, at this point, to define a criterion for characterization of operational theories, namely, an operational theory or a fragment thereof is said to satisfy $BOD_T$, if there exists an ontological model which satisfies $BOD_T$ for all sets of prescribed operational transformations. Conversely, an operational theory or a fragment thereof is said to violate $BOD_T$ if there exists no ontological model which satisfies $BOD_T$ for all sets of prescribed operational transformations. \\
Yet again, in this form \eqref{BODT}, $BOD_T$ leads to some interesting consequences. We present one such consequence in the form of the following proposition.
\subsubsection{Bounded ontological distinctness of three transformations}
\begin{Proposition} \label{propositionBODT}
If an operational theory admits ontological models adhering to $BOD_T$, then a set of three $p$-distinguishable transformations $\mathcal{T}\equiv\{T_1,T_2,T_3\}$, are pairwise distinguishable with respect to the same initial preparation with the average probability of success being at-most $\frac{1+p}{2}$ , i.e.,
\be \label{BODineqT}
\frac{1}{3}\Bigg(\max_{P}\Bigg\{t^{1,2}_\mathcal{O}(P)+t^{2,3}_\mathcal{O}(P)+t^{3,1}_\mathcal{O}(P)\Bigg\}\Bigg)\leqslant \frac{1+p}{2},
\ee
where $t^{i,j}_{\mathcal{O}}(P)$ denotes the maximum probability of distinguishing the transformations $\{T_i,T_j\}$ out of a uniform ensemble given the initial preparation $P$, i.e., $t^{i,j}_\mathcal{O}(P)=\max_{M}\{\frac{1}{2} \sum_{x\in \{i,j\}}p(k=x|P,T_x,M)\}$.
\end{Proposition}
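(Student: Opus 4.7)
The plan is to mirror the two-step strategy used in the proof of Proposition \ref{propositionBOD}, replacing epistemic states $\mu(\lambda|P_x)$ with transition schemes $\gamma(\lambda'|\lambda,T_x)$ and handling the extra integration variable and the preparation-dependence carefully. First, I would work pointwise in the initial ontic state $\lambda$: for each $\lambda$, the function $\lambda' \mapsto \gamma(\lambda'|\lambda,T_x)$ is a normalized probability density on $\Lambda$, so it plays exactly the role that $\mu(\lambda|P_x)$ did in the preparation proof. Writing
\begin{equation}
t^{i,j}_\Lambda(\lambda) \;=\; \tfrac{1}{2}\int_\Lambda \max_{x\in\{i,j\}}\bigl\{\gamma(\lambda'|\lambda,T_x)\bigr\}\,d\lambda'
\end{equation}
and $t_\Lambda(\lambda) = \tfrac{1}{3}\int_\Lambda \max_x \gamma(\lambda'|\lambda,T_x)\,d\lambda'$, I would invoke verbatim the algebraic identities used before, namely $\max\{a,b,c\}=a+b+c-\min\{a+b,b+c,c+a\}$, $\max\{a,b\}=a+b-\min\{a,b\}$, and $\min\{a,b\}+\min\{b,c\}+\min\{c,a\}=\min\{a+b,b+c,c+a\}+\min\{a,b,c\}$, to obtain the pointwise relation
\begin{equation}
\tfrac{1}{3}\bigl(t^{1,2}_\Lambda(\lambda)+t^{2,3}_\Lambda(\lambda)+t^{3,1}_\Lambda(\lambda)\bigr) \;\leqslant\; \tfrac{1+t_\Lambda(\lambda)}{2}.
\end{equation}

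Second, I would lift this ontological inequality to the operational one. For an arbitrary initial preparation $P$, substituting \eqref{pPM-ontic}-style expressions into the operational distinguishability gives
\begin{equation}
t^{i,j}_\mathcal{O}(P) \;\leqslant\; \int_\Lambda \mu(\lambda|P)\, t^{i,j}_\Lambda(\lambda)\, d\lambda,
\end{equation}
since the response schemes associated to operational post-transition measurements form a subset of all valid response schemes $\{\xi(k|\lambda')\}$, and the latter are optimized in $t^{i,j}_\Lambda(\lambda)$. Averaging the pointwise bound against $\mu(\lambda|P)$, using positivity of $t_\Lambda(\lambda)$, and then bounding $\int \mu(\lambda|P) t_\Lambda(\lambda) d\lambda \leqslant \max_\lambda t_\Lambda(\lambda) = t_\Lambda$ gives
\begin{equation}
\tfrac{1}{3}\bigl(t^{1,2}_\mathcal{O}(P)+t^{2,3}_\mathcal{O}(P)+t^{3,1}_\mathcal{O}(P)\bigr) \;\leqslant\; \tfrac{1+t_\Lambda}{2}.
\end{equation}
As the right-hand side is independent of $P$, the same bound holds after taking $\max_P$ on the left. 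Finally, invoking $BOD_T$ in the form $t_\Lambda = t_\mathcal{O} = p$ yields the claimed inequality \eqref{BODineqT}.

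The only step that requires some care is the second one, because the operational distinguishability of transformations is defined with a $\max_P$ whereas the ontological distinctness is defined with a $\max_\lambda$, and an arbitrary operational preparation need not be concentrated on the optimal ontic state for the three-way distinction. The key observation that makes this go through cleanly is that the pointwise ontological inequality holds for \emph{every} $\lambda$, so averaging it with any probability density $\mu(\lambda|P)$ and then taking the $\max$ over $\lambda$ on the resulting three-way term gives a bound uniform in $P$; the outer $\max_P$ is then harmless. I expect this preparation-averaging argument to be the main (though modest) obstacle; the combinatorial $\min/\max$ identities themselves carry over unchanged from the preparation case once the roles of $(\lambda,\mu(\lambda|P_x))$ are taken over by $(\lambda',\gamma(\lambda'|\lambda,T_x))$.
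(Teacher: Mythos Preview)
Your proposal is correct and follows essentially the same route as the paper: both establish the pointwise ontological inequality via the same $\min/\max$ identities on the densities $\gamma(\lambda'|\lambda,T_x)$, then lift to the operational level and invoke $BOD_T$. Your treatment of the lifting step (averaging the pointwise bound against $\mu(\lambda|P)$ and then bounding by $\max_\lambda t_\Lambda(\lambda)$) is in fact slightly more explicit than the paper's, which simply asserts $\tfrac{1}{3}\max_P\{\cdots\}\leqslant \tfrac{1}{3}\max_\lambda\{\cdots\}$ on the grounds that operational preparations need not allow fine-grained control over $\lambda$.
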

\begin{proof}
We follow roughly the same steps as in the proof of Proposition \ref{propositionBOD}. The maximum probability of distinguishing a set of three transition schemes $\mathcal{T}_\Lambda\equiv\{\{\gamma(\lambda'|T_1,\lambda)\},\{\gamma(\lambda'|T_2,\lambda)\},\{\gamma(\lambda'|T_3,\lambda)\}\}$ upon having fine grained control over the initial ontic state $\lambda$ as well as complete access to post-transition ontic state $\lambda'$ has the expression,
\begin{widetext}
\beq \label{pLambdaDistinguish3T}\nonumber
&t_{\Lambda} & =  \frac{1}{3}\max_{\lambda}\Bigg\{\int_{\Lambda}\max_{x}\Bigg\{\gamma(\lambda'|\lambda,T_x)\Bigg\}d\lambda' \Bigg\} \\ 
& & = 1-\frac{1}{3} \min_\lambda\int_\Lambda\Bigg\{\min\bigg\{\gamma(\lambda'|T_1,\lambda)+\gamma(\lambda'|T_2,\lambda),\gamma(\lambda'|T_2,\lambda)+\gamma(\lambda'|T_3,\lambda),\gamma(\lambda'|T_3,\lambda)+\gamma(\lambda'|T_1,\lambda)\bigg\}
\Bigg\}d\lambda',
\eeq
Similarly, the expression for maximal average pairwise ontological distinctness with respect to same initial ontic state is,
\beq \label{preBODineqT1} \nonumber
&& \frac{1}{3}\max_\lambda\bigg\{t^{1,2}_{\Lambda}(\lambda)+t^{2,3}_{\Lambda}(\lambda)+t^{3,1}_{\Lambda}(\lambda)\bigg\} \\
&=&  1- \frac{1}{6}\min_\lambda \Bigg\{\int_\Lambda \Bigg( \min\bigg\{\gamma(\lambda'|\lambda,T_1),\gamma(\lambda'|\lambda,T_2)\bigg\} + \min\bigg\{\gamma(\lambda'|\lambda,T_2),\gamma(\lambda'|\lambda,T_3)\bigg\}+\min\bigg\{\gamma(\lambda'|\lambda,T_3),\gamma(\lambda'|\lambda,T_1)\bigg\}\Bigg) d\lambda' \Bigg\} \nonumber \\ \nonumber
& \leqslant&  1- \frac{1}{6}\min_\lambda\int_\Lambda\Bigg\{\min\bigg\{\gamma(\lambda'|\lambda,T_1)+\gamma(\lambda'|\lambda,T_2),\gamma(\lambda'|\lambda,T_2)+\gamma(\lambda'|\lambda,T_3), \gamma(\lambda'|\lambda,T_3)+\gamma(\lambda'|\lambda,T_1)\bigg\}\Bigg\}d\lambda' \\
&=& \frac{1+t_\Lambda}{2}.
\eeq
\end{widetext}
As operational preparations might not allow fine-grained control over the initial ontic-state and operational measurements might not reveal the post-transition ontic state completely, we have in general that $\frac{1}{3}(\max_{P}\{t^{1,2}_\mathcal{O}(P)+t^{2,3}_\mathcal{O}(P)+t^{3,1}_\mathcal{O}(P)\})\leqslant \frac{1}{3}\max_\lambda\{t^{1,2}_{\Lambda}(\lambda)+t^{2,3}_{\Lambda}(\lambda)+t^{3,1}_{\Lambda}(\lambda)\}$. This along with \eqref{preBODineqT1} yields,
\be \nonumber \label{preBODineqT2}
\frac{1}{3}\Bigg(\max_{P}\Bigg\{t^{1,2}_\mathcal{O}(P)+t^{2,3}_\mathcal{O}(P)+t^{3,1}_\mathcal{O}(P)\Bigg\}\Bigg)  \leqslant \frac{1+t_\Lambda}{2}.
\ee
Finally,  the desired thesis follows from $BOD_T$ which implies $t_\Lambda=t_\mathcal{O}=p$. 
\end{proof}
\subsubsection{Excess ontological distinctness of three quantum transformations} \label{markme3}
In order to demonstrate quantum violation of Proposition \ref{propositionBODT}, we consider three qubit unitary transformations $U_1\equiv \mathbb{I}$,$U_2\equiv \mathcal{R}(\frac{2\pi}{3})$ and $U_3\equiv \mathcal{R}(\frac{4\pi}{3})$ where $\mathcal{R}(\theta)$ symbolizes a rotation of angle $\theta$ in the x-z plane of the Bloch sphere. It is easy to see that for any initial pure state preparation the post transformations states and consequently the transformations themselves are $\frac{2}{3}\approx 0.667$-distinguishable i.e. $t_Q=\frac{2}{3}$. Furthermore, for any initial pure state preparation, the post transformation states are pair-wise distinguishability with average probability of success being $\frac{1}{3}(\max_{P}\{t^{1,2}_\mathcal{O}(P)+t^{2,3}_\mathcal{O}(P)+t^{3,1}_\mathcal{O}(P)\})=\frac{1}{2}(1+ \frac{\sqrt{3}}{2})\approx 0.933$, which is a violation of Proposition \ref{propositionBODT}, thereby demonstrating excess ontological distinctness of quantum transformations.

\section{Excess ontological distinctness powers quantum advantage in communication tasks}
In this section, we lay down a general framework for communication tasks fuelled by excess ontological distinctness. We consider bipartite one-way communication tasks wherein Alice receives an input $x$ based on a prior probability distribution $p(x)$, similarly Bob receives an input $y$ according to a prior probability distribution $p(y)$ and produces an output $z$. They repeat the task for several rounds so as to gather frequency statistics in the form of conditional probabilities $p(z|x,y)$. Their goal is to maximize an associated success metric (figure of merit) $succ_{\mathcal{O}}$ which has the following generic form, 
\be \label{succMetric}
succ_\mathcal{O} = \sum_{x,y,z} c(x,y,z)p(z|x,y),
\ee
where the coefficients $c(x,y,z)$ are positive reals. Observe that, as the success metric is a linear function of the conditional probabilities $p(z|x,y)$, the prior probability distributions $p(x),p(y)$ are of no consequence, and we can take the inputs $x,y$ to be uniformly distributed, without loss of generality.
In general, Alice and Bob can employ various strategies to maximize the figure of merit, however, here we only consider the simplest prepare and measure protocols, wherein Alice's encodes her input onto an operational preparation $P_x$ and Bob performs an operational measurement $M^y$ to produce an outcome $z$ such that the operational predictions and resulting statistics are of the form $p(z|x,y)=p(z|P_x,M^y)$. Unlike the communication complexity problems, in these tasks we do not restrict the dimension of the communicated system, instead we constrain the distinguishability of Alice's preparations. Below we describe two different classes of communication tasks characterized by distinct distinguishability constraints (see FIG. \ref{ComTask}).
\begin{figure}     
    \centering
    \includegraphics[width=\linewidth]{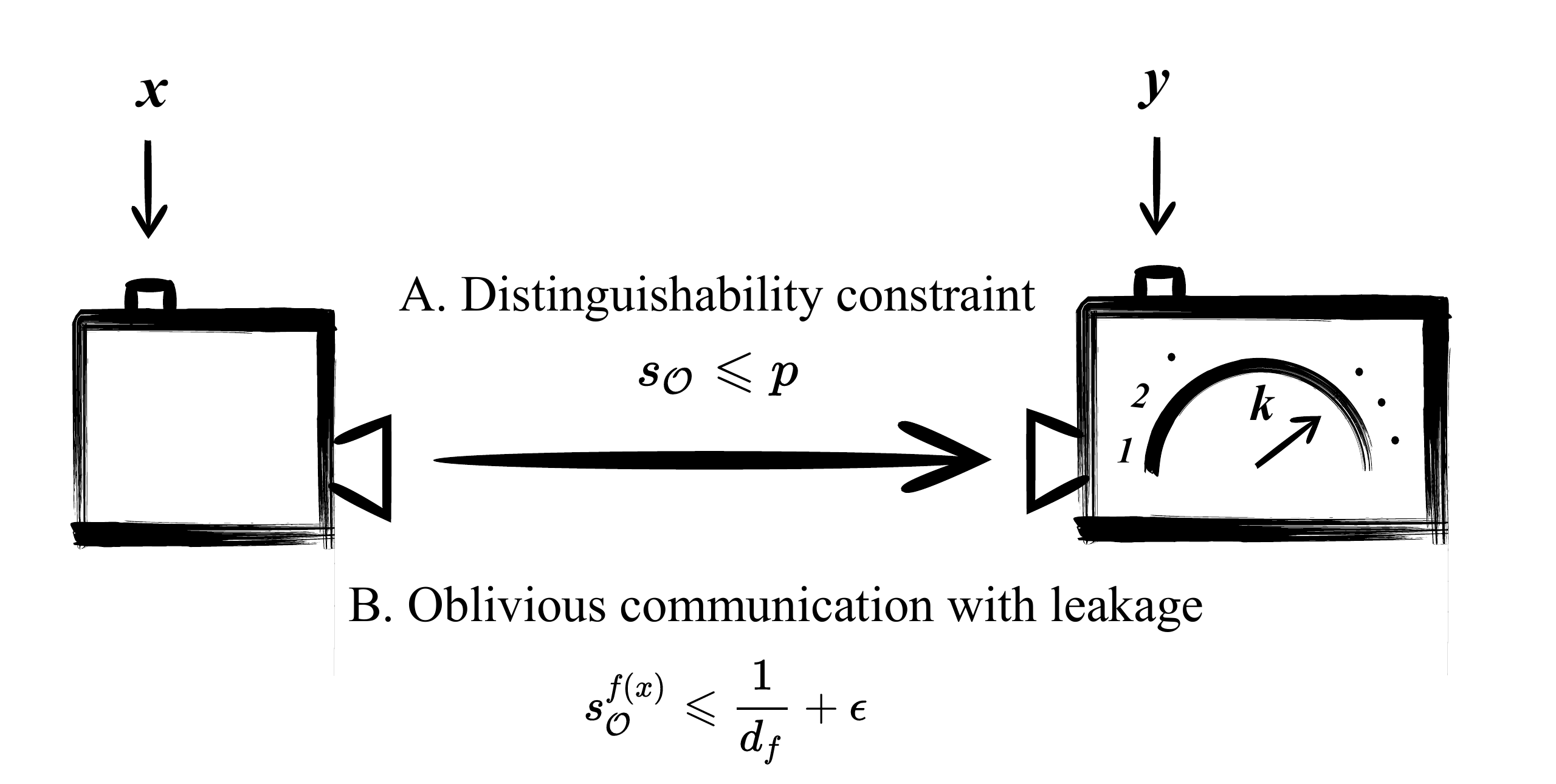}
    \caption{\label{ComTask}
    The figure displays the set-up of generic bipartite communication tasks. In each run of the task, the sender (Alice) encodes her input $x$ onto a preparation and transmits it to the receiver (Bob). Bob performs a measurement based on his input $y$ and produces and output $k$. Their aim to maximize a success metric $succ_\mathcal{O}$ associated with the task which has the form \eqref{succMetric}. There is no constraint on the amount of communication, however, we constrain the communication in two ways: A. bounding the distinguishability of the sender's preparations, and B. bounding the distinguishability of sender's preparations associated with different values of a function (oblivious function) of the sender's input $f(x)$. }
    \label{fig:ct}
\end{figure}

\subsection{Constraining the distinguishability of sender's preparations}
 First, we bound the probability of distinguishing Alice preparations out of a uniform ensemble. Consequently, the communication channel is constrained such that Alice may only transmit information encoded on $p$-distinguishable preparations \eqref{pDistinguish}. This in-turn imposes non-trivial constrains on the performance of different operational theories in communication tasks.\\
 For instance, consider a task wherein Alice receives a uniformly distributed trit $x\in\{1,2,3\}$ as her input, while Bob receives a uniformly distributed bit $y\in\{1,2\}$ as his input produces an output bit $z\in\{1,2\}$. Their aim to maximize the associated figure of merit \eqref{succMetric} with coefficients,
\be \label{commTask1}
c(x,y,z)= \begin{dcases}
1 \ & \text{if } x\in\{1,2\},y=1,z=1, \\
1 \ & \text{if } x=3,y=1,z=2, \\
1 \ & \text{if } x=1,y=2,z=1, \\
1 \ & \text{if } x=2,y=2,z=2, \\
0 \ & \text{otherwise.}
\end{dcases}
\ee
If the involved parties employ resources pertaining to theories that satisfy $BOD_P$, the operational $p$-distinguishability constraint yields the following upper bound on the success metric \eqref{commTask1} of this communication task,
\begin{Proposition} \label{propositionBOD2} If an operational theory admits ontological models adhering to the principle of \textit{bounded ontological distinctness}, then for a set of three at-most $p$-distinguishable preparations $\mathcal{P}\equiv\{P_1,P_2,P_3\}$, the value of the success metric \eqref{commTask1} is upper bounded as follows,
\beq \label{BODineq2} \nonumber
 succ_\mathcal{O} & =&  p(z=1|x=1,y=1)+p(z=1|x=1,y=2) \nonumber \\
 && +p(z=1|x=2,y=1)  + p(z=2|x=2,y=2) \nonumber \\ \nonumber
 && +p(z=2|x=3,y=1) \\
& \leqslant & 2+3p.
\eeq   
\end{Proposition}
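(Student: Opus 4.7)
The plan is to translate $succ_\mathcal{O}$ into an integral over the ontic state space and then maximize pointwise over the two measurements' response schemes, reducing the task to a pure inequality about the underlying epistemic states. Writing $p(z|P_x,M^y)=\int_\Lambda \mu(\lambda|P_x)\,\xi(z|\lambda,M^y)\,d\lambda$ and abbreviating $\mu_x\equiv \mu(\lambda|P_x)$, the five contributions in \eqref{BODineq2} group into a $y=1$ part with pointwise integrand $(\mu_1+\mu_2)\xi(1|\lambda,M^1)+\mu_3\,\xi(2|\lambda,M^1)$ and a $y=2$ part with pointwise integrand $\mu_1\,\xi(1|\lambda,M^2)+\mu_2\,\xi(2|\lambda,M^2)$. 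Since completeness decouples across $y$ and each integrand is linear in $\xi$, upper-bounding by the pointwise maximum over all valid two-outcome response schemes yields
\begin{equation*}
succ_\mathcal{O}\;\leqslant\;\int_\Lambda\Bigl[\max\{\mu_1+\mu_2,\mu_3\}+\max\{\mu_1,\mu_2\}\Bigr]d\lambda.
\end{equation*}

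Next, I would apply the identity $\max\{a,b\}=a+b-\min\{a,b\}$ to both terms and use $\int_\Lambda \mu_x\,d\lambda=1$ to rewrite the right-hand side as
\begin{equation*}
5-\int_\Lambda\Bigl[\min\{\mu_1+\mu_2,\mu_3\}+\min\{\mu_1,\mu_2\}\Bigr]d\lambda.
\end{equation*}
The crux of the argument, and the step I expect to be the real obstacle, is the pointwise combinatorial inequality
\begin{equation*}
\min\{a+b,c\}+\min\{a,b\}\;\geqslant\;\min\{a+b,b+c,c+a\}
\end{equation*}
for nonnegative reals $a,b,c$, which can be dispatched by exhausting the orderings of $\{a,b,c\}$ together with the dichotomy of whether $c$ exceeds $a+b$. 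Integrating this inequality with $(a,b,c)=(\mu_1,\mu_2,\mu_3)$ gives a lower bound on the subtracted integral in terms of $\int_\Lambda \min\{\mu_1+\mu_2,\mu_2+\mu_3,\mu_3+\mu_1\}\,d\lambda$.

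Finally, invoking the same identity used in the proof of Proposition \ref{propositionBOD}, namely $\max\{\mu_1,\mu_2,\mu_3\}=\mu_1+\mu_2+\mu_3-\min\{\mu_1+\mu_2,\mu_2+\mu_3,\mu_3+\mu_1\}$, the principle $BOD_P$ together with the hypothesis that $\{P_1,P_2,P_3\}$ are at-most $p$-distinguishable produces $\int_\Lambda \max_x\{\mu_x\}\,d\lambda=3s_\Lambda=3s_\mathcal{O}\leqslant 3p$, hence $\int_\Lambda \min\{\mu_1+\mu_2,\mu_2+\mu_3,\mu_3+\mu_1\}\,d\lambda\geqslant 3(1-p)$. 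Chaining with the combinatorial step gives $\int_\Lambda[\min\{\mu_1+\mu_2,\mu_3\}+\min\{\mu_1,\mu_2\}]\,d\lambda\geqslant 3(1-p)$, and substituting back delivers the desired bound $succ_\mathcal{O}\leqslant 5-3(1-p)=2+3p$.
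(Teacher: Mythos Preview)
Your proof is correct and follows essentially the same approach as the paper: both upper-bound $succ_\mathcal{O}$ by the pointwise-maximized ontological quantity $\int_\Lambda[\max\{\mu_1+\mu_2,\mu_3\}+\max\{\mu_1,\mu_2\}]\,d\lambda$, convert to minima via $\max\{a,b\}=a+b-\min\{a,b\}$, invoke the same combinatorial inequality $\min\{a+b,c\}+\min\{a,b\}\geqslant\min\{a+b,b+c,c+a\}$, and close with the expression \eqref{pLambdaDistinguish3} for $s_\Lambda$ together with $BOD_P$. Your handling of ``at-most $p$-distinguishable'' as $s_\Lambda=s_\mathcal{O}\leqslant p$ is in fact slightly more careful than the paper's phrasing.
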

\begin{proof}
In an ontological model the maximum value of the operational success metric has the expression,
\begin{widetext}
\beq \nonumber
& succ_\mathcal{O}= &\max_M\Bigg\{\int_\Lambda \bigg(\mu(\lambda|P_1)+\mu(\lambda|P_2)\bigg)\xi(k=1|\lambda,M)+\mu(\lambda|P_3)\xi(k=2|\lambda,M) d\lambda\Bigg\} \\
& & + \max_M\Bigg\{\int_\Lambda \mu(\lambda|P_1)\xi(k=1|\lambda,M)+\mu(\lambda|P_2)\xi(k=2|\lambda,M) d\lambda\Bigg\}.
\eeq
This equation facilitates the alternative description of communication task under consideration as a distinguishability task, wherein Bob upon receiving $y=1$ tries to discern the preparations $\{P_1,P_2\}$ from $P_3$, and upon receiving $y=2$ attempts to distinguish $P_1$ from $P_2$. Consequently, upon having access to the ontic state $\lambda$ and in light of the observations regarding response schemes employed before \eqref{pLambdaDistinguishSuccinct} the maximum value of the ontological success metric has the expression,
\beq \label{BODineq2support}
\nonumber
&succ_\Lambda &=\int_\Lambda \Bigg( \max\bigg\{\mu(\lambda|P_1)+\mu(\lambda|P_2),\mu(\lambda|P_3)\bigg\}+\max\bigg\{\mu(\lambda|P_1),\mu(\lambda|P_2)\bigg\}\Bigg) d\lambda, \\
&            &=5-\int_\Lambda \Bigg( \min\bigg\{\mu(\lambda|P_1)+\mu(\lambda|P_2),\mu(\lambda|P_3)\bigg\}+\min\bigg\{\mu(\lambda|P_1),\mu(\lambda|P_2)\bigg\}\Bigg) d\lambda,
\eeq
where the second equality follows from the fact $\forall \ (a\geqslant0,b\geqslant0): \max\{a,b\}=a+b-\min\{a,b\}$. \\
Moving on, we invoke \eqref{pLambdaDistinguish3}, which yields,
\beq \label{BODineq2support1} 
&\int_\Lambda \min\bigg\{\mu(\lambda|P_1)+\mu(\lambda|P_2),\mu(\lambda|P_2)+\mu(\lambda|P_3),\mu(\lambda|P_3)+\mu(\lambda|P_1)\bigg\}
d\lambda &\geqslant 3(1-s_\Lambda)
\eeq
\end{widetext}
where $s_\Lambda$ is the ontological distinctness for the three preparations consideration. Observe that as the operational measurements might not reveal the ontic-state $\lambda$ precisely, we have in general $succ_{\Lambda}\geqslant succ_{\mathcal{O}}$. Now employing the fact that $\forall \ (a\geqslant0,b\geqslant0,c\geqslant0): \min\{a+b,c\}+\min\{a,b\}\geqslant \min\{a+b,b+c,c+a\}$, and plugging \eqref{BODineq2support1} into \eqref{BODineq2support} we arrive at,
\beq
succ_{\mathcal{O}}\leqslant 2+3s_\Lambda.
\eeq
The desired thesis follows from $BOD_P$ which implies $s_\Lambda=s_{\mathcal{O}}=p$. 
\end{proof}  
Moreover, even the average pair-wise distinguishability of three preparations considered in Proposition \ref{propositionBOD} can be cast as the success metric of such a communication task wherein Alice and Bob receive trits as inputs $x,y\in\{1,2,3\}$, Bob outputs a bit $z\in\{1,2\}$ and the success metric has the following coefficients:
\be \label{comtask2}
c(x,y,z)=
\begin{dcases}
\frac{1}{6} \ & \text{if } z=x,x\in\{1,2\},y=1 \\
\frac{1}{6} \ & \text{if } z=x,x\in\{2,3\},y=2, \\
\frac{1}{6} \ & \text{if } z=x,x\in\{3,1\},y=3, \\
0 \ & \text{otherwise.}
\end{dcases}
\ee
Consequently, Proposition \ref{propositionBOD} yields a bound on the success metric of this task $succ_{\mathcal{O}}\leq \frac{1+p}{2}$ for theories that satisfy $BOD_P$.

Operationally, Alice and Bob can employ classical or quantum resources and protocols to aid them in these tasks. In general, one can define operational classical communication models as,
\begin{Definition} \label{Classicality} A \textit{classical communication models} for a prepare and measure scenario, is one wherein there exists a function $\Omega(\lambda)=\omega$ such that,
\begin{enumerate}
    \item $\omega$ is a sufficient operational statistic for the ontic state $\lambda$, i.e. $\forall \ \lambda,k,M \ : \ p(k|\omega=\Omega(\lambda),M)=p(k|\lambda,M)$, 
    \item there exists operational measurements which allow complete access to $\omega$.
\end{enumerate}
\end{Definition}
A relevant example of such models is $d$-levelled classical communication where $\omega\in\{1,\ldots,d\}$. A $d$-levelled classical communication protocol entails Alice encoding her input $x\in\{1,\ldots,n\}$ onto a classical message $\omega\in\{1,\ldots,d\}$ based on an encoding scheme i.e. condition probability distributions of the form $p_{\mathcal{E}}(\omega|x)$, Bob upon receiving the message employs a decoding scheme i.e. conditional probability distributions of the form $\{p_{\mathcal{D}}(z|\omega,y)\}$ to produce the output $z$. The operational predictions can be summarized as, 
\be \label{opPredictions}
p(z|x,y)=\sum_\omega p_{\mathcal{E}}(\omega|x)p_{\mathcal{D}}(z|\omega,y).
\ee
The operational $p$-distinguishability condition for preparations \eqref{pDistinguish} translates to the following $p$-distinguishability condition for encoding schemes,
\beq \label{classOpCond}
s_C=\frac{1}{n}\sum_\omega \max_x\bigg\{p_{\mathcal{E}}(\omega|x)\bigg\}\leqslant p. 
\eeq
In general, classical communication models and in particular for $d$-levelled classical communication satisfy $BOD_P$, as one can readily take $\omega$ as the ontic state, and consequently, the operational $p$-distinguishability conditions \eqref{pDistinguish} and \eqref{classOpCond} imply the ontological $p$-distinctness constraint \eqref{pLambdaDistinguish}. This argument substantiates the candidature of bounded ontological distinctness as a notion of classicality. Furthermore, this in-turn implies that for classical communication models including $d$-levelled classical communication constrained by \eqref{classOpCond}, the maximal value of the success metric is the same as for operational theories that satisfy $BOD_P$, i.e., $succ_C\leqslant 2+3p$ and $succ_C\leqslant \frac{1+p}{2}$ for the aforementioned communication tasks specified by \eqref{commTask1} and \eqref{comtask2}. At point this it is important to note that we allow Alice and Bob unrestricted access to pre-shared randomness, however we observe that,
\begin{Observation}
Shared randomness does not yield any additional advantage to the performance of $d$-levelled classical communication strategies in the communication tasks constrained by the operational condition \eqref{classOpCond} (evaluated in presence of shared randomness).
\end{Observation}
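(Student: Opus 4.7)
The plan is to exploit the linearity of both the success metric and the $p$-distinguishability condition (when the distinguisher is granted access to the shared randomness) in the shared variable, and then invoke the affine per-strategy bound from Proposition \ref{propositionBOD2}. First I would parametrize a $d$-level classical protocol with shared randomness by a probability distribution $q(\lambda_s)$ over a shared variable $\lambda_s$, together with conditional encoding schemes $p_{\mathcal{E}}(\omega|x,\lambda_s)$ and decoding schemes $p_{\mathcal{D}}(z|\omega,y,\lambda_s)$. The operational statistics then factor as $p(z|x,y)=\sum_{\lambda_s}q(\lambda_s)\sum_{\omega}p_{\mathcal{E}}(\omega|x,\lambda_s)p_{\mathcal{D}}(z|\omega,y,\lambda_s)$, so that both the success metric and the operational preparation distinguishability decompose as expectations over $\lambda_s$, namely $succ_C=\sum_{\lambda_s}q(\lambda_s)\,succ_C^{\lambda_s}$ and $s_C=\sum_{\lambda_s}q(\lambda_s)\,s_C^{\lambda_s}$, where the superscript indexes the deterministic (in $\lambda_s$) sub-protocol.

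Next, for each fixed value of $\lambda_s$ the sub-protocol is an ordinary unshared $d$-level classical strategy, and as argued earlier in the section it satisfies $BOD_P$; Proposition \ref{propositionBOD2} therefore supplies the affine bound $succ_C^{\lambda_s}\leqslant 2+3\,s_C^{\lambda_s}$ for the task \eqref{commTask1}. Averaging this pointwise inequality against $q(\lambda_s)$ and using the two decompositions above yields $succ_C\leqslant 2+3\,s_C\leqslant 2+3p$, which coincides with the bound obtained in the absence of shared randomness. Precisely the same averaging argument applies verbatim to any task of the form \eqref{succMetric} whose $BOD_P$-derived bound happens to be affine in the distinguishability; in particular it reproduces the bound $succ_C\leqslant\frac{1+p}{2}$ associated with the task \eqref{comtask2}.

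The main subtlety is conceptual rather than technical: one has to fix the correct notion of preparation distinguishability in the presence of shared randomness. Granting the distinguisher access to $\lambda_s$ (the natural choice, as $\lambda_s$ is a resource available to both parties) is precisely what makes the operational distinguishability decompose linearly as the expectation of the per-$\lambda_s$ distinguishabilities, which is exactly what the averaging argument requires. Once this interpretation is adopted, the observation is an immediate consequence of the affinity of the bound of Proposition \ref{propositionBOD2}: convex mixtures of deterministic classical strategies can only realize operational pairs $(succ_C,s_C)$ lying in the half-plane $succ_C\leqslant 2+3\,s_C$, so shared randomness cannot push the success metric beyond the deterministic optimum saturating $s_C=p$.
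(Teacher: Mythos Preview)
Your argument is correct, but it differs from the paper's. The paper does not invoke Proposition~\ref{propositionBOD2} or any affinity of the bound at all; instead it uses a one-line \emph{absorption} trick: since the dimension $d$ of the classical message is unrestricted, one simply redefines the message as the pair $\omega'=(\omega,r)$ and observes that $p(r)\,p_{\mathcal{E}}(\omega|x,r)=p_{\mathcal{E}}(\omega'|x)$ (because $r$ is independent of $x$). This turns every shared-randomness protocol into an ordinary unshared protocol with identical statistics, so both the success metric and the distinguishability constraint \eqref{classOpCond} carry over verbatim.

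The trade-off is this: the paper's absorption argument is fully general---it shows that the \emph{achievable sets} of pairs $(succ_C,s_C)$ with and without shared randomness coincide, for any figure of merit and any constraint derived from \eqref{classOpCond}, without ever needing to know the functional form of the $BOD_P$ bound. Your averaging argument, by contrast, works only for tasks whose $BOD_P$ bound is affine (or at least concave) in the distinguishability, which you correctly flag. On the other hand, your route makes transparent \emph{why} the affine shape of the bounds $2+3p$ and $\tfrac{1+p}{2}$ is exactly what is needed to survive convex mixing, and it would continue to apply even in settings where the message alphabet is bounded (so that absorption is unavailable), provided the per-strategy bound remains affine.
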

\begin{proof}
Let Alice and Bob share an arbitrary discrete random variable $r$ with an associated probability distribution $p(r)$ before the communication task begins. In this case, Alice encodes her input $x$ onto the $d$-levelled message $\omega$ based on the shared random variable $r$, according to the encoding scheme $p_{\mathcal{E}}(\omega|x,r)$. Similarly, upon receiving $\omega$, Bob produces his output $z$ based on $r$, according to a decoding scheme $p_{\mathcal{D}}(z|\omega,r,y)$. In this case, the operational predictions can be summarised as $p(z|x,y)=\sum_{\omega,r} p(r) p_{\mathcal{E}}(\omega|x,r)p_{\mathcal{D}}(z|\omega,r,y)$. Since there is no restriction on the dimension of the classical message, we can absorb the shared random variable into the classical message such that $\omega'=(\omega,r)$. As the pre-shared randomness is uncorrelated with Alice's input, we have $p(r)p_{\mathcal{E}}(\omega|x,r)=p_{\mathcal{E}}(\omega,r|x)$, which leads us to the expression for operational predictions $p(z|x,y)=\sum_{\omega'} p_{\mathcal{E}}(\omega'|x)p_{\mathcal{D}}(z|\omega',y)$, which is identical to \eqref{opPredictions}. 
In light of this modification, the operational condition \eqref{classOpCond} as well the operational consequences such as $succ_C\leqslant 2+3p$ and $succ_C\leqslant \frac{1+p}{2}$ remain unaltered, where $p$ is maximum probability of distinguishing the preparations in presence of shared randomness. 
\end{proof}
\label{markmeup2}
Alternatively, Alice and Bob can employ a quantum prepare and measure protocol to aid them in these communication tasks. A prepare and measure quantum protocol entails preparation and transmission of Alice's quantum state $\rho_x$ for her input $x$, followed by a measurement at Bob's end $\{M^y_z\}$ based on his input $y$. The operational predictions can be summarized as $p(z|x,y)=\Tr(\rho_xM^y_z)$. The operational $p$-distinguishability condition \eqref{pDistinguish} translates to the constraint \eqref{pQuantumDistinguish} for quantum communication. For the simple communication task described above consider the quantum protocol wherein Alice prepares the state $\rho_x=\frac{\mathbb{I}+\vec{n}_x\cdot \vec{\sigma}}{2}$ based on her input $x$, where $\vec{n_1}=[1,0,0]^T$, $\vec{n_2}=[0,1,0]^T$ and $\vec{n_3}=\frac{-\vec{n_1}-\vec{n_2}}{\sqrt{2}}$ (see FIG. \ref{PrepMeas4}). These preparations are $\frac{2}{3}\approx 0.667$-distinguishable, consequently in an operational theory which satisfies $BOD_P$ the success metric would be bounded as $succ_O \leqslant 4$. On the other hand, Bob's measures $M^y=\vec{s_y}\cdot \vec{\sigma}$ where $\vec{s_1}=\frac{\vec{n_1}+\vec{n_2}}{\sqrt{2}}$ and $\vec{s_2}=\frac{\vec{n_1}-\vec{n_2}}{\sqrt{2}}$. This protocol achieves $succ_Q=3+\sqrt{2}\approx 4.414$ which violates Proposition \ref{propositionBOD2}, thereby demonstrating excess ontological distinctness of quantum preparations and yielding an advantage over classical communication. As a consequence, the epistemic states underlying this set of $\frac{2}{3}\approx 0.667$-distinguishable preparations must be at-least $\frac{1+\sqrt{2}}{3}\approx 0.805$-distinct. Similarly, the quantum states and measurements that violate Proposition \ref{propositionBOD} can be cast as advantageous quantum prepare and measure protocols \footnote{For a given instance of the communication task, if the operational condition, i.e the maximum distinguishability (say $p$) of the preparations is pre-determined, Alice can employ any set of preparations $\{P_x\}$ as long as they are at-most $p$-distinguishable. On the other hand, for any set of preparations $\{P_x\}$ one can find the appropriate operational condition, i.e. their distinguishability. A subsequent violation of Proposition \ref{propositionBOD} or \ref{propositionBOD2} would then imply a quantum advantage in the associated communication task over all similarly constrained classical encoding schemes. Therefore, any instance of violation of Proposition \ref{propositionBOD} or \ref{propositionBOD2} can serve as an instance of an advantageous quantum communication protocol.}. In this way, quantum protocols siphon the seemingly inaccessible ``excess ontological distinctness" of quantum preparations to an advantage in communication tasks. Communication tasks with similar channel constraints have been considered in \cite{armin2019}.
\begin{figure}     
    \centering
    \includegraphics[width=0.4\textwidth]{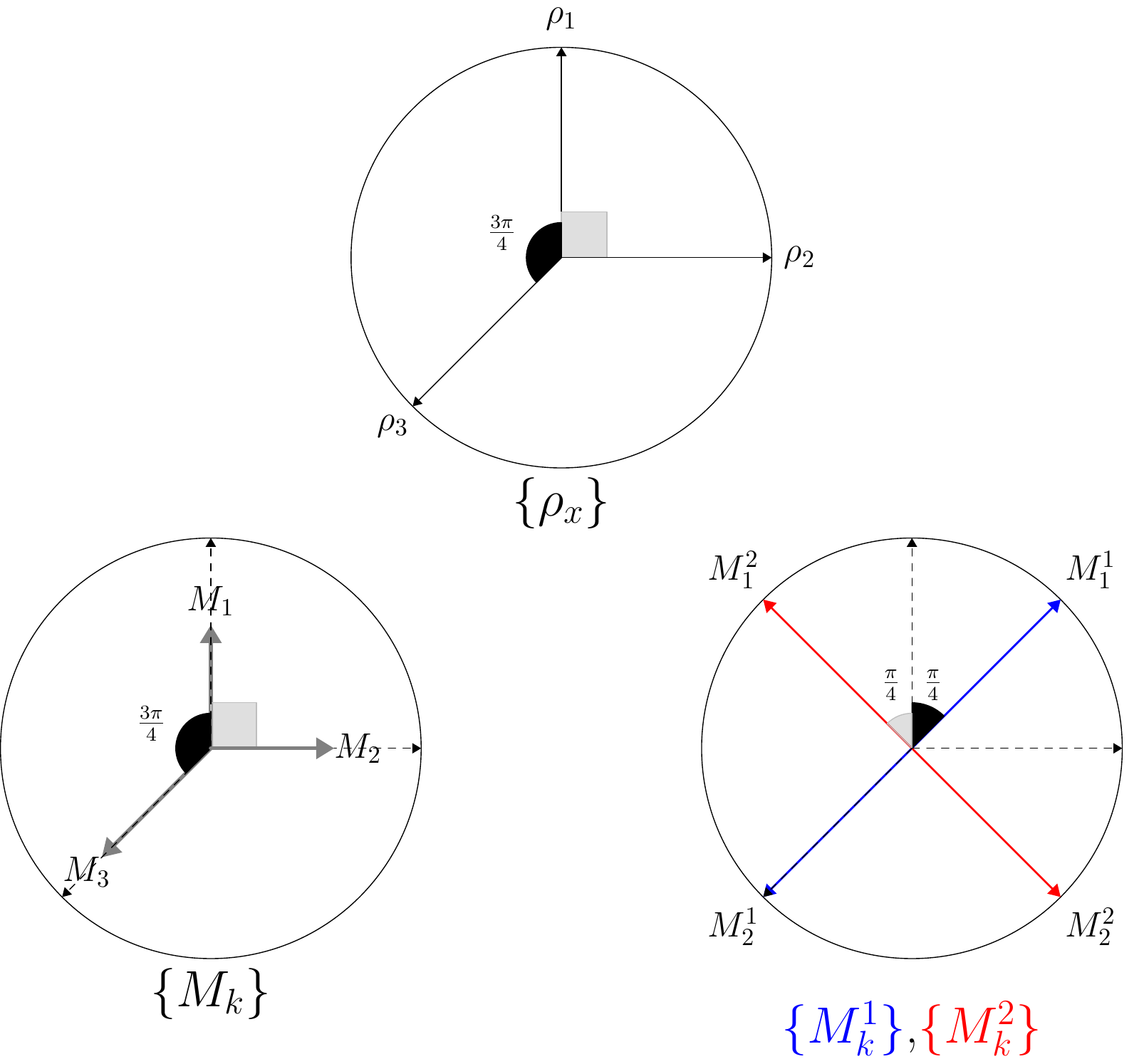}
    \caption{\label{PrepMeas4} This figure depicts the Bloch vectors in the x-y plane (x axis being vertical and y axis horizontal) of the Bloch sphere corresponding to the set of states $\{\rho_x=\frac{\mathbb{I}+\vec{n}_x\cdot \vec{\sigma}}{2}\}_{x\in\{1,2,3\}}$ (top) where $\vec{n_1}=[1,0,0]^T$, $\vec{n_2}=[0,1,0]^T$ and $\vec{n_3}=\frac{-\vec{n_1}-\vec{n_2}}{\sqrt{2}}$, the POVM $\{M_k\}_{k\in \{1,2,3\}}$ (bottom left) which optimally distinguishes these states such that $s_{Q}=\frac{2}{3}$, and the projective measurements $\{M^{1}_k\}_{k\in\{1,2\}},\{M^{2}_k\}_{k\in\{1,2\}}$ (bottom right) which violate Proposition \ref{propositionBOD2} such that $succ_{Q} = 3+\sqrt{2} \approx 4.414$. Note that the POVM elements $M_x= \alpha_x \left( \frac{\mathbb{I}+\vec{n}_x\cdot  \vec{\sigma}}{2} \right)$ where $\alpha_1 = \alpha_2 = \frac{\sqrt{2}}{1+\sqrt{2}}  , \alpha_3 = \frac{2}{1+\sqrt{2}}$ (bottom left) do not correspond to a vector in the Bloch sphere, instead we have employed a common heuristic depiction scheme entailing short thick gray arrows to denote the POVM elements, emphasize their directions and normalization \cite{kurzynski2006graphical}.}
\end{figure}



\subsection{Oblivious communication with bounded leakage}
Now we consider a cryptographically significant class of communication tasks referred to as oblivious communication tasks \cite{spekkens2009preparation,shp,PhysRevA.100.022108}. In such tasks, the sender Alice has a certain function of her inputs $f(x)\in \{1,\ldots,d_f\}$ (the oblivious function), the value of which is to be kept secret \footnote{Note that this is in contrast with the widely studied oblivious transfer tasks wherein the sender transfers one of potentially many pieces of information to a receiver, but remains oblivious as to what piece (if any) has been transferred.}. Here, we invoke a generalization of oblivious communication tasks wherein we tolerate certain amount of information leakage about the oblivious function $f(x)$. Specifically, the probability of distinguishing the preparations corresponding to different values of $f(x)$ out of a uniform ensemble is constrained such that,
\beq \label{opCond2} \nonumber
&s^{f(x)}_{\mathcal{O}} & :=  s^{f(x)=1,\ldots,f(x)=d_f}_{\mathcal{O}} \\ \nonumber
& &= {\scriptstyle \frac{1}{d_f}\max_{M}\Bigg\{\sum_{f(x)}\sum_x p(x|f(x))p(k=f(x)|P_x,M)\Bigg\}} \\
& & \leqslant \frac{1}{d_f}+\epsilon
\eeq

where $\epsilon\in (0,\frac{d_f-1}{d_f}]$ is the leakage parameter and, 
\be
p(x|f(x))=\frac{p(x)}{\sum_{x'|f(x')=f(x)}p(x')}.
\ee
This operational condition translates to the following ontological constraint in light of $BOD_P$ along with convexity of epistemic states,
\be \label{ontoCond2}
s^{f(x)}_{\Lambda}=\frac{1}{d_f}\int_{\Lambda}\max_{{f(x)}}\Bigg\{\sum_x p(x|f(x))\mu(\lambda|P_x)\Bigg\}d\lambda \leqslant \frac{1}{d_f}+\epsilon.
\ee
Observe that, the distinguishability inequality featured in Proposition \ref{propostionRhoEpistemic} can be readily posed as an instance of such a communication task. Specifically, consider an oblivious communication task wherein Alice receives a uniformly distributed pair of bits as her input $x=(a_1,a_2)$ where $a_1,a_2\in \{0,1\}$, $\forall \ a_1,a_2: \ p(a_1,a_2)=\frac{1}{4}$ and the oblivious function is simply the parity of these bits $f(x)=a_1\oplus a_2+1$. Bob on the other hand has to guess one of the Alice's input bits $z=a_y$ based on his own input bit $y\in\{1,2\}$. The coefficients of the success metric for this task are,
\be \label{ParityOblivious}
c(x,y,z)=
\begin{dcases}
    \frac{1}{8}, & \text{if } z=a_y, \\
    0, & \text{otherwise.}
\end{dcases}
\ee
This task is based on the well-known parity oblivious multiplexing task \cite{spekkens2009preparation}. Upon the relabelling, $P_1\equiv P_{a_1=0,a_2=0},P_2\equiv P_{a_1=1,a_2=1},P_3\equiv P_{a_1=0,a_2=1},P_4\equiv P_{a_1=1,a_2=0}$ along with $p=\frac{1}{2}+\epsilon$, it is easy to see that the inequality \eqref{rhoEpistemic} is equivalent to a bound on the success metric with coefficients \eqref{ParityOblivious}, i.e,. for $\epsilon$ leakage, and operational theories which satisfy $BOD_P$ along with convexity of epistemic states the success metric in this task is upper bounded as $succ_\mathcal{O} \leqslant \frac{3}{4} + \frac{\epsilon}{2}$.  \\
Moving on, the operational condition \eqref{opCond2} translates to the following constraint on $d$-levelled classical communication,
\be \label{ClassCond2}
s^{f(x)}_{C}=\frac{1}{d_f}\sum_{\omega}\max_{{f(x)}}\Bigg\{\sum_x p(x|f(x))p_{\mathcal{E}}(\omega|x)\Bigg\} \leqslant \frac{1}{d_f}+\epsilon.
\ee
Yet again, because the reasons stated above, for general classical communication models and in particular for $d$-levelled classical communication, the operational conditions \eqref{opCond2} and \eqref{ClassCond2} are equivalent to the ontological constrain \eqref{ontoCond2} \footnote{Observe that, in this case we did not make convexity of epistemic states an explicit requirement as the ontological model for general classical communication models (Definition \ref{Classicality}) adheres to convexity of epistemic states by default.}.  This in-turn ensures that the operational restrictions which hold for operational theories which satisfy $BOD_P$ and convexity of epistemic states also hold for classical communication models, specifically, for the aforementioned task specified by \eqref{ParityOblivious} $succ_C = succ_\Lambda \leqslant \frac{3}{4} + \frac{\epsilon}{2} $. Finally, employing appropriate relabelling, one can obtain an advantageous quantum prepare and measure protocols from the instances of violation of \eqref{rhoEpistemic} provided below Proposition \ref{propostionRhoEpistemic} \footnote{See related upcoming article(s) on the extent of preparation contextuality in imperfect parity oblivious communication task, by M. S. Leifer \textit{et al.} \cite{freda2018bounds}}. \\

\section{The implicate quantumness}
\label{implicateQuantumness}
In this section, we show how excess ontological distinctness is implicate in the explicate quantum departure from the well-known ontological notions of classicality. We go about this task by demonstrating that bounded ontological distinctness implies the other ontological notions of classicality as special cases so that the violation of the later implies excess ontological distinctness. \\
\textit{Generalized contextuality:}
First, we consider the notion of preparation noncontextuality. Based directly on the Leibniz's principle of ``identity of indiscernibles", preparation noncontextuality requires the epistemic states underlying two operationally equivalent preparations $P_1\equiv P_2$ to be identical, i.e. if $\forall \ k,M: p(k|P_1,M)=p(k|P_2,M)$ then preparation noncontextuality implies $\forall \ \lambda: \ \mu(\lambda|P_1)=\mu(\lambda|P_2)$. Clearly, preparation noncontextuality is a special case of $BOD_P$, i.e. whenever two given preparations are completely indistinguishable $s_\mathcal{O}=\frac{1}{2}$ then $BOD_P$ implies $\int_\Lambda\max\{\mu(\lambda|P_1),\mu(\lambda|P_2)\}d\lambda=1$, which only holds whenever $\forall \ \lambda: \ \mu(\lambda|P_1)=\mu(\lambda|P_2)$. Consequently, we have the implication,
\be \label{p-imp}
BOD_P \implies \text{preparation noncontextuality}.
\ee

Next, we consider the generalized notion of measurement noncontextuality which requires the response functions underlying two operationally equivalent measurement effects $[k|M^1] \equiv [k|M^2]$ to be identical, i.e. if $\forall \ P: \ p(k_1|P,M^1)=p(k_2|P,M^2)$ then measurement noncontextuality implies $\forall \ \lambda: \ \xi(k_1|\lambda,M^1)=\xi(k_2|\lambda,M^2)$. Consider two binary outcome measurements $\tilde{M^1},\tilde{M^2}$ which are basically coarse-grained versions of $M^1,M^2$, such that we retain the outcomes $k_1,k_2$ as their first outcomes, respectively, and relabel all other outcomes as their second outcomes. 
Now, we invoke the following natural property of ontological models,
\begin{Definition}
\textit{Coarse-graining of response functions:} If a measurement outcome $k$ is obtained by coarse-graining of two outcomes $k_1,k_2$, the response function corresponding to outcome $k$ is also coarse-graining of two response functions corresponding to outcomes $k_1,k_2$, i.e. $\forall \ \lambda: \ \xi(k|\lambda,\tilde{M})=\xi(k_1|\lambda,M)+\xi(k_2|\lambda,M)$, where $\tilde{M}$ is the coarse-grained version of $M$.
\end{Definition}
Therefore, from the definitions of $\tilde{M^1},\tilde{M^2}$ we have,
\beq \label{1k12k2}
& \forall \ \lambda: & \xi(1|\lambda,\tilde{M^1})=\xi(k_1|\lambda,{M^1}), \nonumber \\  
&& \xi(1|\lambda,\tilde{M^2})=\xi(k_2|\lambda,{M^2}).
\eeq  
The operational condition that $\forall \ P: \ p(k_1|P,M^1)=p(k_2|P,M^2)$ implies that the new measurements $\{\tilde{M^1},\tilde{M^2}\}$ are completely indistinguishable, i.e. for these measurements $m_\mathcal{O}=\frac{1}{2}$. In light of this observation and the consequent indistinguishability condition, $BOD_M$ implies, 
\beq \nonumber \label{inter}
 m_{\mathcal{O}} & = &   \frac{1}{2}\bigg(\max_\lambda\bigg\{\max\{\xi(1|\lambda,\tilde{M^1}),\xi(1|\lambda,\tilde{M^2})\} \\ \nonumber 
& & +\max\{1-\xi(1|\lambda,\tilde{M^1}),1-\xi(1|\lambda,\tilde{M^2})\}\bigg\}\bigg) \\
& = & \frac{1}{2}.
\eeq 
Using the fact that $\forall \ (a\geqslant0,b\geqslant0): \ \max\{a,b\}+\max\{1-a,1-b\} = 1+a+b-2\min\{a,b\}$, \eqref{inter} yields, 
\beq
&\forall \ \lambda: \ &  \xi(1|\lambda,\tilde{M^1})+\xi(1|\lambda,\tilde{M^2})  \nonumber \\
& & = 2\min\{\xi(1|\lambda,\tilde{M^1}),\xi(1|\lambda,\tilde{M^2})\} .
\eeq
Observe that this condition only holds when $\forall \ \lambda: \ \xi(1|\lambda,\tilde{M}^1)=\xi(1|\lambda,\tilde{M}^2)$, and subsequently \eqref{1k12k2} yields the following implication,
\be \label{m-imp}
BOD_M\implies \text{measurement noncontextuality}.
\ee
Finally, Spekkens' notion of transformation contextuality requires the transition schemes underlying a pair of operationally equivalent transformations $T_1\equiv T_2$ to be identical, i.e.
that if $\forall \ P,k,M:\ p(k|P,T_1,M)=p(k|P,T_2,M)$ then $\forall \ \lambda',\lambda:\ \gamma(\lambda'|\lambda,T_1)=\gamma(\lambda'|\lambda,T_2)$. Clearly, transformation noncontextuality is a special case of $BOD_T$, i.e. whenever two given transformations are completely indistinguishable $t_\mathcal{O}=\frac{1}{2}$ then $BOD_T$ implies $\max_{\lambda}\{ \int_{\Lambda}\max \{\gamma(\lambda'|\lambda,T_1),\gamma(\lambda'|\lambda,T_2)\}d\lambda' \}=1$, which only holds whenever $\forall \ \lambda',\lambda:\ \gamma(\lambda'|\lambda,T_1)=\gamma(\lambda'|\lambda,T_2)$. Consequently, we have the implication,
\be \label{t-imp}
BOD_T \implies \text{transformation noncontextuality}.
\ee
In general, universal generalized noncontextuality requires operationally equivalent pairs of preparations, measurements and transformations to have identical ontological counterparts, i.e. universal noncontextuality $\equiv$ (preparation noncontextuality $\wedge$ measurement noncontextuality $\wedge$ transformation noncontextuality). Correspondingly, we can define a notion of universal bounded ontological distinctness as,
\begin{Definition}
\textit{Bounded ontological distinctness ($BOD$):} The sets of epistemic states $\mathcal{P}_\Lambda$, response schemes $\mathcal{M}_\Lambda$ and transition schemes $\mathcal{T}_\Lambda$ underlying sets of $s$-distinguishable preparations $\mathcal{P}$, $m$-distinguishable measurements $\mathcal{M}$ and $t$-distinguishable transformations $\mathcal{T}$, must be $s$-distinct, $m$-distinct and $t$-distinct, respectively, i.e. 
\be \nonumber
BOD \equiv (BOD_P \wedge BOD_M \wedge BOD_T).
\ee
\end{Definition}
When formulated in this way, $BOD$ serves as a criterion for characterization of ontological models of a given operational theory. It is useful, at this point, to define a criterion for characterization of operational theories, namely, an operational theory or a fragment thereof is said to satisfy $BOD$, if there exists an ontological model which satisfies $BOD$ for all sets of prescribed operational preparations, measurements and transformations. Conversely, an operational theory or a fragment thereof is said to violate $BOD$ if there exists no ontological model which satisfies $BOD$ for all sets of prescribed operational preparations, measurements and transformations. Consequently, the implications \eqref{p-imp}, \eqref{m-imp} and \eqref{t-imp} yields the following combined implication, 
\be \label{imp}
BOD \implies \text{universal noncontextuality}.
\ee
While the implications \eqref{p-imp}, \eqref{m-imp}, \eqref{t-imp} and \eqref{imp} follow from the very definitions of generalized noncontextuality and $BOD$, in the following observation we address the validity of the reverse implications,
\begin{Observation} \label{impObservation}
The implications \eqref{imp} along with \eqref{p-imp} and \eqref{t-imp} are strictly unidirectional. 
\end{Observation}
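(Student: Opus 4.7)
The plan is to exhibit, for each of the three implications \eqref{p-imp}, \eqref{t-imp}, and \eqref{imp}, a fragment of quantum theory which admits a (universally) noncontextual ontological model yet violates the corresponding instance of bounded ontological distinctness. The natural candidates are precisely the fragments already analysed in Sections \ref{markSection1} and \ref{markme3} that were shown to violate $BOD_P$ and $BOD_T$, respectively.

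To establish strict unidirectionality of \eqref{p-imp}, I would take the fragment consisting only of the three pure qubit preparations $\{\rho_1,\rho_2,\rho_3\}$ defined in \eqref{rho123}, together with the four measurements needed to witness the violation of Proposition \ref{propositionBOD} (the optimal three-outcome discriminator and the three optimal pairwise discriminators depicted in FIG. \ref{PrepMeas1}). By the argument in Section \ref{markSection1}, no ontological model for this fragment can satisfy $BOD_P$: the preparations are $\frac{2}{3}$-distinguishable while their pairwise distinguishabilities average to $\frac{1}{2}(1+\frac{\sqrt{3}}{2})\approx 0.933$, forcing the underlying epistemic states to be at least $\frac{\sqrt{3}}{2}$-distinct in any ontological model. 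To exhibit a preparation noncontextual model for the very same fragment, I would invoke the Beltrametti-Bugajski ($\psi$-complete) ontological model. Since the fragment contains only three distinct pure-state preparations and no mixing procedures, no two preparation procedures within it are operationally equivalent, and preparation noncontextuality is vacuously satisfied; measurement noncontextuality is automatic in Beltrametti-Bugajski because response functions depend only on the POVM effect via the Born rule. Hence the fragment is universally noncontextual yet violates $BOD_P$.

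The analogous construction for \eqref{t-imp} uses the three qubit unitaries $\{U_1,U_2,U_3\}$ from Section \ref{markme3}, a generic pure input preparation, and the relevant discriminating measurements. The fragment violates Proposition \ref{propositionBODT} as shown there, while the Beltrametti-Bugajski model remains universally noncontextual because the three unitaries act as deterministic maps on the pure-state ontic space and every constituent procedure is operationally distinct. Strict unidirectionality of \eqref{imp} then follows by taking the disjoint union of the preparation fragment and the transformation fragment above: universal noncontextuality is preserved under disjoint union of fragments, while violation of $BOD$ is inherited from either component.

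The main obstacle, requiring the most care, is ruling out hidden operational equivalences in the chosen fragments. For instance, inadvertently including an additional mixing procedure such as $\frac{1}{3}(\rho_1+\rho_2+\rho_3) = \frac{\mathbb{I}}{2}$, which is also preparable as $\frac{1}{2}(\ketbra{0}{0}+\ketbra{1}{1})$, would cause the Beltrametti-Bugajski model to assign these two procedures distinct epistemic states and thereby fail preparation noncontextuality. The resolution is to restrict the fragment minimally, so that every pair of preparation, measurement, or transformation procedures within it corresponds to a distinct quantum object, at which point the noncontextuality constraints reduce to a vacuous check and any $\psi$-ontic model suffices to witness universal noncontextuality alongside the violation of $BOD$.
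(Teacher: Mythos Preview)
Your approach is correct and establishes the claim, but it takes a genuinely different route from the paper's proof. You keep the fragments minimal---only the three extremal preparations (resp.\ unitaries) and the handful of measurements needed to witness the violation---so that no two procedures in the fragment are operationally equivalent and noncontextuality becomes vacuous. The paper instead builds a single integrated fragment containing the three pure preparations \emph{together with all their convex mixtures}, the three cyclic transformations together with their mixtures, and \emph{all} quantum measurements, and then does the real work of showing that even in this convexly closed fragment no non-trivial operational equivalences arise. The key technical step there is the ``three extremal points'' argument: any mixed preparation in the simplex spanned by three distinct pure states has a unique convex decomposition, which is verified explicitly via the conditions \eqref{ob4c1}--\eqref{ob4c2}; the analogous uniqueness for transformations is then inherited from it. This single fragment simultaneously violates $BOD_P$ and $BOD_T$, so no disjoint-union step is needed for \eqref{imp}.

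What each buys: your version is shorter and avoids the decomposition-uniqueness calculation entirely, but it is vulnerable to the objection that a ``fragment of an operational theory'' ought to be closed under the mixing operations the theory provides---in which case your minimal fragment is not admissible and you are forced back to the paper's argument. The paper's version is robust to that objection and also makes a stronger point: the strict separation between $BOD$ and universal noncontextuality persists even when the fragment is convexly closed and tomographically rich on the measurement side. Your closing paragraph correctly anticipates exactly this obstacle; the paper's proof is the worked-out answer to it.
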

\begin{proof}
In order to proof this thesis, we demonstrate (by explicit construction) that there exists fragments of quantum theory which admit a universally noncontextual ontological model but violate $BOD$. Consider a fragment of operational quantum theory entailing:
\begin{enumerate}
    \item any three distinct pure preparations $\{ P_i\}^3_{i=1}$ corresponding to pure quantum states $\{ \ket{\psi_i}\}^3_{i=1}$ along with their convex mixtures,
    \item three extremal transformations $\{T_j\}^3_{j=1}$ such that $T_{j}(P_{i})=P_{(i+j)\mod{3}}$ \footnote{here $(.)\mod{3}$ is defined such that $(3)\mod{3}=3$.}, along with their convex mixtures and,
    \item all possible quantum measurements $M\equiv \{M_k\}$.
\end{enumerate}
For this fragment we construct the following $\psi$-complete ontological model entailing,
\begin{enumerate}
    \item three ontic states $\{ \lambda_i\}^3_{i=1}$ underlying the preparations $\{ P_i\}^3_{i=1}$ corresponding to pure quantum states $\{ \ket{\psi_i}\}^3_{i=1}$,  
    \item three deterministic transition schemes $\{\{\gamma(\lambda'|\lambda,T_j)\}\}^3_{j=1}$ underlying the extremal transformations $\{T_j\}^3_{j=1}$
    such that $\forall \ i,j\in \{1,2,3\}: \ \gamma(\lambda_{(i+j)\mod{3}}|\lambda_i,T_j)=1$ and,
    \item response schemes $\{\xi(k|\lambda,M)\}$ underlying the quantum measurements $M\equiv \{M_k\}$ such that $\forall \ k,M,i\in\{1,2,3\}:  \xi(k|\lambda_i,M)=\Tr(\ketbra{\psi_i}{\psi_i}M_k)$.
\end{enumerate}
It is straightforward to verify that this model can reproduce all predictions of the fragment of operational quantum theory under consideration. Moreover, this ontological model could be considered a fragment of the Beltrametti-Bugajski ontological model of quantum theory. Consequently, following the line of reasoning in \cite{spekkens2005contextuality}, this ontological model is measurement noncontextual. \\
Moving on, to demonstrate that this ontological model is preparation noncontextual we employ the simple fact that the every point in a convex polytope has a unique decomposition in-terms of the extremal points when there are no more than three extremal points. Consequently, the density matrices corresponding to any two distinct convex mixtures of three pure preparations will be necessarily distinct. Finally, for any two distinct density matrices, there always exists quantum measurement effects with distinct observed statistics. Consequently, any $\psi$-complete ontological model for any three pure preparations is preparation noncontextual. \\
More formally, for the prepare and measure fragment of quantum theory described above, consider two operationally equivalent mixed preparations $P_a\equiv P_b$ such that $P_a\equiv \sum^3_{i=1} a_iP_{i}$ and $P_b\equiv \sum^3_{i=1} b_iP_{i}$ where $\sum^3_{i=1} a_i=\sum^3_{i=1} b_i=1$. Since these preparations are assumed to be operationally equivalent, we have $\forall \ k,M: \ p(k|P_a,M)=p(k|P_b,M)$. In particular, for the measurement effect $M_1\equiv \ketbra{\psi_1}{\psi_1}$ the operational equivalence $P_a\equiv P_b$ implies,
\beq \label{ob4c0}
a_1 + a_2\alpha_{1,2} + a_3\alpha_{1,3} = b_1 + b_2\alpha_{1,2} + b_3\alpha_{1,3},
\eeq 
where $\alpha_{i,j}=|\braket{\psi_i|\psi_j}|^2$. Using the fact that $\sum_ia_i=\sum_ib_i=1$ \eqref{ob4c0} yields,
\be \label{ob4c1}
(a_1-b_1)(1-\alpha_{1,3})= (b_2-a_2)(\alpha_{1,2}-\alpha_{1,3}) .
\ee
Similarly for the measurement effects $M_2\equiv \ketbra{\psi_2}{\psi_2}$ and $M_3\equiv \ketbra{\psi_3}{\psi_3}$ the operational equivalence $P_a\equiv P_b$ yields,
\beq \label{ob4c2} \nonumber
&(a_1-b_1)(\alpha_{1,2}-\alpha_{2,3}) & = (b_2-a_2)(1-\alpha_{2,3}), \\ 
&(a_1-b_1)(1-\alpha_{1,3})& = (b_2-a_2)(1-\alpha_{2,3} ).
\eeq
Since, the three pure preparations under consideration correspond to distinct pure states $\forall \ i\in\{1,2,3\}, j \in \{1,2,3\} \setminus \{i\}: \ \alpha_{i,j}<1$, the conditions \eqref{ob4c1} and \eqref{ob4c2} can only be simultaneously satisfied if $\forall \ i\in\{1,2,3\}: \ a_i=b_i$.
This implies the uniqueness of the decomposition of mixed preparations in terms of the three pure preparations. As there is no scope for any non-trivial equivalence condition, the $\psi$-complete ontological model described above is preparation noncontextual. Moreover, as the consequences of bounded ontological distinctness, namely, Proposition \ref{propositionBOD} and Proposition \ref{propositionBOD2} are violated employing just three pure quantum preparations (see Subsection \ref{markSection1}, FIG. \ref{PrepMeas1}, FIG. \ref{qubitQutritRandom}, and end of Subsection \ref{markmeup2}, FIG. \ref{PrepMeas4}, respectively), this completes the proof for strict unidirectionality of the implication \eqref{p-imp}. \\
Now, consider a pair of operationally equivalent mixed transformations $T_a\equiv T_b$ such that $T_a\equiv \sum^3_{j=1} a_jT_{j}$ and $T_b\equiv \sum^3_{j=1} b_iT_{i}$ where $\sum^3_{i=1} a_i=\sum^3_{i=1} b_i=1$. The effect of these transformations on a pure preparation $P_i$, i.e. $T_a(P_i)=\sum^3_{j=1} a_jP_{(i+j)\mod{3}}$ and $T_b(P_i)=\sum^3_{j=1} b_jP_{(i+j)\mod{3}}$. As the transformations are equivalent they yield equivalent post transformation preparations $\sum^3_{j=1}a_jP_{(i+j)\mod{3}}\equiv\sum^3_{j=1}b_jP_{(i+j)\mod{3}}$, which because of the uniqueness of decomposition proved above, can only hold when $\forall \ j\in\{1,2,3\}: \ a_j=b_j$. This implies uniqueness of the decomposition of mixed transformations in terms of the three extremal transformations. As there is no scope for any non-trivial equivalence condition, the ontological model described above is transformation noncontextual. Moreover, as Proposition \ref{propositionBODT} is violated employing just three unitary quantum transformations (see Subsection \ref{markme3}) which satisfy the aforementioned operational requirement, this completes the proof for strict unidirectionality of the implication \eqref{t-imp}. Summarizing, instances of the operational
fragment of quantum theory described above violate $BOD_P$ and $BOD_T$, but the provided $\psi$-complete ontological model is preparation noncontextual, transformation noncontextual, and measurement noncontextual, thereby completing the proof for strict unidirectionality of the combined implication \eqref{imp}. \\
\end{proof}
As neither measurement noncontextuality nor bounded ontological distinctness for measurements can be violated by quantum theory on their own, the unidirectionaility of the implication \eqref{m-imp} is uncertain.

\textit{Kochen-Specker contextuality and Bell nonlocality:}
Notice that the aforementioned implications follow from the very definitions of the ontological principles under consideration, without invoking any other formalism dependent assumptions. With the aid of certain quantum formalism dependent assumptions $QT$, including self duality of states and measurements effects, along with the corresponding ontological constraints, we have the implication \cite{leifer2013maximally,leifer2014quantum,leifervideo}: $\text{preparation noncontextuality} \overset{QT}{\implies} \text{Kochen-Specker noncontextuality}$. This implication is valid for all ontological models that reproduce the predictions of quantum theory.

Moreover, Bell's local-causality is a special case of Kochen-Specker noncontextuality. Specifically, spatially separated measurements employed in set-ups for Bell's local-causality imply commutation relations employed in the definition of contexts in Kochen-Specker contextuality. In general, for non-signaling operational theories we have the implication:
$\text{Kochen-Specker noncontextuality} \overset{NS}{\implies} \text{Bell's local-causality}$. Consequently, this leads to the following combined implication,

\beq \nonumber
& BOD_P & \implies \text{preparation noncontextuality} \\ \nonumber
& & \overset{QT}{\implies} \text{Kochen-Specker noncontextuality} \\
& &\overset{NS}{\implies} \text{Bell's local-causality}.
\eeq
Therefore, any quantum violation of Bell's local-causality, Kochen-Specker noncontextuality or preparation noncontextuality implies violation of $BOD_P$ or excess ontological distinctness of quantum preparations.

\section{Concluding remarks}
The ontological framework features certain philosophically motivated principles. These ontological principles provide certain exclusively operational phenomena an ontological basis in the form of ontological constraints. The subsequent quantum violation of the consequences of such principles discards the ontological models that adhere to these constraints as plausible ontological explanations of the operational theory under consideration.

\textit{From complete indistinguishability to distinguishability:} Remarkably, all well-known ontological principles employ operational conditions pertaining specifically to the indistinguishability of associated physical entities. For instance, Bell's local causality provides an ontological basis for operational non-signaling correlations. The non-signaling condition, an operational pre-requisite of Bell's local causality, is an indistinguishability condition, specifically, the condition requires each spatially separated party to not be able to distinguish between the different measurements employed by the other parties. Similarly, Kochen-Specker noncontextuality is accompanied by an operational non-disturbance condition for particular sets of measurements, which is again an operational indistinguishability condition for measurements. Specifically, for such a set of measurements, the condition requires one to not be able to distinguish between the rest of the measurements on the basis of the outcome statistics of any measurement. Moreover, Spekkens' noncontextuality relies directly on an operational equivalence of preparations, measurements, and transformations which can clearly be interpreted as an operational indistinguishability condition of respective physical entities. Yet another ontological principle of non-retrocausal time-symmetric ontology \cite{leifer2017time} requires each temporally separated party to not be able to distinguish between the different (measurement or preparation) settings of the other parties. The subject of this work, the principle of bounded ontological distinctness provides an ontological basis to the maximal distinguishability arbitrary sets of preparations, measurements, and transformations, in the distinctness of their ontological counterparts. 

\textit{Measure dependence:} The principle of bounded ontological distinctness inherits the ambiguity with regards to measures of operational distinguishability and ontological distinctness from its philosophical premise of the natural generalization of the Leibnitz principle. The ontological notion itself does not impose a particular measure, and any operational measure of distinguishability can be employed as long as its ontological counterpart is properly defined as a measure of distinctness. The particular measure we have employed is the probability of minimum error discrimination, which is a natural choice for a measure of distinguishability. Furthermore, it is based on the maximum success probability of an operational task, namely, minimum error discrimination, and consequently, it is independent of any particular theory. When one employs operational prescriptions pertaining to a given operational theory as resources in the minimum error discrimination task, the maximum success probability corresponds to an operational measure of distinguishability. On the other hand, if one employs the ontological counterparts as resources along with complete access to (and fine-grained control over) the ontic-state of the system, the maximum success probability of minimum error discrimination corresponds to a measure of ontological distinctness.

\textit{Violation without auxiliary assumptions:} All of the aforementioned ontological principles including bounded ontological distinctness obtain their primary ontological constraints by requiring ontological models to adhere to corresponding operational conditions on the ontological level i.e. conditioned on the ontic state of the physical system. However, the primary ontological constraints of the well-known ontological principles are (on their own) not enough to warrant consequences directly in contradiction with the predictions of quantum theory. Therefore, they invoke certain auxiliary constraints to enable a quantum violation. For instance, the auxiliary assumption of outcome determinism is quintessential to the demonstration of Kochen-Specker noncontextuality, while Spekkens' preparation and transformation noncontextuality require the auxiliary ontological property of convexity of epistemic states and transition schemes respectively to enable their quantum violations. Finally, the quantum demonstration of Spekkens' measurement contextuality requires either outcome determinism or Spekkens' preparation noncontextuality as an additional assumption. The inclusion of these additional assumptions leads to dilution of the implication of corresponding quantum violations as they could be attributed to the violation of the auxiliary assumptions leaving the primary assumptions intact. Note that, just like measurement noncontextuality, bounded ontological distinctness for measurements requires either outcome determinism or bounded ontological distinctness for preparations to enable a quantum violation. However, unlike these ontological principles, the primary ontological assumption of bounded ontological distinctness for preparations and transformations is enough to warrant a quantum violation on its own. This, in turn, leads to an unambiguous ontological implication, that quantum preparations, and transformations are more ontologically distinct than they are operationally distinguishable. 

\textit{Non-zero measure operational perquisite:} 
One of the key issues with the experimental tests of the other ontological principles \cite{hensen2015loophole,kirchmair2009state,mazurek2016experimental},
lies in the associated operational conditions that the corresponding physical entities have to adhere to. Specifically, the well-known ontological notions of classicality, for their refutation, require corresponding operational conditions to hold. For instance, the refutation of Bell's local causality requires the experimental data to adhere to non-signaling, the refutation Kochen-Specker noncontextuality requires the experimental data to adhere to non-disturbance and for the refutation of Spekkens' generalized noncontextuality, an equivalence condition of mixtures of preparations (termed oblivious condition) must hold. As detailed above, these conditions require the zero measure indistinguishability of the associated physical entities, and consequently their experimental tests are susceptible to the persistent finite precision loophole.  However, the operational condition associated with bounded ontological distinctness, namely, distinguishability of physical entities has a considerably larger spectrum, and serves as an alternative approach to address the finite precision loophole in the experimental tests of other ontological principles.

\textit{The implicate quantumness:} Our perception or understanding of certain phenomena might differ, or might be characterized by, varying principal factors, depending on contexts such as scales. The implicate or ``enfolded'' is the deeper, more fundamental order associated with these phenomena while the explicate or ``unfolded'' include the abstractions or observations in light of specific contexts. 
\epigraph{``For instance, a circular table looks like an ellipse from various directions. But we know that those are the appearances of a single circular form. So we represent the table as a circle. We say, `that what's it is, a solid circle'"}{-- David Bohm \cite{bohm2004thought}.}
We show that bounded ontological distinctness, directly and indirectly, implies the other ontological principles, so that the \textit{quantum} violation of the latter implies the violation of the former \eqref{implicateQuantumness}. In particular, we show that maximal $\psi$-epistemicity \cite{PhysRevLett.112.250403} is a restricted case of bounded ontological distinctness of preparations applied to a pair of pure quantum preparations. Similarly, preparation, measurement and transformation noncontextuality emerges from bounded ontological distinctness of the respective physical entities. Consequently, universal noncontextuality is a direct implication of bounded ontological distinctness. Moreover, via Observation \ref{impObservation} we demonstrate that the set of generalized noncontextual ontological models strictly contains the set of ontological models that satisfy bounded ontological models, deeming these implications to be strictly unidirectional.  
While these implications follow by the very definitions of these ontological principles, we show that under certain quantum theory dependent ontological assumptions, bounded ontological distinctness for preparations implies Kochen-Specker noncontextuality and Bell's local causality. This provides a crucial unifying outlook, namely, the quantum violations of the other ontological principles are, in essence, a demonstration of the implicate quantum excess ontological distinctness, deeming it to be a fundamental feature of quantum ontology. Here, it is important to note that, without the quantum formalism dependent assumptions such as self-duality of states and measurement effects, the implication from bounded ontological disticntess for preparations via preparation noncontextuality to Kochen-Specker noncontextuality (and Bell's local causality) falls apart. This is explicated by the fact that the fragments considered in Observation \ref{KSnogoPair} and \ref{impObservation} are Kochen-Specker contextual for quantum systems of dimension three or more, however, they satisfy bounded ontological distinctness for preparations, and preparation noncontextuality, respectively. 

 \textit{Requirements for quantum violation:} We have demonstrated the violation of bounded ontological distinctness for preparations, and transformations, while employing three instances of pure two-dimensional quantum preparations and unitary transformations, respectively, along with three binary outcome measurements for Propositions \ref{propositionBOD} and \ref{propositionBODT}, and two binary outcome measurements for Proposition \ref{propositionBOD2} \footnote{Note that, this does not include the POVM that verifies the operational perquisite, and maximizes the distinguishability of the preparations under consideration, as it can be replaced by tomography or by suitable semi-device independent dimensional assumption.}. Moreover, Observation \ref{KSnogoPair} brings forth the fact any operational fragment of quantum theory entailing a pair of pure quantum preparations, and all possible measurements satisfies bounded ontological distinctness.
 Moreover, while restricting the operational distinguishability of a pair of mixed quantum preparations we demonstrate the violation of bounded ontological distinctness and convexity of epistemic states employing four preparations and two measurements (Proposition \ref{propostionRhoEpistemic}). Finally, we observed that bounded ontological distinctness for measurements cannot be violated on its own because of the existence of Beltrametti-Bugajski $\psi$-complete ontological model. However, invoking the additional assumption of outcome determinism, we show that all pairs of two-dimensional quantum projective measurements which are neither completely indistinguishable nor perfectly distinguishable violate bounded distinctness for measurements. 
 

\textit{Distinguishability and distinctness as intrinsic properties:} This work is based on an overarching perception, namely, the distinguishability, specifically, the maximal operational probability of distinguishing a set of preparations, measurements or transformations forms an intrinsic property of this set. This is so because the maximization involved relieves the distinguishing probability of a set of a particular type of physical entity of its dependence on other types of physical entities. Similarly, ontological distinctness or the maximal probability of distinguishing sets of epistemic states, response schemes or transition schemes form an intrinsic property of the set of ontological entities. in light of this insight, the consequences of bounded ontological distinctness, specifically the inequalities featured in the Propositions \ref{propositionBOD}, \ref{propostionRhoEpistemic}, \ref{propositionBODT} and \ref{propositionBOD2} present a novel perspective. Specifically, these propositions relate two distinct operational properties pertaining to the distinguishability of respective physical entities. For instance, Proposition \ref{propositionBOD} bounds the maximal average pair-wise distinguishability, based on maximal distinguishability for a set of three operational preparations. The consequent quantum violation of these propositions implies that the conflict with bounded ontological distinctness lies in the relation between these intrinsic properties of the set of quantum physical entities under consideration. 

\textit{Quantifying the extent of excess ontological distinctness:} The particular form of the Propositions \ref{propositionBOD}, \ref{propostionRhoEpistemic}, \ref{propositionBODT}, \ref{propositionBOD2}, and of the inequalities therein, allow us to infer a lower bound on the extent of excess ontological distinctness.  
Remarkably, the Kochen-Specker's maximally $\psi$-epistemic model saturates the lower bounds on the extent of quantum excess ontological distinctness (Observations \ref{KSpropositionBOD}, \ref{KSrhoepistemic}). This in-turn highlights the significance of maximally $\psi$-epistemic models and demonstrates that the inequalities under consideration are tight. Furthermore, more we employ the existence of this model to substantiate the fact any operational fragment of quantum theory entailing a pair of pure quantum preparations, and all possible measurements satisfies bounded ontological distinctness (Observation \ref{KSnogoPair}). Moreover, as inequality featured in Proposition \ref{propostionRhoEpistemic} is a generalization of the preparation noncontextual inequality based on parity oblivious multiplexing \cite{spekkens2009preparation}, and the states employed in Observation \ref{KSrhoepistemic} adhere to the operational equivalence condition, the Observation \ref{KSrhoepistemic} also yields a lower bound on the extent of preparation noncontextual, namely, the indistinguishable pair of mixed states employed therein must have $\frac{1}{\sqrt{2}}$-distinct epistemic states underlying them.  

\textit{Powering advantage in communication tasks:} The success in communication tasks is closely tied to the ability of the receiver to distinguish the sender's preparations. Specifically, the sender encodes her input data onto operational preparations and the success depends upon how the well the receiver can distinguish these preparations to figure out the sender's input \footnote{This is evident in the proof methodology of Proposition \ref{propositionBOD2} and the casting of Proposition \ref{propositionBOD} and \ref{propostionRhoEpistemic} as communication tasks}.  The excess ontological distinctness of the quantum preparations fuels the quantum advantage in communication tasks presented in this work wherein we constrain the distinguishability of the sender's operational preparations \cite{PhysRevA.100.022108}. Specifically, the provided quantum communication protocols siphon the excess ontological distinctness of the associated epistemic states to an advantage over classical unbounded communication protocols and theories that have ontological models that adhere to bounded ontological distinctness. Moreover, as the advantage in oblivious communication tasks witnesses preparation contextuality, the oblivious communication tasks with bounded leakage which are powered by excess ontological distinctness of the sender's preparations demonstrate that the non-zero measure operational condition accompanying bounded ontological distinctness may be ported to other ontological principles.  

\textit{Avenues for future investigation:} The underlying assumption in our implications is the existence of an ontology adhering to the standard ontological framework with epistemic states, response schemes and transition schemes modeling operational preparations, measurements, and transformations. Then one of the implications of the violation of the ontological principles could be non-existence on an ontology, hinting towards a purely perspectival standpoint \cite{leifer2014quantum,frauchiger2018quantum,PhysRevA.94.052127}. More interestingly, to preserve bounded ontological distinctness of quantum entities, the prescriptions of the standard ontological framework might be altered, giving way to an exotic quantum ontology. 
It might be worthwhile to investigate the features of operational theories that enable excess ontological distinctness. Even though it is a fact that the quantum violation of other ontological principles implies excess ontological distinctness, it will be interesting to investigate how certain operational features propagate excess ontological distinctness to the violation of the other well-known ontological principles in the respective scenarios. 
Moreover, in this work we recovered Kochen-Specker noncontextuality from bounded ontological distinctness for preparations via preparation noncontextuality, which necessities several quantum formalism dependent assumptions including self-duality of states and measurement effects, alternatively, it might be worthwhile to try and recover Kochen-Specker noncontextuality from bounded ontological distinctness of measurements along with outcome determinism which is intrinsic to the former.  As the recently introduced framework of ``operational fine turnings" \cite{catani2020mathematical}  succinctly incorporates the well-known ontological notions of classicality employing the umbrella of a Leibniz-type principle termed ``no fine tuning", our results might provide a way to generalize the framework in so far as to include ``no fine tuning" of non-zero measure operational conditions. It would be interesting to find out whether the non-zero measure operational conditions accompanying bounded ontological distinctness could lead to noise robust experimental demonstrations. 

On the technical front, in light of the measure dependent formulation of bounded ontological distinctness, it will be instructive to study the implications of formulations based on other measures of distinguishability such as the maximum success probability of unambiguous discrimination and compare the consequences. In particular, as bounded ontological distinctness formulated with the maximum probability of minimum error discrimination is a generalization of maximal $\psi$-epistemicity based on the symmetric overlap of epistemic states \cite{PhysRevLett.112.250403}, it would be interesting to see how bounded ontological distinctness relates to the other definition of maximal $\psi$-epistemicity based on the asymmetric overlap of epistemic states \cite{leifer2013maximally}.
It might be interesting to study the implications of the generalizations of Propositions \ref{propositionBOD},\ref{propositionBODT} and \ref{propositionBOD2} to arbitrary number of preparations and transformation, respectively. It will be interesting to find out if bounded ontological distinctness of a pair of mixed preparation along with convexity of epistemic states can be violated whilst employing just three instances of two-dimensional quantum preparations. As neither measurement noncontextuality nor bounded ontological distinctness for measurements can be violated on their own, it would be interesting to demonstrate the unidirectionality of the implication \eqref{m-imp} employing outcome determinism.
While we have demonstrated quantum advantage in communication tasks with constraints on the distinguishability of the sender's preparations, we claim that quantum advantage may be retrieved in multiparty communication tasks while restricting the distinguishability of measurements or intermediate transformations. This should be formalized into a semi-device independent framework, equipped with key distribution and randomness certification protocols, based on theory independent departure from the predictions of classical theories. Finally, physical principles such as information causality \cite{pawlowski2009information} and macroscopic reality attempt to explain why quantum resources do not violate Bell inequalities to their algebraic maximum and enable insights into peculiarities of the quantum formalism and hierarchies of semi-definite programs aid in obtaining upper-bounds on quantum violations of these Bell inequalities \cite{navascues2008convergent}. It would be worthwhile to conjure such physical principles and semi-definite hierarchies that restrict the violation of bounded ontological distinctness inequalities to the quantum maximum, analytically and numerically respectively. Finally, it will be interesting to investigate the role of excess ontological distinctness in known instances of advantageous quantum information processing and computation protocols, and to come with new classes of information processing and computation tasks powered by excess ontological distinctness.


\textit{Conceptual insight:} For a set of physical entities, their distinguishability quantifies how well we can tell them apart employing a given physical theory. On the other hand, their distinctness quantifies how distinct these entities actually are (in reality). In classical theories, and in general classical thought, when it comes to sets of physical entities, distinguishability is synonymous to distinctness, i.e., \textit{what you see is what you get}. However, in the work, we demonstrate that this is not the case for quantum physical entities. Quantum theory posits sets of physical entities that must be more distinct than they are distinguishable. 

\section{Acknowledgements}
We thank M. Paw\l owski, J. H. Selby, M. Oszmaniec, M. Farkas, N. Miklin, M. S. Leifer and M. Banik for insightful discussions. We are grateful to R. W. Spekkens and M. S. Leifer for their lectures on quantum foundations at \textit{Perimeter Institute Recorded Seminar Archive}. We are grateful to the anonymous reviewers for their detailed constructive criticisms. We are grateful to the folks over at math.stackexchange for their prompt assistance \cite{3247130}. 
This research was supported by FNP grants First TEAM/2016-1/5, First TEAM/2017-4/31, NCN grant 2016/23/N/ST2/02817, and NCN grant SHENG 2018/30/Q/ST2/00625. The numerical optimization was carried out using \href{https://ncpol2sdpa.readthedocs.io/en/stable/index.html}{Ncpol2sdpa} \cite[]{wittek2015algorithm}, \href{https://yalmip.github.io/}{YALMIP} \cite[]{Lofberg2004}, \href{https://www.mosek.com/documentation/}{MOSEK} \cite[]{mosek} and \href{https://cvxopt.org/}{CVXOPT} \cite[]{andersen2013cvxopt}. AC acknowledges Tool for \href{https://www.youtube.com/watch?v=-_nQhGR0K8M}{this}.

\bibliographystyle{alphaarxiv}
\bibliography{ref}

\clearpage
\onecolumngrid
\section{*Auxiliary plots}

\setcounter{totalnumber}{4}

This section presents three auxiliary plots. The first two plots contain results of our numerical simulations showcasing extensive violation of \eqref{BODineq} and \eqref{rhoEpistemic}. These plots highlight the experimental robustness of distinguishability as an operational pre-requisite, as in these simulations the operational condition and the value of the inequality are decided after randomly sampling arbitrary triplets and quadruplets of two-dimensional quantum preparations respectively. The third plot serves to aid visualization of quantum violation \eqref{refMe} of Proposition \ref{rhoEpistemic}.
\begin{figure}[ht]     
    \centering
    \includegraphics[scale=0.45]{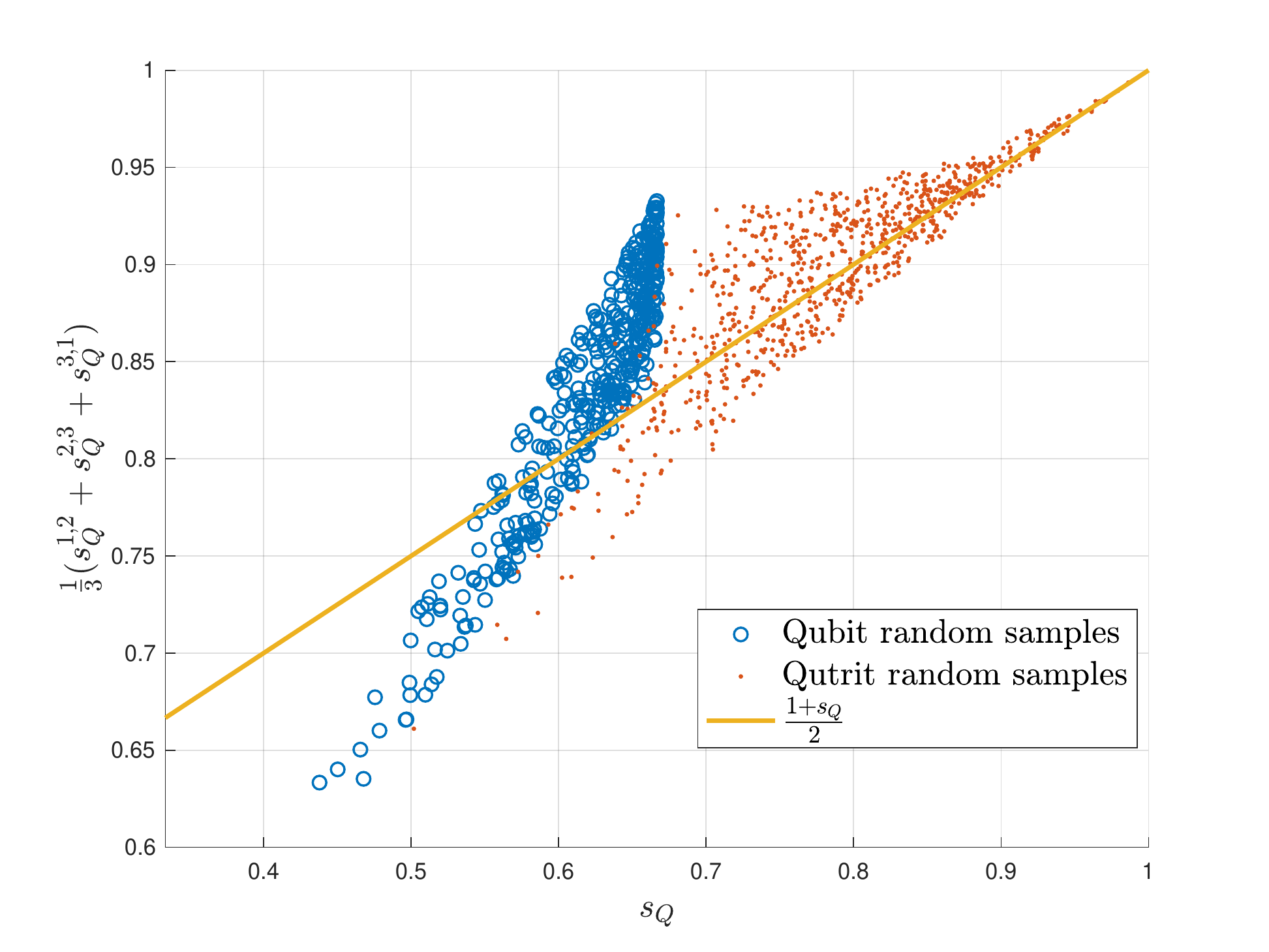}
    \caption{\label{qubitQutritRandom} A plot of maximal average pairwise distinguishability $\frac{1}{3}(s^{1,2}_{Q}+s^{2,3}_{Q}+s^{3,1}_{Q})$ vs. maximal distinguishability $s$ for three randomly picked pure qubits (blue circles) and three pure qutrits (orange dots), based on a uniform (according to Haar measure) distribution on the unit hypersphere. The yellow line represents the maximal average pairwise distinguishability \eqref{BODineq} when the underlying epistemic states adhere to bounded ontological distinctness. Over $72\%$ of randomly picked triplets of pure qubits and over $58\%$ of randomly picked triplets of pure qutrits exhibit excess ontological distinctness. }
\end{figure}
\begin{figure}[ht]
    \centering
    \includegraphics[scale=0.45]{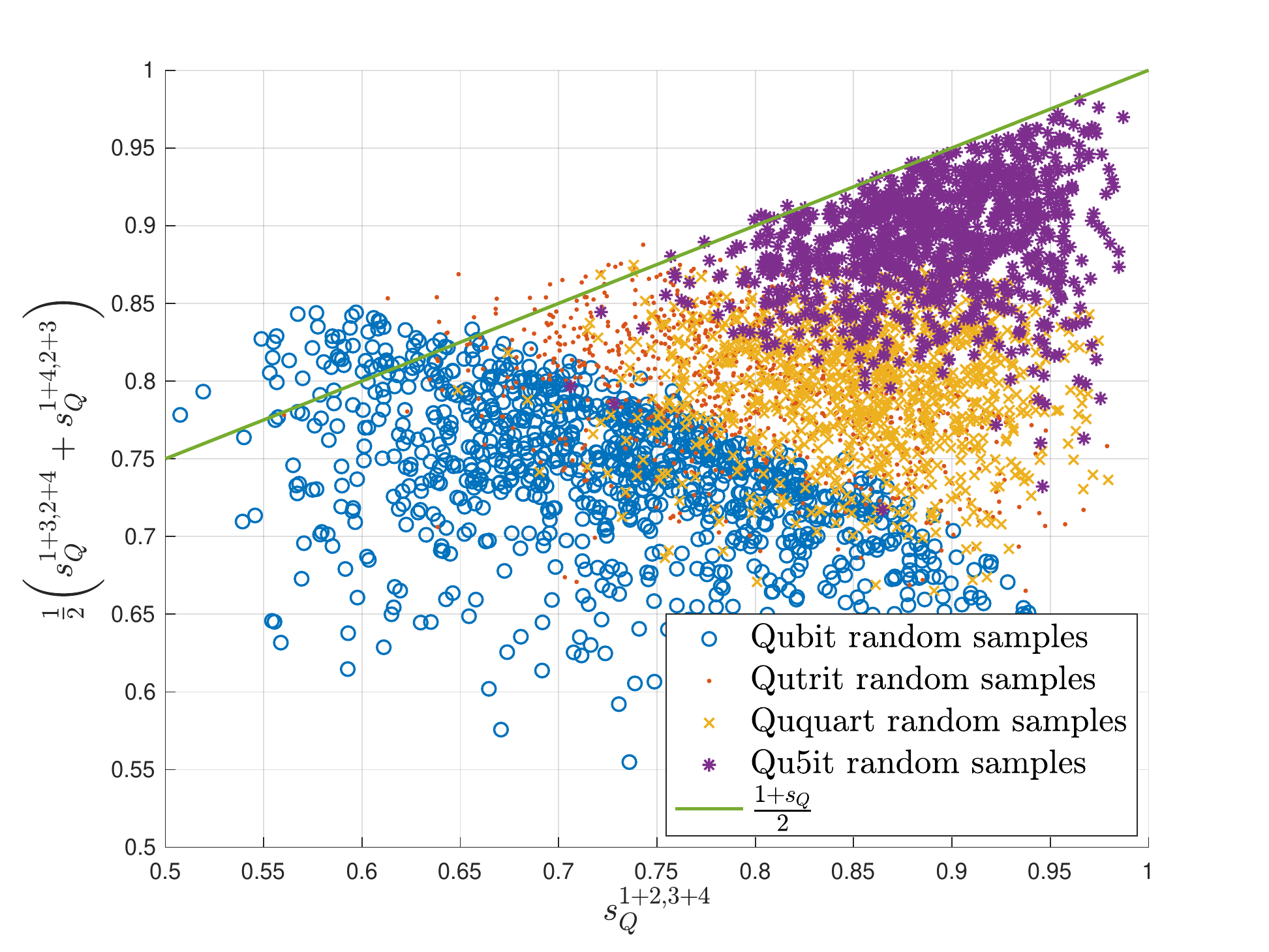}
    \caption{\label{qu2345itRandom} A plot of maximal average distinguishability of pairs of mixtures $\{\rho_{1+3},\rho_{2+4}\}$ and $\{\rho_{1+4},\rho_{2+3}\}$,  $\frac{1}{2}(s^{1+3,2+4}_{Q}+s^{1+4,2+3}_{Q})$ vs. maximal distinguishability of $\{\rho_{1+2},\rho_{3+4}\}$, $s^{1+2,3+4}_Q$ for four randomly picked pure qubits (blue circles), pure qutrits (orange dots), pure ququarts (yellow crosses) and pure qu$5$its (purple asterisks), based on a uniform (according to Haar measure) distribution on the unit hypersphere. The green line represents the maximal average distinguishability of $\{\rho_{1+3}, \rho_{2+4}\}$ and $\{\rho_{1+4}, \rho_{2+3}\}$ \eqref{rhoEpistemic} when the underlying epistemic states adhere to bounded ontological distinctness and convexity of epistemic states.}
\end{figure}
\begin{figure}[ht]
    \centering
    \includegraphics[scale=0.45]{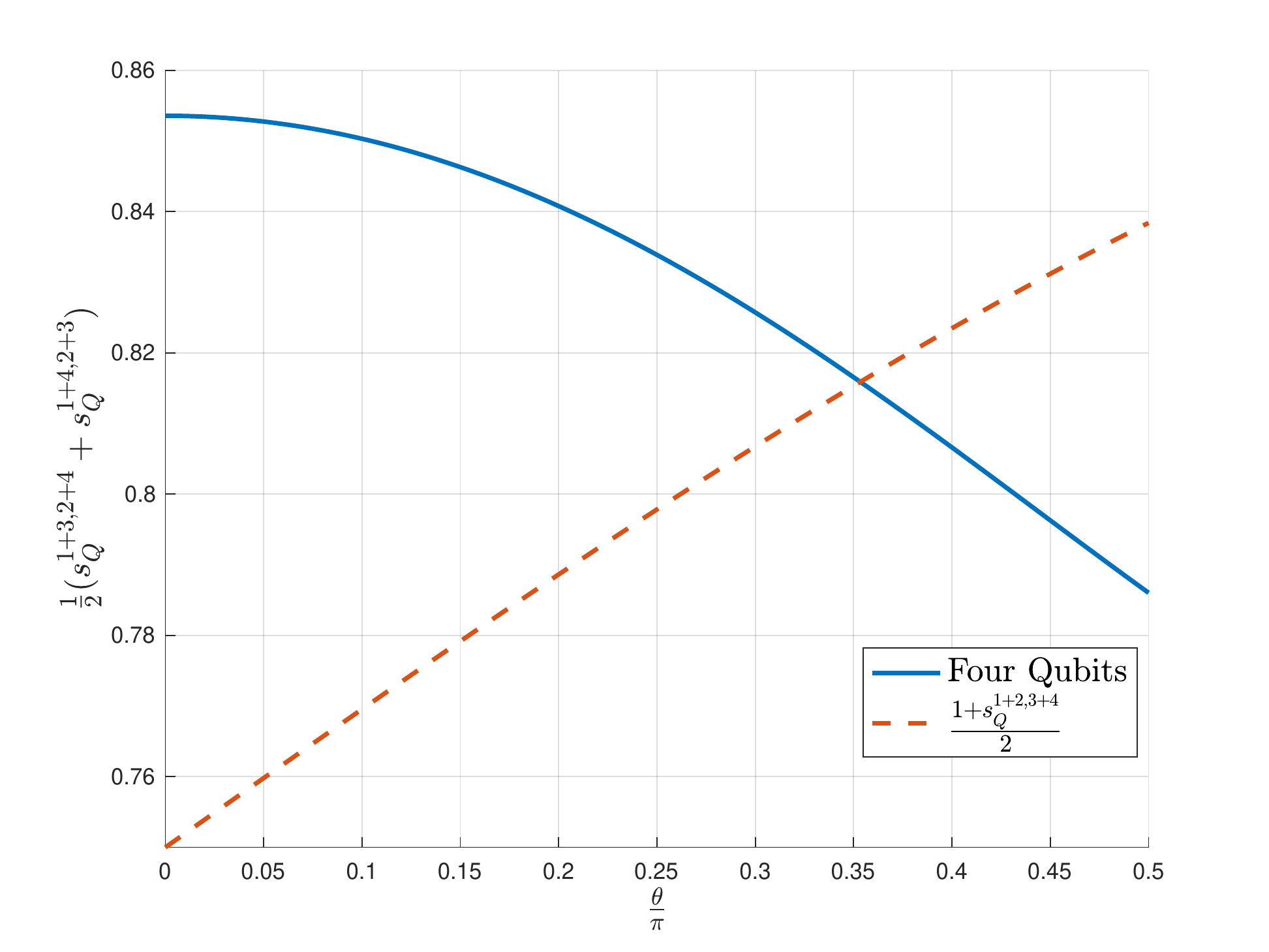}
    \caption{ \label{fourQubit}
    A plot of maximal average distinguishability of the pairs $\{\rho_{1+3},\rho_{2+4}\}$ and $\{\rho_{1+4},\rho_{2+3}\}$, $\frac{1}{2}(s^{1+3,2+4}_Q +s^{1+3,2+4}_Q)$ \eqref{refMe} (blue line) along with the bound $\frac{1+s^{1+2,3+4}_Q}{2}$ (orange dashed line) from Proposition \ref{propostionRhoEpistemic} against $\frac{\theta}{\pi}$. The qubits violate the inequality \eqref{rhoEpistemic} for the range $0\leq\frac{\theta}{\pi}\lessapprox\frac{1}{2\sqrt{2}}$.}
\end{figure}
\end{document}